\renewcommand{\S}{\mathbb{S}}
\DeclareMathOperator{\mult}{mult}
\DeclareMathOperator{\Sym}{Sym}
\DeclareMathOperator{\Cas}{\mathcal{Q}}
\DeclareMathOperator{\End}{End}
\DeclareMathOperator{\su}{\mathfrak{su}}
\let\sl\relax 
\DeclareMathOperator{\sl}{\mathfrak{sl}}
\DeclareMathOperator{\Stab}{Stab}
\DeclareMathOperator{\PStab}{Stab_{\mathbb{P}}}
\newcommand{\irrep}{\mathcal{H}}
\newcommand{\tensor}{\otimes}
\newcommand{\R}{\mathbb{R}}
\newcommand{\N}{\mathbb{N}}
\newcommand{\C}{\mathbb{C}}
\newcommand{\CP}{\mathbb{CP}}
\DeclarePairedDelimiter\angledbrackets\langle\rangle
\DeclarePairedDelimiter\curlybrackets\lbrace\rbrace
\DeclarePairedDelimiter\verticalline\lvert\rvert
\DeclarePairedDelimiter\doubleverticalline\lVert\rVert
\DeclarePairedDelimiter\floor\lfloor\rfloor
\DeclarePairedDelimiter\bra{\langle}{\rvert} 
\DeclarePairedDelimiter\ket{\rvert}{\rangle}
\newcommand{\abs}[1]{\verticalline*{#1}}
\newcommand{\set}[1]{\curlybrackets*{#1}}
\newcommand{\iset}[1]{\curlybrackets{#1}} 
\newcommand{\norm}[1]{\doubleverticalline*{\ifblank{#1}{{}\cdot{}}{#1}}}
\newcommand{\inorm}[1]{\doubleverticalline{\ifblank{#1}{{}\cdot{}}{#1}}}
\newcommand\braket[2]{\angledbrackets*{\ifblank{#1}{{}\cdot{}}{#1}\vert\ifblank{#2}{{}\cdot{}}{#2}}}
\DeclareMathOperator{\dx}{\,d\mathnormal{x}}
\DeclareMathOperator{\dw}{\,d\mathnormal{w}}
\DeclareMathOperator{\dz}{\,d\mathnormal{z}}
\DeclareMathOperator{\dt}{\,d\mathnormal{t}}
\DeclareMathOperator{\SU}{SU}
\DeclareMathOperator{\im}{im}
\DeclareMathOperator{\Tr}{Tr}
\theoremstyle{plain}
\newtheorem{definition}{Definition}[chapter] 
\newtheorem{theorem}[definition]{Theorem} 
\newtheorem{lemma}[definition]{Lemma}
\newtheorem{proposition}[definition]{Proposition}
\newtheorem{corollary}[definition]{Corollary}
\theoremstyle{definition}
\numberwithin{equation}{section}
\DeclareMathOperator{\Op}{Op\mathord{\mathrlap{\vphantom{\dagger}}}}
\DeclareMathOperator{\Hus}{H\mathord{\mathrlap{\vphantom{\dagger}}}}
\DeclareMathOperator{\U}{U}
\DeclareMathOperator{\Ber}{B}
\title{Optimal Remainder Estimates in the Quantization of~Complex Projective Spaces}
\author[1]{Tommaso Aschieri}
\author[2]{B\l a\.zej Ruba}
\author[1]{Jan Philip Solovej}
\affil[1]{Department of Mathematical Sciences, University of Copenhagen, \protect\\
Universitetsparken 5, 2100 Copenhagen, Denmark}
\affil[2]{Department of Mathematical Methods in Physics, \protect\\ Faculty of~Physics, University of Warsaw, \protect\\
 Pasteura 5, 02-093 Warszawa, Poland}
\date{\today}
\begin{document}

\maketitle
\begin{abstract}
We study Berezin–Toeplitz quantization of complex projective spaces $\CP^{d-1}$ and obtain full asymptotic expansions of the Berezin transformation and of products of Toeplitz operators. In each case, the remainder is controlled by the next term of the expansion, either through a positivity-preserving transformation or via an operator inequality. This leads to bounds which are optimal in terms of the required regularity and feature sharp or asymptotically sharp constants.
\end{abstract}
\section{Introduction}

In this paper, we study quantization, understood here as a correspondence between functions on a symplectic manifold and operators on certain Hilbert spaces. We address semiclassical approximation formulas for composition of operators with remainder bounds, optimal both in the regularity required of the quantized functions and in the constants appearing in the estimates. Such questions are  difficult in general, and in this paper we focus on the symplectic manifold $\CP^{d-1}$, the complex projective space, for which we are able to derive sharp estimates in the sense explained above. In fact, we do more, as a recurring theme in our results and proofs is positivity: for example, each remainder is related to the next term of the semiclassical expansion, either through a~positivity-preserving transformation or via an operator inequality.

In the setting we study, functions on $\CP^{d-1}$ are quantized as operators on the symmetric tensor spaces $\Sym^m(\C^d)$. The number $\frac{1}{m}$ plays the role of a semiclassical parameter. We have the familiar semiclassical correspondences: operator products and commutators approximate, respectively, multiplication and Poisson bracket of functions, up to correction terms of higher order in $\frac{1}{m}$. This in itself is by no means a~new result. Our contribution is to establish sharp estimates under relaxed regularity assumptions, which is crucial for the applicability of these tools in the quantitative analysis of operators.

Symmetric tensor spaces play an important role in mathematics: they may be viewed as spaces of complex homogeneous polynomials of degree $m$ in $d$ variables, as the $m$-particle subspace of the bosonic Fock space in quantum theory, and as irreducible representations of the compact group $\SU(d)$. On the geometric side, the complex projective space $\CP^{d-1}$ is of central importance in algebraic geometry, and is a~homogeneous space for $\SU(d)$.

The quantization of $\CP^{d-1}$ that we consider may be viewed as a specialized application of the Berezin-Toeplitz approach, in which Hilbert spaces arise as spaces of holomorphic sections of line bundles, and operators are given by compressions of multiplication operators. We~refer the reader to \cite{schlichenmaierBerezinToeplitzQuantizationCompact2010,le2018brief} and references therein for an~introduction to these methods. Here, we restrict the discussion to the specific case of our interest. 

In this Introduction, quantization will be introduced using coherent states \cite{Perelomov1972}. A~point $z \in \CP^{d-1}$  represents a line in $\C^d$, and its $m$-fold tensor product is a line in $\Sym^m(\C^d)$. We denote the corresponding orthogonal projection by $\ket{z}\bra{z}$ and refer to it as the coherent state. From the representation-theoretic perspective, coherent states form the $\SU(d)$-orbit of the highest weight projection. 

To every integrable function $f$ on $\CP^{d-1}$, we associate the linear operator
\begin{equation}
    \Op_m[f] = \dim(\Sym^m(\C^d)) \int_{\CP^{d-1}} f(z) \ket{z}\bra{z}\dz
\end{equation}
on $\Sym^m(\C^d)$. The function $f$ is called an upper symbol of the corresponding operator. Every linear operator $T$ on $\Sym^m(\C^d)$ admits a smooth upper symbol; that~is, there exists a smooth function $f$ such that $T=\Op_m[f]$. The lower symbol of $T$, also called the Husimi function, is defined by
\begin{align}
    \Hus_m[T](z) = \Tr[T\ket{z}\bra{z}]. 
\end{align}
The terminology of upper and lower symbols, proposed in \cite{simon_classical_1980}, is motivated by their utility in establishing operator bounds. For example, for $T = \Op_m[f]$ and $g = \Hus_m[T]$:
\begin{equation}
\inorm{g}_{L^\infty(\CP^{d-1})} \leq \| T \| \leq \| f \|_{L^\infty(\CP^{d-1})},
    \label{eq:lower_upper}
\end{equation}
where $\| T \|$ is the operator norm of $T$. 

The upper and lower symbols are related by the Berezin transformation
\begin{equation}
    \Ber_m := \Hus_m \Op_m.
\end{equation}
As already suggested by \eqref{eq:lower_upper}, the measure-theoretic properties of a function $f$ are well reflected in the spectral properties of $\Op_m[f]$ whenever $\Ber_m[f]-f$ is small. Moreover, the operator $\Ber_m-1$ quantifies how far the maps $\Op_m$ and $\Hus_m$ are from being inverses of each other. 

By general results in the Berezin-Toeplitz framework \cite{Englis,karabegovIdentificationBerezinToeplitzDeformation2001,ioosSpectralAspectsBerezin2020}, for every smooth function $f$ there exists an asymptotic expansion of $\Ber_m [f]$ as a formal power series in $\frac{1}{m}$ with coefficients given by universal differential operators acting on $f$. In the specialized case of $\CP^{d-1}$, our \Cref{thm:berezin_expansion} below improves upon these results. We show that the remainder after truncating the series is controlled by the corresponding next term; this demonstrates that the number of derivatives required is optimal, and that the explicit constants we obtain are asymptotically sharp, see \Cref{lemma:constants_asymptotically_optimal}.

Let us present a simplified version of \Cref{thm:berezin_expansion}, deferring the full statement to the main text. In preparation for that, let $m \in \mathbb N_0$. We introduce the entire function $\Upsilon_m$ and the sequence of coefficients $\upsilon_{m,n}$ by:
\begin{equation}
    \Upsilon_m(z) = \prod_{n=1}^\infty \left( 1 - \frac{z}{(m+n)(m+n+d-1)} \right) = \sum_{n=0}^\infty \upsilon_{m,n} (-z)^n.
\end{equation}
See also \eqref{eq:Upsilon_three_forms} below for alternative expressions for $\Upsilon_m(z)$. We note that $\upsilon_{m,n}$ is of order $m^{-n}$, as shown by the bounds $0 \leq \upsilon_{m,n} \leq \tfrac{m!}{(m+n)!}$. The function $\Upsilon_m$ allows to express \cite{berezin_general_1975,berezinQuantizationComplexSymmetric1975} the Berezin transformation in terms of the Laplacian $\Delta$ on $\CP^{d-1}$:
\begin{equation}
    \Ber_m = \Upsilon_m ( -\tfrac14 \Delta ).
\end{equation}

\begin{theorem}
Let $N \in \mathbb N$. Suppose that $f \in L^1(\CP^{d-1})$ is such that $\Delta^j f \in L^1(\CP^{d-1})$ for $1 \leq j \leq N$. For every $p \in [1,\infty]$ we have the bound:
\begin{align}
    \norm{\Ber_m [f]-\sum_{n=0}^{N-1} \upsilon_{m,n} \left(\tfrac{1}{4}\Delta\right)^n f}_{L^p(\CP^{d-1})}\leq \upsilon_{m,N}\norm{\left(\tfrac{1}{4}\Delta\right)^{N} f}_{L^p(\CP^{d-1})}.
\end{align}    
\end{theorem}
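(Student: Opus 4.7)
The plan is to proceed by induction on $N$. The base case $N=0$ reduces to the $L^p$-contractivity of $\Ber_m$, which is immediate from the positive kernel representation $\Ber_m f(z) = \dim(\Sym^m(\C^d))\int |\langle z|w\rangle|^{2m} f(w)\,dw$ with unit row sum $\Ber_m[1] = 1$.

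For the inductive step, let $L := -\tfrac14\Delta$ and $\alpha_k^{(m)} := 1/((m+k)(m+k+d-1))$, so that $\Upsilon_m(z) = \prod_{k\ge 1}(1-\alpha_k^{(m)}z)$. Peeling off the first factor yields the factorization $\Upsilon_m(z) = (1-\alpha_1^{(m)}z)\,\Upsilon_{m+1}(z)$, equivalently $\Ber_m = (1-\alpha_1^{(m)}L)\Ber_{m+1}$; comparing Taylor coefficients produces the Pascal-type recurrence $\upsilon_{m,n} = \upsilon_{m+1,n}+\alpha_1^{(m)}\upsilon_{m+1,n-1}$. A short algebraic computation combining these yields the key identity
\begin{equation*}
R_m^N = R_{m+1}^N + \alpha_1^{(m)}(-L)R_{m+1}^{N-1},
\end{equation*}
where $R_m^N := \Ber_m - \sum_{n=0}^{N-1}\upsilon_{m,n}(\tfrac14\Delta)^n$ is the remainder in question.

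Iterating this recurrence in $m$ produces, for each $K\ge 1$, the decomposition $R_m^N = R_{m+K}^N + \sum_{i=0}^{K-1}\alpha_1^{(m+i)}(-L)R_{m+i+1}^{N-1}$. The plan is to pass to the limit $K\to\infty$, where $R_{m+K}^N f \to 0$ in $L^p$, and thereby obtain
\begin{equation*}
R_m^N f = \sum_{i\ge 0}\alpha_1^{(m+i)}R_{m+i+1}^{N-1}(-L)f,
\end{equation*}
using commutativity of $L$ with each remainder operator. The inductive hypothesis applied to $R_{m+i+1}^{N-1}$ with input $(-L)f$ bounds $\|R_{m+i+1}^{N-1}(-L)f\|_{L^p}\le \upsilon_{m+i+1,N-1}\|(\tfrac14\Delta)^N f\|_{L^p}$, so the triangle inequality reduces the entire claim to $\sum_{i\ge 0}\alpha_1^{(m+i)}\upsilon_{m+i+1,N-1}=\upsilon_{m,N}$.

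This last identity is a combinatorial fact about elementary symmetric polynomials: since $\upsilon_{m,N}=e_N(\alpha_1^{(m)},\alpha_2^{(m)},\ldots)$ by the product formula for $\Upsilon_m$, conditioning on the smallest index $k_1=i+1$ in each $N$-element subset of $\N_{+}$ and using the reindexing $\alpha_{k+j}^{(m)}=\alpha_{k}^{(m+j)}$ gives $\upsilon_{m,N}=\sum_{i\ge 0}\alpha_{i+1}^{(m)}\upsilon_{m+i+1,N-1}$, which is the required identity after noting $\alpha_{i+1}^{(m)}=\alpha_1^{(m+i)}$. The main technical obstacle I anticipate is rigorously justifying $R_{m+K}^N f\to 0$ in $L^p$ as $K\to\infty$ under the stated $L^1$-regularity hypothesis; this should follow from the approximate-identity behaviour of $\Ber_{m'}$ for $m'\to\infty$ (so that $\Ber_{m'}f\to f$) together with the decay $\upsilon_{m',n}=O(m'^{-n})$ of the expansion coefficients.
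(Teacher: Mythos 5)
Your proposal is correct. It uses the same core ingredients as the paper's proof: the factorization $\Ber_m = (1-\alpha_1^{(m)} L)\Ber_{m+1}$, the strong convergence $\Ber_{m'}\to 1$ (needed to kill $R_{m+K}^N f$ as $K\to\infty$, which is justified exactly as you anticipate, via \Cref{cor:Berezin_convergence} and $\upsilon_{m',n}=O(m'^{-n})$), the positivity/normalization of the kernel of $\Ber_{m'}$, and the elementary-symmetric-polynomial structure of $\upsilon_{m,N}$. The organization differs, however, in a way worth noting. The paper iterates \eqref{eq:Berf_minus_f} directly $N$ times to obtain the closed-form remainder $\upsilon_{m,N}\,T_N[(\tfrac14\Delta)^N f]$ with $T_N$ a convex combination of the operators $\Ber_{m'}$, hence doubly stochastic; this yields the \emph{majorization} statement $\Ber_m[f]-\sum\upsilon_{m,n}(\tfrac14\Delta)^n f \prec \upsilon_{m,N}(\tfrac14\Delta)^N f$ (and its weak version for complex $f$), of which the $L^p$ bound is a corollary via Karamata's theorem. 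Your double induction, with the Pascal-type recurrence $R_m^N = R_{m+1}^N + \alpha_1^{(m)}(-L)R_{m+1}^{N-1}$ followed by triangle inequality and the combinatorial identity $\sum_{i\ge0}\alpha_1^{(m+i)}\upsilon_{m+i+1,N-1}=\upsilon_{m,N}$, reaches the norm bound more directly and needs only $L^p$-contractivity of $\Ber_{m'}$ at the base case. The trade-off is that passing through the triangle inequality discards the explicit doubly-stochastic representation of the remainder, so your argument does not recover the paper's stronger majorization conclusion (which is what the paper's \Cref{thm:berezin_expansion} actually states, the $L^p$ bound being the ``in particular''). If you only want the displayed inequality, your route is a clean and economical alternative; if you want the rearrangement-invariant bounds in full generality, you need the paper's explicit remainder formula.
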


We remark that we actually have a stronger statement in \Cref{thm:berezin_expansion}, phrased in the language of majorization. This is explained in \Cref{sec:Op_defs} and \Cref{sec:berezin}. We have a~parallel result for maps $\Op_m \Hus_m$ in \Cref{thm:OpH}, which allows to compare operators with quantizations of their Husimi functions.

Let us now discuss products of operators $\Op_m[f]$. It is known from general results about quantization of compact K\"ahler manifolds   \cite{bordemannToeplitzQuantizationKahler1994,schlichenmaierZweiAnwendungenAlgebraischgeometrischer1996,schlichenmaierDeformationQuantizationCompact2000,maBerezinToeplitzQuantizationKahler2012} that if $f$ and $g$ are smooth functions, the product $\Op_m[f] \Op_m[g]$ admits an asymptotic expansion in terms of a~star product:
\begin{equation}
    \Op_m[f] \Op_m[g] \sim \Op_m[f \star g ] ,
    \label{eq:Op_product_star}
\end{equation}
where $\star$ is a formal series in $\frac{1}{m}$ of bidifferential operators, defining an associative product on $C^\infty(\CP^{d-1})[[ \tfrac1m ]]$. These results link \cite{karabegovIdentificationBerezinToeplitzDeformation2001} the Berezin-Toeplitz approach with deformation quantization \cite{bayenDeformationTheoryQuantization1978,bayenDeformationTheoryQuantization1978a,dewildeExistenceStarproductsFormal1983,fedosovSimpleGeometricalConstruction1994,kontsevichDeformationQuantizationPoisson2003} of K\"ahler manifolds \cite{karabegovDeformationQuantizationsSeparation1996,bordemannFedosovStarProduct1997,reshetikhinDeformationQuantizationKahler1999,gammelgaardUniversalFormulaDeformation2014,xuExplicitFormulaBerezin2012}. There has also been work \cite{barronSemiclassicalPropertiesBerezin2014,charlesSharpCorrespondencePrinciple2018,charlesEntanglementEntropyBerezin2020,aschieriSU2EquivariantQuantumChannels2024} on semiclassical approximations of operator products under restricted regularity assumptions, and with explicit constants in remainder estimates; however, these results do not achieve the level of precision offered by our methods. Finally, let us mention a~few selected references to recent work on exponential remainder estimates under analytic regularity assumptions \cite{deleporteToeplitzOperatorsAnalytic2021, charlesAnalyticBerezinToeplitz2021, roubyAnalyticBergmanOperators2020}.

We now present our result. Let $n \in \N_0$, and let $f,g$ be functions with square-integrable derivatives up to order~$n$. We define:
\begin{equation}
    f \star_n g = 2^{-n} \nabla^{(0,n)} f \cdot \nabla^{(n,0)} g,
    \label{eq:starn_def_in_intro}
\end{equation}
where $\nabla^{(0,n)}$ and $\nabla^{(n,0)}$ are, respectively, the $(0,n)$ and $(n,0)$ (anti-holomorphic and holomorphic) components of the $n$th Levi-Civita covariant derivative; see \Cref{sec:star} and \Cref{app:xi_appendix} for further discussion of the $\star_n$ products. Here we express the first two in elementary terms:
\begin{equation}
    f \star_0 g = fg, \qquad f \star_1 g = \tfrac14 \nabla f \cdot \nabla g - \tfrac{i}{4} \{ f , g \},
\end{equation}
where $\nabla f \cdot \nabla g$ is the Riemannian scalar product of gradients, and $\{ f , g \}$ denotes the Poisson bracket. 

We then define the star product of $f$ and $g$ by:
\begin{equation}
    f \star g = \sum_{n=0}^\infty \frac{(-1)^n (m+d-1)!}{n! (n+m+d-1)!}  f \star_n g.
    \label{eq:intro_star}
\end{equation}
Although this expression is not strictly a formal power series in $\frac{1}{m}$, the coefficient of the $n$th term is indeed of order $m^{-n}$.

In \Cref{sec:star} we discuss an algebra $\mathcal A$ of functions on $\CP^{d-1}$ for which the series \eqref{eq:intro_star} terminates after finitely many terms. In fact, $\star$ defines an associative product on $\mathcal A$ depending rationally on complex $m \not \in \{ - d , -d-1, \dots \}$; a similar construction can be found in \cite{cahenQuantizationKahlerManifolds1993}. For more general functions, we consider truncations of \eqref{eq:intro_star} at finite order and establish explicit bounds on the corresponding remainder in \eqref{eq:Op_product_star}.

\begin{theorem} \label{thm:2}
    Let $m, N \in \mathbb N_0$ and let $f,g$ be functions with square-integrable derivatives up to $N$th order. Let $\mathcal E_{m,N}[f,g]$ be the remainder of $N$th order approximation of $\Op_m[f] \Op_m[g]$ by $\Op_m[f \star g]$:
    \begin{equation}\label{eq:emn_error_term_def}
        \mathcal E_{m,N}[f,g] := \Op_m[f] \Op_m[g] - \sum_{n=0}^{N-1} \frac{(-1)^{n} (m+d-1)!}{n! (n+m+d-1)!} \Op_m[f \star_n g].
    \end{equation}
    We have the following remainder bounds:
    \begin{enumerate}[label=\arabic*)]
        \item In the case $g = \overline f$ we have the operator inequalities:
        \begin{align}\label{eq:remainder_bound_ffbar}
            0 \leq (-1)^N \mathcal E_{m,N}[f,\overline f]  \leq \frac{ (m+d-1)!}{N! (N+m+d-1)!} \Op_m[f \star_N \overline f].
        \end{align}
        \item There exist operators $A,B$ such that
    \begin{equation}
        \mathcal E_{m,N}[f,g] = \frac{(m+d-1)!}{N! (N+m+d-1)!} A^*B,
        \label{eq:AstarB}
    \end{equation}
    and we have the bounds
    \begin{equation}
        A^*A \leq \Op_m [f \star_N \overline f], \qquad B^* B \leq \Op_m [\overline g \star_N g]. 
        \label{eq:AstarA_BstarB}
    \end{equation}
    \end{enumerate}
\end{theorem}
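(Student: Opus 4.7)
The plan is to derive an exact operator-valued factorization of the remainder as
\begin{equation*}
\mathcal E_{m,N}[f,g] = \frac{(m+d-1)!}{N!(N+m+d-1)!}\,A^* B,
\end{equation*}
where $A$ and $B$ are constructed by pairing the tensor-valued functions $\nabla^{(0,N)}f$ and $\nabla^{(N,0)}g$ against suitably differentiated coherent-state sections. Both parts of the theorem will then follow from this factorization together with a Berezin-type inequality applied to tensor-valued symbols.

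The starting point is the rank-one identity $\ket{z}\bra{z}\ket{w}\bra{w} = \braket{z}{w}\ket{z}\bra{w}$, which gives
\begin{equation*}
\Op_m[f]\Op_m[g] = \dim(\Sym^m(\C^d))^2 \iint f(z)g(w)\braket{z}{w}\ket{z}\bra{w}\dz\dw.
\end{equation*}
The coherent-state overlap $\braket{z}{w}$ is (in a suitable local frame) anti-holomorphic in $z$ and holomorphic in $w$, while $\ket{z}$ is holomorphic and $\bra{w}$ anti-holomorphic in their arguments. These analytic properties allow an integration by parts that trades an anti-holomorphic derivative of $\braket{z}{w}$ in $z$ for an anti-holomorphic derivative of $f$, and similarly in $w$ for a holomorphic derivative of $g$. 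Iterating this procedure $N$ times produces a telescoping expansion whose first $N$ terms reproduce the truncation in~\eqref{eq:emn_error_term_def}---the combinatorial factor $\frac{(m+d-1)!}{n!(n+m+d-1)!}$ recording the successive normalizations of the reproducing kernel of $\Sym^m(\C^d)$---and whose remainder admits the factorization $\frac{(m+d-1)!}{N!(N+m+d-1)!}A^*B$ with $A$, $B$ given as coherent-state integrals pairing $\nabla^{(0,N)}f$ and $\nabla^{(N,0)}g$ against $N$-fold covariant derivatives of the coherent-state section.

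Given this factorization, the bounds $A^*A\leq\Op_m[f\star_N\overline f]$ and $B^*B\leq\Op_m[\overline g\star_N g]$ in part~(2) are instances of the Berezin inequality~\eqref{eq:lower_upper} applied to bundle-valued symbols: the quantization of a section of a Hermitian vector bundle is controlled in the $T^*T$-sense by the quantization of its pointwise squared norm, which here evaluates to $f\star_N\overline f = 2^{-N}|\nabla^{(0,N)}f|^2$ and $\overline g\star_N g = 2^{-N}|\nabla^{(N,0)}g|^2$. Part~(1) then follows by specializing $g = \overline f$: the identity $\nabla^{(N,0)}\overline f = \overline{\nabla^{(0,N)}f}$ together with the symmetry of the construction forces $B = (-1)^N A$, with the $(-1)^N$ arising from the interaction of complex conjugation with the $N$-fold derivative. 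Consequently $(-1)^N\mathcal E_{m,N}[f,\overline f] = \frac{(m+d-1)!}{N!(N+m+d-1)!}A^*A$, which simultaneously gives the non-negativity and the upper bound by $\frac{(m+d-1)!}{N!(N+m+d-1)!}\Op_m[f\star_N\overline f]$.

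The main obstacle will be the iterative integration-by-parts step on the curved Kähler manifold $\CP^{d-1}$: although each step is straightforward in principle, extracting the exact combinatorial coefficients at every order requires careful tracking of the Levi-Civita connection, the $\overline\partial$-annihilation property of the coherent states, and the relation between covariant derivatives and the $\star_n$ products. I expect this to rely on the explicit algebraic form of $\braket{z}{w}$ as a power of the Hermitian inner product of unit-vector representatives, together with the reproducing-kernel identities for $\Sym^m(\C^d)$ established earlier in the paper.
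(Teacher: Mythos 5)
Your proposed route is genuinely different from the paper's, and the high-level skeleton — factorize $\mathcal E_{m,N}[f,g]$ as $A^*B$, then invoke positivity — is correct. But the two places where you identify the difficulty are exactly where the paper's machinery does irreducible work, and neither is a routine step.

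The paper does not work with the coherent-state integral $d_m^2\iint f(z)g(w)\braket{z}{w}\ket{z}\bra{w}$ at all. Instead it conjugates to $\Pi_{0,m}\,f\,\Pi_{0,m}\,g\,\Pi_{0,m}$ on $L^2(\CP^{d-1},m)$, writes the middle projection as $\mathbbm 1_{\{0\}}(D_m)$ for the Casimir-type operator $D_m$, and splits $\mathbbm 1_{\{0\}}(D_m)=q_{m,N-1}(D_m)+\big[\mathbbm 1_{\{0\}}(D_m)-q_{m,N-1}(D_m)\big]$ using the interpolating polynomial of \Cref{lem:interpolating}. The identification of the expanded polynomial with the $\star_n$ products is then the nontrivial operator identity of \Cref{lem:D_products}, $\prod_{i=0}^{n-1}(D_m-\mu_{m,i})=\sum_I\xi_{1,I}\xi_{I,1}$, whose proof is a page of $\mathfrak{sl}(d)$ commutator bookkeeping. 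In your proposal this entire step is compressed into the sentence about ``iterating integration by parts $N$ times'' producing ``a telescoping expansion'' — you acknowledge this is the main obstacle, and indeed it is precisely the content you would need to supply, with no shortcut visible.

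The second and more serious gap is the source of the operator inequality $A^*A\leq\Op_m[f\star_N\overline f]$. In the paper this comes from a scalar-calculus fact about the interpolating polynomial (\Cref{lem:interpolating}): $0\leq(-1)^N\big[\mathbbm 1_{\{0\}}(D_m)-q_{m,N-1}(D_m)\big]\leq\prod_{i=1}^{N}\tfrac{D_m-\mu_{m,i-1}}{\mu_{m,i}}$, which after conjugation by $f\Pi_{0,m}$ delivers both the positivity $(-1)^N\mathcal E_{m,N}[f,\overline f]\geq 0$ and the upper bound with the exact constant $\prod_{i=1}^N\mu_{m,i}^{-1}=\tfrac{(m+d-1)!}{N!(N+m+d-1)!}$. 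Your appeal to ``the Berezin inequality~\eqref{eq:lower_upper} applied to bundle-valued symbols'' is not the right tool: \eqref{eq:lower_upper} is an $L^\infty$-norm statement, not a $T^*T$ comparison, and even a Kadison--Schwarz-type bound $\tilde\Op_m[\sigma]^*\tilde\Op_m[\sigma]\leq\Op_m[|\sigma|^2]$ for a bundle-valued quantization $\tilde\Op_m$ would not by itself give the sharp normalization constant, nor the lower bound $0\leq(-1)^N\mathcal E_{m,N}$. The latter requires knowing that the remainder of the interpolating polynomial has a definite sign on the relevant part of the spectrum, which is a feature of $q_{m,n}$ having all its roots at $\mu_{m,1},\dots,\mu_{m,n}$ — an ingredient that does not appear in your argument. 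Likewise, your derivation of $B=(-1)^NA$ from ``the symmetry of the construction'' is plausible (the sign would come from $\xi_{1,i}=-\overline{\xi_{i,1}}$ applied $N$ times) but is asserted rather than shown, whereas the paper bypasses it entirely by proving part~(1) directly from the operator sandwich, without ever constructing $A,B$ for that case.
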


The upper bounds in \eqref{eq:remainder_bound_ffbar} and \eqref{eq:AstarA_BstarB} are sharp, in the sense that equality is attained for certain functions in the algebra $\mathcal A$. These estimates lead to norm bounds of remainders; for instance, combining \eqref{eq:AstarB} with H\"older's inequality provides bounds for Schatten norms (see \Cref{cor:full_expansion_opf_opg} for details). The obtained bounds are also optimal in the required regularity: one derivative of $f$ and $g$ per order of the expansion, which matches the structure of the main terms.

Let us briefly mention some broader context related to our work. The orbit method \cite{kostantQuantizationUnitaryRepresentations1970,souriauStructureSystemesDynamiques1970,kirillovLecturesOrbitMethod} aims to relate irreducible representations of Lie groups to their coadjoint orbits; that is, group orbits in the dual space of the Lie algebra. For compact Lie groups, the coadjoint orbits are K\"ahler manifolds called flag manifolds. In this setting, the Borel-Weil theorem \cite{serreRepresentationsLineairesEspaces1954} realizes the program of the orbit method by expressing irreducible representations as spaces of holomorphic sections\footnote{Due to our choice of conventions, we actually work with anti-holomorphic sections in this paper.} of certain bundles on flag manifolds. An elementary example relevant for this paper is that homogeneous polynomials in $\Sym^m(\C^d)$ give rise to sections of bundles over complex projective spaces. One advantage of geometric constructions of representations, like in the Borel-Weil theorem, is that they provide a natural starting point for the application of semiclassical approximation methods in representation theory. Conversely, representation theory allows to perform certain computations exactly, leading to very precise results. A few examples of this interplay are \cite{guilleminGeometricQuantizationMultiplicities1982,guilleminGelfandCetlinSystemQuantization1983,landsmanStrictQuantizationCoadjoint1998,ioosAsymptoticsUnitaryMatrix2024,dawsonRepresentationtheoreticApproachToeplitz2025}.

Quantization of the complex projective space $\CP^{d-1}$ is also closely related to that of~$\C^d$, which leads to the Bargmann-Fock space $\bigoplus_{m=0}^\infty \Sym^m(\C^d)$ \cite{bargmannHilbertSpaceAnalytic1961}. The version of the Berezin-Toeplitz approach applicable in this setting gives rise to the anti-Wick ordering of creation and annihilation operators. The corresponding Berezin transformation, which is given by convolution with a Gaussian, relates it to the Wick ordering. The~complex projective space can be obtained from $\C^d$ as a symplectic quotient, also known as Marsden-Weinstein reduction. This was used in \cite{bordemannPhaseSpaceReduction1996} to derive a family of explicit star products on $\CP^{d-1}$. Let us informally explain a simple perspective on the reduction process. To every point $x \in \C^d$ corresponds the Glauber coherent state vector $\Omega_x$ in the Bargmann-Fock space \cite{klauderActionOptionFeynman1960,glauberQuantumTheoryOptical1963}. The projection of $\Omega_x$ onto the subspace $\Sym^m(\C^d)$ is proportional to the coherent state $\ket{z}$, where $z \in \CP^{d-1}$ is the line spanned by $x$. This allows to relate anti-Wick quantizations of certain functions to the operators $\Op_m[f]$ studied in this paper. With some loss of generality, let us consider a~function $F$ on $\C^d$ which is locally integrable and satisfies the homogeneity property: 
\begin{equation}
 F(\lambda z )  =|\lambda|^\nu F(z) \qquad \text{for } \lambda \in \C \setminus \{ 0 \}, \ z \in \C^d . 
\end{equation}
Then the restriction of $F$ to the unit sphere in $\C^d$ descends to a function $f$ on $\CP^{d-1}$. Let $\Op^{\mathrm{aW}}[F]$ be the anti-Wick quantization of $F$, see e.g.\ \cite[Ch.~2.7]{follandHarmonicAnalysisPhase1989} or \cite[Ch.~9.4]{derezinskiMathematicsQuantizationQuantum2023} for the definition. We have the following block-diagonal decomposition of $\Op^{\mathrm{aW}}[F]$:
\begin{equation}
    \Op^{\mathrm{aW}}[F] = \bigoplus_{m=0}^\infty \frac{\Gamma(m + d + \tfrac{\nu}{2})}{\Gamma(m+d)} \Op_m[f].
\end{equation}

The structure of this manuscript is as follows. \Cref{sec:preps}, which contains no new results, reviews some notions from the representation theory of $\SU(d)$ and the geometry of its homogeneous spaces that will be used throughout the paper. In \Cref{sec:Op_defs} we introduce the quantization maps under study, relating them to coherent state quantization and the general framework of Berezin–Toeplitz quantization. We also review spectral inequalities they satisfy using the language of majorization. \Cref{sec:berezin} is devoted to the asymptotic expansion of the Berezin transformations $\Hus_m\Op_m$ as well as the closely related maps $\Op_m\Hus_m$. Finally, \Cref{sec:star} concerns the expansion of the product $\Op_m[f]\Op_m[g]$. We review the properties of the operators $\star$ introduced in \eqref{eq:starn_def_in_intro} and prove \Cref{thm:2}.

\section*{Acknowledgments}

We would like to thank Jan Dereziński and Marcin Napiórkowski for discussions. The work of T.~A. and J.~P.~S. was supported by the Villum Centre of Excellence for the Mathematics of Quantum Theory (QMATH) with Grant No.10059. The work of B.~R. was supported by the National Science Centre (NCN) grant Sonata Bis 13 number 2023/50/E/ST1/00439.
\section{Preparations}\label{sec:preps}

\subsection{Lie algebra and representations} \label{sec:reps}

We will be working with the group $\SU(d)$, consisting of unitary $d \times d$ matrices with unit determinant. Its Lie algebra $\su(d)$ is the space of traceless anti-hermitian $d \times d$ matrices. The complexification of $\su(d)$ is the Lie algebra $\sl(d)=\sl(d,\C)$ of traceless $d \times d$ matrices. The standard basis of $\sl(d)$ is given by the matrix units $E_{i,j}$ with $i \neq j$, along with $d-1$ traceless diagonal matrices, e.g. $E_{i,i}-E_{i+1,i+1}$, $1 \leq i \leq d-1$. In a~slight abuse of notation, we will also use $E_{i,j}$ and $E_{i,i}-E_{i+1,i+1}$ to denote the action of these elements in various representations of $\sl(d)$. 

The standard Cartan subalgebra of $\sl(d)$ consists of the traceless diagonal matrices, and the standard choice of positive roots is such that $E_{i,j}$ with $i<j$ are root vectors for the positive roots. 

An important role will be played by the quadratic Casimir element, defined as an operator on any~representation of $\sl(d)$ by:
\begin{align}
    Q = \sum_{i \neq j}E_{i,j}E_{j,i} + \sum_{i=1}^{d} (E_{i,i} - \frac{1}{d} \sum_{k=1}^d E_{k,k})^2.
\end{align}
It acts as a scalar in irreducible representations of $\SU(d)$.

Let us introduce the irreducible representations of $\SU(d)$ that will be considered. We~have the tautological representation $\C^d$ and its dual space~$(\C^d)^*$, with the canonical basis $\ket{e_1},\dots,\ket{e_d} \in \C^d$ and the dual basis $\ket{e^1},\dots,\ket{e^d} \in (\C^d)^*$. The spaces $\C^d$ and $(\C^d)^*$ are isomorphic representations of $\SU(d)$ for $d=2$, but not for $d \geq 3$. We will also consider the spaces $\Sym^n (\C^d)$ and $\Sym^n (\C^d)^* \cong \Sym^n ((\C^d)^*) $ of symmetric tensors in $(\C^d)^{\tensor n}$ and $((\C^d)^*)^{\tensor n}$. These are irreducible representations of dimension
\begin{equation}
    d_n :=\binom{n+d-1}{n}.
\end{equation}

If $n,m \geq 1$, we have a homomorphism of representations: 
\begin{equation}\label{eq:tensor_contraction_definition}
   A  \ : \ \Sym^n(\C^d)^* \tensor \Sym^m(\C^d) \to     \Sym^{n-1}(\C^d)^* \tensor \Sym^{m-1}(\C^d)
\end{equation}
given by tensor contraction. It is surjective and its kernel is an irreducible representation \cite[p.~284]{knappLieGroupsIntroduction1996}, which we will denote by $\irrep_{n,m}$:
\begin{equation}
    \irrep_{n,m} := \{ \psi \in \Sym^n(\C^d)^* \tensor \Sym^m(\C^d) \, | \, A \psi =0  \}.
\end{equation}
We denote also $\irrep_{n,0} = \Sym^n(\C^d)^*$ and $\irrep_{0,m} = \Sym^m(\C^d)$. The representation $\irrep_{n,m}$ corresponds to the Young diagram:
\begin{figure}[htb]
    \centering
\begin{tikzpicture}[
BC/.style = {decorate, 
        decoration={calligraphic brace, amplitude=5pt, raise=1mm},
        very thick, pen colour={black}
            },
                    ]
\matrix (m) [matrix of math nodes,
             nodes={draw, minimum size=6mm, anchor=center},
             column sep=-\pgflinewidth,
             row sep=-\pgflinewidth
             ]
{
~   &~   & |[draw=none,text height=3mm]|\dots &   ~&  ~ &~ & |[draw=none,text height=3mm]| \dots &~  \\
~   &~   & |[draw=none,text height=3mm]|\dots  &   ~&   & & & \\
|[draw=none,text height=3mm]|\vdots &|[draw=none,text height=3mm]|\vdots   &   &   |[draw=none,text height=3mm]|\vdots&   & & & \\
~   &~   & |[draw=none,text height=3mm]|\dots  &   ~&   & &  &\\
};
\draw[BC] (m-4-1.south west) -- node[left =2.2mm] {$d-1$} (m-1-1.north west);
\draw[BC] (m-1-1.north west) -- node[above=2.2mm] {$n$} (m-1-4.north east);
\draw[BC] (m-1-5.north west) -- node[above=2.2mm] {$m$} (m-1-8.north east);
\end{tikzpicture}
\end{figure}

If $d \geq 3$, the representations $\irrep_{n,m}$ are pairwise non-isomorphic. The highest weight vector of $\irrep_{n,m}$ is $\ket{e^d}^{\otimes n} \otimes \ket{e_1}^{\otimes m}$. Acting on this vector one finds that on $\irrep_{n,m}$, the~Casimir $Q$ takes the value
\begin{equation}
    \left. Q \right|_{\irrep_{n,m} } = n^2 + m^2 - \frac{1}{d} (n-m)^2 + (d-1)(n+m).
    \label{eq:Casimir_values}
\end{equation}

The representations $\irrep_{n,m}$ appear in the tensor product decomposition
\begin{equation}
    \Sym^n(\C^d)^* \tensor \Sym^m(\C^d) \cong \bigoplus_{j=0}^{\min \{ n , m \}} \irrep_{n-j,m-j},
\end{equation}
where the sub-representation of $\Sym^n(\C^d)^* \tensor \Sym^m(\C^d)$ isomorphic to $\irrep_{n-j,m-j}$ can be characterized in terms of $A$ as follows:
\begin{equation}
\ker(A^{j+1}) \cap \ker(A^j)^\perp = \im((A^*)^j) \cap \im((A^*)^{j+1})^\perp \cong \irrep_{n-j,m-j}.
\end{equation}

Let us highlight an important special case of this decomposition:
\begin{equation}
    \End (\Sym^n(\C^d)) \cong \bigoplus_{j=0}^n \irrep_{j,j}.
\end{equation}
We denote by the different font $\Cas$ the Casimir element on the space of operators, on~which the Lie algebra acts by commutators:
\begin{align}
    \Cas[S] = QS + SQ &-2\sum_{i\neq j} E_{i,j} S E_{j,i} \label{eq:Casimir_adjoint} \\
    & -2 \sum_{i=1}^d (E_{i,i} - \tfrac1d\sum_{k=1}^d E_{k,k})S (E_{i,i} - \tfrac1d\sum_{k=1}^d E_{k,k}). \nonumber
\end{align}
\begin{lemma}\label{lemma:one_minus_casimir_QC}
The map:    
\begin{align}
    1-\Cas/\alpha \ \colon \End(\Sym^n(\C^d))\to \End(\Sym^n(\C^d))
\end{align}
is unital and trace preserving for every scalar $\alpha$. If $\alpha\geq \frac{2n(n+d)(d-1)}{d}$, then $1-\Cas/\alpha$ is completely positive.
\end{lemma}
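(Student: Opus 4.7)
My strategy is to handle the three claims separately. Unitality and trace preservation are essentially algebraic checks using the expression~\eqref{eq:Casimir_adjoint} together with the definition of $Q$; complete positivity is the actual content and rests on using that $Q$ acts on $\Sym^n(\C^d)$ as a scalar, which turns $\Cas$ into a manifest Kraus form plus a multiple of the identity channel.

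For unitality, I would substitute $S = \identity$ in~\eqref{eq:Casimir_adjoint}: the piece $QS + SQ$ contributes $2Q$, while, writing $H_i := E_{i,i} - \tfrac1d \sum_k E_{k,k}$, the remaining terms collapse to $-2(\sum_{i \neq j} E_{i,j} E_{j,i} + \sum_i H_i^2) = -2Q$ by the very definition of $Q$. Hence $\Cas[\identity] = 0$. For trace preservation, cyclicity of the trace gives $\tr(E_{i,j} S E_{j,i}) = \tr(S E_{j,i} E_{i,j})$ and $\tr(H_i S H_i) = \tr(S H_i^2)$; relabeling $i \leftrightarrow j$ in the double sum and combining with the diagonal terms reassembles the subtracted piece into $2\tr(SQ)$, which cancels against the contribution $\tr(QS) + \tr(SQ) = 2\tr(QS)$ of the first two terms.

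The substantive step is complete positivity. Since $\Sym^n(\C^d) = \irrep_{0,n}$ is irreducible, Schur's lemma implies $Q$ acts on it as a scalar; evaluating~\eqref{eq:Casimir_values} at parameters $(0,n)$ gives
\begin{equation*}
    q := Q\big|_{\Sym^n(\C^d)} = \frac{n(n+d)(d-1)}{d}.
\end{equation*}
Substituting $QS = SQ = qS$ into~\eqref{eq:Casimir_adjoint} produces
\begin{equation*}
    (1 - \Cas/\alpha)[S] = \Bigl(1 - \tfrac{2q}{\alpha}\Bigr) S + \tfrac{2}{\alpha} \sum_{i \neq j} E_{i,j}\, S\, E_{j,i} + \tfrac{2}{\alpha} \sum_i H_i\, S\, H_i.
\end{equation*}
Because $(E_{i,j})^* = E_{j,i}$ and $H_i^* = H_i$, every summand in the two sums is already in Kraus form $M S M^*$. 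Under the hypothesis $\alpha \geq \tfrac{2n(n+d)(d-1)}{d} = 2q$, the scalar $1 - 2q/\alpha$ is nonnegative, so the leading term can also be written in Kraus form as $(\sqrt{1-2q/\alpha}\,\identity)\,S\,(\sqrt{1-2q/\alpha}\,\identity)^*$; this yields an explicit Kraus decomposition of $1 - \Cas/\alpha$ and complete positivity follows.

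The only place where something might go wrong is matching the sharp threshold $\alpha = 2q$ with the explicit quantity appearing in the hypothesis. Once the reduction $Q|_{\Sym^n(\C^d)} = q\cdot\identity$ is in hand, the rest is a direct Kraus decomposition with no further subtlety, so I expect no real obstacle beyond careful bookkeeping of the Casimir scalar.
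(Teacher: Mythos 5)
Your proof is correct and follows essentially the same route as the paper's: unitality and trace preservation from the explicit form of $\Cas$, and complete positivity by using that $Q$ acts as the scalar $q=\tfrac{n(n+d)(d-1)}{d}$ on $\Sym^n(\C^d)$, so that the only non-manifestly-CP piece of $1-\Cas/\alpha$ is the scalar multiple $(1-2q/\alpha)\,\mathrm{id}$, which is CP precisely when $\alpha\geq 2q$. The paper phrases trace preservation more compactly (``all commutators are traceless'') and leaves the Kraus decomposition implicit, but there is no substantive difference.
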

\begin{proof}
We have $\Cas[1] =0$, so $1-\Cas/\alpha$ is unital. It is trace preserving because all commutators are traceless. By \eqref{eq:Casimir_adjoint}, in order for it to be completely positive it is enough that $S\mapsto \alpha S-QS-SQ$ is completely positive. Since $Q$ acts as a scalar \eqref{eq:Casimir_values} on $\Sym^n(\C^d)$, this is true exactly for $\alpha\geq \frac{2n(n+d)(d-1)}{d}$.
\end{proof}

\subsection{Homogeneous spaces} \label{sec:homog}

We denote the unit sphere in $\C^d$ by $\S^{2d-1}$. In terms of the standard linear coordinates $x_1,\dots,x_d$, it is the locus    $\sum_{i=1}^d |x_i|^2 =1$. It is a Riemannian manifold with the metric induced from the ambient Euclidean space $\C^d \cong \R^{2d}$. We normalize the associated Riemannian measure so that the total volume is $1$. The sphere $\S^{2d-1}$ can be identified with the homogeneous space $\SU(d)/\Stab(e_1)$, where $\Stab(e_1) \cong \SU(d-1)$ is the stabilizer of the vector $\ket{e_1} \in \C^d$. Explicitly,
\begin{align}
    \Stab(e_1) = \set{ \begin{bmatrix} 1 & 0 \\ 0 & V \end{bmatrix}, \ \ \ V \in \SU(d-1)}\subset \SU(d).
\end{align}

\begin{proposition} \label{prop:sphere_decomp}
    For every $n,m \in \N_0$ we have an isometric embedding $\irrep_{n,m} \to L^2(\S^{2d-1})$, which is the restriction to $\irrep_{n,m}$ of the map $((\C^d)^*)^{\otimes n} \otimes (\C^d)^{\otimes m} \to L^2(\S^{2d-1})$ given by
    \begin{equation}
      \ket{e^{i_1}} \otimes \cdots \otimes \ket{e^{i_n}} \otimes \ket{e_{j_1}} \otimes \cdots \otimes \ket{e_{j_m}}  \mapsto \sqrt{\frac{(d+n+m-1)!}{(d-1)!\,n!\,m!}} x_{i_1} \cdots x_{i_n} \overline x_{j_1} \cdots \overline x_{j_m}.
      \label{eq:Hpq_to_sphere}
    \end{equation}
    We have the orthogonal decomposition:
    \begin{equation}
        L^2(\S^{2d-1}) \cong \bigoplus_{n,m=0}^\infty \irrep_{n,m}.
        \label{eq:L2_sphere_decomp}
    \end{equation}
    On $\irrep_{n,m}$, the Laplacian on $\S^{2d-1}$ acts as multiplication by $-(n+m)(n+m+2d-2)$.
\end{proposition}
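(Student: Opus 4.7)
The proposition splits into three claims---existence of the isometric embedding with the stated normalization, the orthogonal decomposition of $L^2(\S^{2d-1})$, and the Laplacian eigenvalue---which I would address in sequence. The overarching idea is to set up a dictionary between the algebraic structures on $\irrep_{n,m}$ and differential operators on polynomials in $x_i, \overline{x}_i$ restricted to the sphere.

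First I would verify that \eqref{eq:Hpq_to_sphere} is $\SU(d)$-equivariant and symmetric separately in its up and down indices, so that it descends to a map from $\Sym^n(\C^d)^* \otimes \Sym^m(\C^d)$. A short bookkeeping computation shows that, under this identification, the tensor contraction $A$ of \eqref{eq:tensor_contraction_definition} corresponds to the operator $\sum_{i=1}^{d} \partial_{x_i} \partial_{\overline x_i} = \tfrac14 \Delta_{\R^{2d}}$ acting on polynomials of bidegree $(n,m)$. Consequently the image of $\irrep_{n,m} = \ker A$ consists of harmonic polynomials of bidegree $(n,m)$ restricted to $\S^{2d-1}$. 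A nonzero such polynomial cannot vanish on the sphere, since by homogeneity it would then vanish on all of $\C^d \setminus \{0\}$. Combined with $\SU(d)$-equivariance and Schur's lemma applied to the irreducible $\irrep_{n,m}$, this forces the restriction of \eqref{eq:Hpq_to_sphere} to $\irrep_{n,m}$ to be a scalar multiple of an isometry. To fix the scalar, I would test on the highest weight vector $\ket{e^d}^{\otimes n} \otimes \ket{e_1}^{\otimes m}$, which lies in $\ker A$ because $\langle e^d, e_1 \rangle = 0$ and has unit norm in $\Sym^n(\C^d)^* \otimes \Sym^m(\C^d)$; it maps to a constant times $x_d^n \overline x_1^m$, whose squared $L^2$-norm is the Dirichlet-type integral
\begin{equation}
    \int_{\S^{2d-1}} |x_d|^{2n} |x_1|^{2m} \, d\sigma = \frac{(d-1)!\, n!\, m!}{(d+n+m-1)!}.
\end{equation}
Matching $c_{n,m}^2 \cdot (\text{integral}) = 1$ recovers the prefactor in \eqref{eq:Hpq_to_sphere}.

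For orthogonality of the subspaces $\irrep_{n,m} \hookrightarrow L^2(\S^{2d-1})$ for distinct pairs, I would combine two pieces. The circle $\{e^{i\theta} \Id\} \subset \U(d)$ acts on the image of $\irrep_{n,m}$ by the character $\theta \mapsto e^{i(m-n)\theta}$, so subspaces with different $n-m$ are automatically orthogonal. Subspaces with different $n+m$ will be orthogonal by self-adjointness of the Laplacian, once its eigenvalue is established. Completeness of the decomposition comes from Stone--Weierstrass: polynomials in $x_i, \overline{x}_i$ are dense in $C(\S^{2d-1})$, and every such polynomial of bidegree $(n,m)$ admits a Fischer-type decomposition $\sum_{j \geq 0} |x|^{2j} h_{n-j,m-j}$ into harmonic components, which on the sphere restricts to a sum of elements from the $\irrep_{n-j,m-j}$ sectors.

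Finally, for the Laplacian eigenvalue I would invoke the standard polar decomposition
\begin{equation}
    \Delta_{\R^{2d}} = \partial_r^2 + \tfrac{2d-1}{r}\partial_r + \tfrac{1}{r^2} \Delta_{\S^{2d-1}}.
\end{equation}
Applied to a homogeneous harmonic polynomial of total degree $k = n+m$ and restricted to $r=1$, this yields $\Delta_{\S^{2d-1}} p = -k(k+2d-2) p$, which is the asserted eigenvalue on $\irrep_{n,m}$. The main technical obstacle in this plan is the bookkeeping required to identify the abstract contraction $A$ with the concrete differential operator $\sum_i \partial_{x_i}\partial_{\overline x_i}$ under \eqref{eq:Hpq_to_sphere} and to keep track of the symmetrization conventions; everything else reduces to classical facts about complex spherical harmonics on $\S^{2d-1}$.
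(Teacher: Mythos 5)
Your proposal is correct and follows essentially the same route as the paper's proof: equivariance plus Schur's lemma to get an isometry up to scalar, explicit evaluation of the norm of a weight vector to fix the normalization (the paper computes it via a Gaussian comparison, you quote the beta-type sphere integral directly---same calculation), Stone--Weierstrass plus the Fischer decomposition into harmonics for completeness, and the polar form of the Euclidean Laplacian for the eigenvalue. The only added texture is your explicit identification of the contraction $A$ with $\tfrac14\Delta_{\R^{2d}}$ and your two-step orthogonality argument (circle character in $n-m$, Laplacian eigenvalue in $n+m$), where the paper simply asserts orthogonality; these are nice clarifications but not a genuinely different method.
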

\begin{proof}
The map \eqref{eq:Hpq_to_sphere} is equivariant, hence isometric up to a factor by irreducibility of~$\irrep_{n,m}$. One can check it is isometric by calculating the norm of the right hand side for $i_1=\dots=i_n=1$, $j_1=\dots=j_m=d$. Such integrals can be calculated by comparing the integral over $\C^d$ of a gaussian multiplied by a polynomial, evaluated in Cartesian coordinates and in spherical coordinates. The direct sum
\begin{equation}
    \bigoplus_{n+m=k} \irrep_{n,m} \subset L^2(\S^{2d-1})
\end{equation}
is the set of restrictions of harmonic polynomials of degree $k$ to the sphere. It is easy to check that different $\irrep_{n,m}$ are orthogonal in $L^2(\S^{2d-1})$. The decomposition \eqref{eq:L2_sphere_decomp} then follows from Stone-Weierstrass, because every polynomial coincides on $\S^{2d-1}$ with some harmonic polynomial. The eigenvalue equation for the Laplacian is obtained by considering the Laplacian on $\C^d$ in spherical coordinates. 
\end{proof}

The complex projective space $\CP^{d-1}$ is the quotient of $\S^{2d-1}$ by the equivalence relation $x \sim e^{i \alpha} x$ for $\alpha \in \R$. The quotient carries an induced Riemannian structure, and~as in the case of the sphere we normalize its Riemannian measure to have total volume $1$. Alternatively, $\CP^{d-1}$ can be described as the quotient of $\C^d \setminus \{ 0 \}$ by the equivalence relation $x \sim e^{i \alpha} x$ for $\alpha \in \C$. This description endows $\CP^{d-1}$ with a~complex structure. The metric and the complex structure are compatible, making $\CP^{d-1}$ a K{\"a}hler manifold. The complex projective space $\CP^{d-1}$ can be identified with the homogeneous space $\SU(d)/\PStab(e_1)$, where $\PStab(e_1) \cong \U(d-1)$ is the stabilizer of the line spanned by the vector $\ket{e_1} \in \C^d$, or equivalently the stabilizer of the corresponding point in $\CP^{d-1}$. Explicitly,
\begin{align}
    \PStab(e_1) = \set{ \begin{bmatrix} \det(V)^{-1} & 0 \\ 0 & V \end{bmatrix}, \ \ \ V \in \U(d-1)}\subset \SU(d).
\end{align}

We will reserve the symbols $z,w$ to denote points of $\CP^{d-1}$.

One can identify functions on $\CP^{d-1}$ with functions on $\S^{2d-1}$ constant on the equivalence classes. Therefore, \Cref{prop:sphere_decomp} leads immediately to the decomposition of $L^2(\CP^{d-1})$ into irreducible representations of $\SU(d)$.

\begin{proposition}\label{prop:L2_proj_space_decomposition_and_casimir}
    We have the orthogonal decomposition:
    \begin{equation}
        L^2(\CP^{d-1}) \cong \bigoplus_{n=0}^\infty \irrep_{n,n}.
    \end{equation}
    On $\irrep_{n,n}$, the Laplacian on $\CP^{d-1}$ acts as multiplication by $-4n(n+d-1)$ and coincides with $-2Q$.
\end{proposition}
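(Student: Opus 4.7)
The plan is to lift everything to the sphere via the projection $\pi : \S^{2d-1} \to \CP^{d-1}$, and then invoke \Cref{prop:sphere_decomp}. Pullback along $\pi$ identifies $L^2(\CP^{d-1})$ isometrically with the subspace of $L^2(\S^{2d-1})$ consisting of functions invariant under the $\U(1)$-action $x \mapsto e^{i\alpha}x$. This is genuinely an isometry because both Riemannian measures are normalized to total mass one while the $\U(1)$-action is by isometries whose orbits all have the same length, so the conditional measure along each fibre of $\pi$ is the normalized Haar measure on $\U(1)$.

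Next I would determine which summands in \eqref{eq:L2_sphere_decomp} survive the invariance. Using the explicit realization \eqref{eq:Hpq_to_sphere}, a monomial $x_{i_1}\cdots x_{i_n}\overline{x}_{j_1}\cdots \overline{x}_{j_m}$ transforms by the phase $e^{i(n-m)\alpha}$ under $x \mapsto e^{i\alpha}x$, so the irreducible component $\irrep_{n,m} \subset L^2(\S^{2d-1})$ is $\U(1)$-invariant if and only if $n = m$. This immediately yields the orthogonal decomposition $L^2(\CP^{d-1}) \cong \bigoplus_{n=0}^\infty \irrep_{n,n}$.

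For the Laplacian, I would use that by construction $\pi$ is a Riemannian submersion and that each fibre is a great circle of $\S^{2d-1}$, namely the intersection of the sphere with the real two-plane $\R x + \R (ix)$, hence a geodesic and in particular a totally geodesic submanifold. The standard commutation formula for Riemannian submersions with totally geodesic fibres then gives $\Delta_{\S^{2d-1}}(f\circ\pi) = (\Delta_{\CP^{d-1}} f)\circ\pi$ for every smooth $f$ on $\CP^{d-1}$. Specializing the sphere eigenvalue $-(n+m)(n+m+2d-2)$ from \Cref{prop:sphere_decomp} to $n=m$ shows that $\Delta_{\CP^{d-1}}$ acts on $\irrep_{n,n}$ by $-4n(n+d-1)$. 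Finally, substituting $n = m$ into \eqref{eq:Casimir_values} yields $Q|_{\irrep_{n,n}} = 2n(n+d-1)$, so the Laplacian coincides with $-2Q$ on each isotypic component.

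The only step that is not essentially bookkeeping is the Laplacian commutation for the Riemannian submersion, which requires verifying that the $\U(1)$-orbits are geodesics; everything else follows directly from material already established in \Cref{sec:reps,sec:homog}.
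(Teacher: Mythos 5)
Your proof is correct and takes essentially the same approach as the paper: lift to the sphere, identify the $\U(1)$-invariant component $\irrep_{n,m}$ with $n=m$, and restrict the eigenvalue formula. The one step you spell out that the paper leaves implicit is the Riemannian-submersion-with-totally-geodesic-fibres argument showing that $\Delta_{\S^{2d-1}}$ restricted to $\U(1)$-invariant functions agrees with $\Delta_{\CP^{d-1}}$; that is indeed the right justification, and the final arithmetic ($Q|_{\irrep_{n,n}} = 2n(n+d-1)$, hence $\Delta = -2Q$) checks out.
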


The group manifold $\SU(d)$ carries two commuting actions of $\SU(d)$, given by left and right translations. The corresponding actions on functions are given by:
\begin{align}
[\pi_L(h)f](g) = f(h^{-1}g),\qquad [\pi_R(h) f](g) = f(gh).
\end{align}
The Casimirs of these two representations coincide. 

Let $d \pi_L, d \pi_R$ be the differentials of $\pi_L$ and $\pi_R$, extended to the complexification $\sl(d)$ of $\su(d)$ by $\C$-linearity. We will denote 
\begin{equation}
\xi_{i,j} = - d\pi_R(E_{i,j}) \text{ for } i \neq j, \qquad \xi_{i,i} - \xi_{j,j} = - d\pi_R(E_{i,i} - E_{j,j}). 
\label{eq:xi_defs}
\end{equation}
Here the minus sign is included in order for $\xi_{i,j}$ to coincide with $d \pi_L(E_{i,j})$ at the neutral element of $\SU(d)$. As a consequence, the correspondence between the Lie algebra elements $E_{i,j}$ and the vector fields $\xi_{i,j}$ is a Lie algebra anti-homomorphism. We have:
\begin{subequations}
\begin{align}
    [\xi_{i,j},\xi_{k,l}] &= \delta_{i,l} \xi_{k,j} - \delta_{k,j} \xi_{i,l},  \\
    [\xi_{i,i}-\xi_{j,j} , \xi_{kl}] & = (\delta_{il} + \delta_{jk}  - \delta_{ik} - \delta_{jl}) \xi_{kl},  \\
 [\xi_{i,i}-\xi_{j,j} , \xi_{kk}-\xi_{ll}]&=0 ,
\end{align}\label{eq:xi_commutation}
\end{subequations}
where $i \neq j$ and $k \neq l$.

To make the above definitions more concrete, we now give an explicit formula for the action of the vector fields $\xi_{i,j}$ on matrix elements of the group action. If $\irrep$ is a~finite-dimensional representation of $\SU(d)$, and $\ket{\psi},\ket\varphi\in \irrep$, then:
\begin{align}\label{eq:xi_action_on_functions}
    \xi_{i,j} \bra{\psi} g^{-1} \ket{\varphi} = \bra\psi E_{i,j} g^{-1}\ket\varphi.
\end{align}

We highlight that the vector fields $\xi_{i,j}$, being invariant under the left translation action but not under the right translations, do not descend to smooth vector fields on the homogeneous spaces $\S^{2d-1}$ and $\CP^{d-1}$. Nevertheless, they can be used in the differential calculus on homogeneous spaces by identifying functions on homogeneous spaces with functions on the group that satisfy certain symmetry properties. For example, the $L^2$ space on the sphere can be identified with
\begin{equation}
    L^2(\S^{2d-1}) \cong \{ f \in L^2(\SU(d)) \, | \, f(gh)=f(g) \text{ for all } h \in \Stab(e_1)  \}.
\end{equation}
Similarly, functions on $\CP^{d-1}$ can be identified with functions on $\SU(d)$ invariant under right translation by elements of $\PStab(e_1)$. We will need a slight generalization of this. The map
\begin{equation}
    \mu   \colon \PStab(e_1)\ni \begin{bmatrix}
        \det(V)^{-1} & 0 \\ 
        0 & V 
    \end{bmatrix} \mapsto \det(V)^{-1} \in \S^1. 
\end{equation}
is a group homomorphism. We define, for every $m \in \N_0$,
\begin{equation}
    L^2(\CP^{d-1},m) = \{ f \in L^2(\SU(d)) \, | \, f(gh) = \mu(h)^{-m} f(g) \text{ for all } h \in \PStab(e_1) \}.
\end{equation}
Functions in $L^2(\CP^{d-1},m)$ can be identified with square integrable functions on $\S^{2d-1}$ that satisfy $f(e^{i \alpha} x) = e^{- i m \alpha} f(x)$ for $\alpha \in \R$, or equivalently with sections of a suitable holomorphic line bundle on $\CP^{d-1}$ (scalar-valued functions for $m=0$). Since we will use the differential operators $\xi_{i,j}$ heavily, it is most convenient to regard elements of $L^2(\CP^{d-1},m)$ as functions on the group manifold. The relation between geometric structures on $\CP^{d-1}$ and the vector fields $\xi_{i,j}$ is discussed in \Cref{app:xi_appendix}.  

\begin{proposition}\label{prop:sections_decomp}
We have the orthogonal decomposition:
\begin{equation}
    L^2(\CP^{d-1},m) \cong \bigoplus_{n=0}^\infty \irrep_{n,n+m}.
    \label{eq:L2CPm}
\end{equation}
An isometric, equivariant embedding $V_m : \Sym^{m}(\C^d) \to L^2(\CP^{d-1},m)$, whose image is the $n=0$ summand $\irrep_{0,m}$ in \eqref{eq:L2CPm}, is given by
\begin{equation}
    V_m \ket \psi(g) = d_m^{\frac12}  \bra{e_1}^{\otimes m} g^{-1} \ket{\psi}. 
    \label{eq:Vm_def}
\end{equation}
Moreover, for every $2\leq i\leq d$:
\begin{align}\label{eq:xi_ij_annihilate_hw_vector}
    \xi_{i,1} V_m\ket\psi = 0.
\end{align}
\end{proposition}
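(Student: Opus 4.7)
The plan is to obtain the decomposition \eqref{eq:L2CPm} from \Cref{prop:sphere_decomp} by intersecting with a phase condition, and then to verify the announced properties of $V_m$. Writing $h \in \PStab(e_1)$ as $h = \mathrm{diag}(e^{i\alpha}, V)$ with $\det V = e^{-i\alpha}$, right translation by $h$ corresponds on the sphere to $x \mapsto e^{i\alpha}x$, while $\mu(h) = e^{i\alpha}$. Hence $L^2(\CP^{d-1},m)$ is identified with those $f \in L^2(\S^{2d-1})$ obeying $f(e^{i\alpha}x) = e^{-im\alpha}f(x)$. In the polynomial model \eqref{eq:Hpq_to_sphere} of $\irrep_{n,m'} \subset L^2(\S^{2d-1})$, the substitution $x \mapsto e^{i\alpha}x$ scales each monomial $x_{i_1}\cdots x_{i_n}\overline x_{j_1}\cdots\overline x_{j_{m'}}$ by $e^{i(n-m')\alpha}$, so the phase condition selects precisely the summands with $m' = n+m$, giving \eqref{eq:L2CPm}.

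For $V_m$, the transformation rule under $h \in \PStab(e_1)$ follows from $\bra{e_1}^{\otimes m} h^{-1} = \mu(h)^{-m}\bra{e_1}^{\otimes m}$, an immediate consequence of $h^{-1}e_1 = \mu(h)e_1$; equivariance under left translation is built into the defining formula. The isometry follows from Schur orthogonality on $\SU(d)$: since $\Sym^m(\C^d)$ is a unitary irrep of dimension $d_m$ and $\bra{e_1}^{\otimes m}$ is a unit vector, the matrix coefficient $g \mapsto d_m^{1/2}\bra{e_1}^{\otimes m}g^{-1}\ket\psi$ has $L^2(\SU(d))$-norm $\|\psi\|$, which by compatibility of normalized Haar measure with the induced measure on the homogeneous space agrees with the $L^2(\CP^{d-1},m)$-norm of $V_m\ket\psi$. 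To locate the image, rewrite $V_m\ket\psi(g) = d_m^{1/2}\overline{\bra\psi g\ket{e_1}^{\otimes m}} = d_m^{1/2}\overline{\bra\psi\ket{ge_1}^{\otimes m}}$, which is the complex conjugate of a holomorphic polynomial of degree $m$ in the components of $ge_1 \in \S^{2d-1}$, hence anti-holomorphic of bidegree $(0,m)$; by the first paragraph this places $\image(V_m) \subset \irrep_{0,m}$, and equivariance from an irreducible source together with the non-vanishing of $V_m$ forces equality.

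Finally, for \eqref{eq:xi_ij_annihilate_hw_vector}, applying \eqref{eq:xi_action_on_functions} gives $\xi_{i,1}V_m\ket\psi(g) = d_m^{1/2}\bra{e_1}^{\otimes m}E_{i,1}g^{-1}\ket\psi$ with $E_{i,1}$ acting on $\Sym^m(\C^d)$. It then suffices to check that $\bra{e_1}^{\otimes m}E_{i,1} = 0$ in $\Sym^m(\C^d)^*$, equivalently $E_{1,i}\ket{e_1}^{\otimes m} = 0$ in $\Sym^m(\C^d)$. Since $E_{1,i}$ acts as a symmetric derivation on $(\C^d)^{\otimes m}$ and $E_{1,i}\ket{e_1} = 0$ for $i \geq 2$, this vanishes; this is the familiar statement that $\ket{e_1}^{\otimes m}$ is the highest-weight vector of $\Sym^m(\C^d)$ for the chosen Cartan and positive roots.

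I do not anticipate a conceptual obstacle; the entire argument amounts to tracking weights under the $\U(1)$-subgroup of $\PStab(e_1)$ combined with standard Peter-Weyl normalization. The delicate points to get right are the measure-theoretic compatibility underlying the isometry of $V_m$ and the left/right conventions relating $\pi_R$ to the $\S^1$-action on the sphere.
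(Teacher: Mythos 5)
Your proof is correct and follows essentially the same route the paper sketches: the decomposition by restricting the sphere decomposition via the $\U(1)$-weight condition, the isometry of $V_m$ from Schur orthogonality with the observation $V_m\psi \in \irrep_{0,m}$, and \eqref{eq:xi_ij_annihilate_hw_vector} from \eqref{eq:xi_action_on_functions} and the highest-weight property of $\ket{e_1}^{\otimes m}$. You have merely filled in details that the paper leaves implicit, and the sign and convention bookkeeping (e.g.\ $m' = n+m$ and $\bra{e_1}^{\otimes m}E_{i,1}=0$) is done correctly.
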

\begin{proof}
    The orthogonal decomposition follows from \Cref{prop:sphere_decomp}, and \eqref{eq:Vm_def} from Schur's orthogonality relations. Finally, \eqref{eq:xi_ij_annihilate_hw_vector} is a consequence of \eqref{eq:xi_action_on_functions}.
\end{proof}

The matrix element in \eqref{eq:Vm_def} is the scalar product of $\ket{\psi}$ with the vector $g \ket{e_1}^{\otimes m}$, which is often called the coherent state.  In the interpretation of $L^2(\CP^{d-1},m)$ as the space of $L^2$ sections over~$\CP^{d-1}$, the image of $V_m$ becomes the space of anti-holomorphic sections. This identification is a special case of the celebrated Borel-Weil theorem. We note that $V_m^* V_m^{\vphantom{*}}=1$, while $V_m^{\vphantom{*}} V_m^*$ is the orthogonal projection onto the image of $V_m$.

\section{Quantization and symbol maps} \label{sec:Op_defs}

Let us introduce the quantization maps that will be studied in the next sections. Following the definition of Perelomov \cite{Perelomov1972}, we consider the coherent states:
\begin{align}\label{eq:coherent_state_perelomov}
    \iset{g\ket{e_1}^{\tensor m}}_{g\in \SU(d)}\subseteq \Sym^m(\C^d).
\end{align}
They form an overcomplete set of vectors. The complex line spanned by $g \ket{e_1}^{\otimes m}$ is invariant under multiplication of $g$ by an element of $\PStab(e_1)$; in other words, it depends only on the coset $g \PStab(e_1) \in \SU(d)/\PStab(e_1) = \CP^{d-1}$. We denote by $\ket{z}\bra{z}$ the rank one projection in $\Sym^m(\C^d)$ corresponding to $z=g \PStab(e_1) \in \CP^{d-1}$. Then $\ket{z}\bra{z}$ is a~smooth operator-valued function on $\CP^{d-1}$. We note that vectors $g \ket{e_1}^{\tensor m}$ with different $g$ in the same coset $g \PStab(e_1)$ differ by a complex phase. 

With a slight abuse of notation, we sometimes write $\ket{z}$ for an some arbitrarily chosen unit vector in the range of the projection $\ket{z} \bra{z}$. If $m \neq 0$, it is not possible to choose $\ket{z}$ so that it is a~continuous vector-valued function on $\CP^{d-1}$.

Coherent states can be used to define the quantization map $\Op_m$. On functions in $L^1(\CP^{d-1})$ it is given by an absolutely convergent integral:
\begin{align}
    \Op_m[f] = d_m \int_{\CP^{d-1}} f(z) \ket{z}\bra{z}\dz \in \End(\Sym^m(\C^d)),
\end{align}
where $d_m = \dim(\Sym^m(\C^d))$. One can extend the map $\Op_m$ to distributions on $\mathbb{CP}^{d-1}$ by interpreting the integral suitably. An alternative, equivalent expression for $\Op_m[f]$ is
\begin{equation}
    \Op_m[f] = V_m^* f V_m^{\vphantom{*}},
    \label{eq:Op_Toeplitz}
\end{equation}
where $V_m$ is the isometric embedding in \eqref{eq:Vm_def}, and $f$ acts as a multiplication operator. Therefore, if one identifies $\Sym^m(\C^d)$ with the image of $V_m$, then $\Op_m[f]$ becomes the compression of a multiplication operator, also called a Toeplitz operator.

Equally important is the Husimi (or symbol) map $\Hus_m$:
\begin{align}
    \Hus_m[T](z) = \bra{z}T \ket{z} = \Tr(T \ket{z} \bra{z}) ,
\end{align}
which sends operators in $\End(\Sym^m(\C^d))$ to smooth functions on $\CP^{d-1}$. $\Hus_m$ is, up to an overall normalization, the adjoint of $\Op_m$.

The following properties of $\Op_m, \Hus_m$ are easy to verify, mostly using Schur's lemma.

\begin{lemma} \label{lem:oph_elementary}
Let $m\in\N_0$. The maps $\Op_m$, $\Hus_m$ satisfy the following properties.
\begin{enumerate}[label=\arabic*)]
    \item $\Op_m$ takes positive functions to positive operators, and $\Hus_m$ takes positive operators to positive functions. 
    \item We have $\Op_m[\overline f] = \Op_m[f]^*$ and $\Hus_m[T^*] = \overline{\Hus_m[f]}$.
    \item We have $\Op_m[1]=1$, $\Hus_m[1]=1$, and
    \begin{subequations}
    \begin{align}
      \frac{1}{d_m}  \Tr(\Op_m[f]) &= \int_{\CP^{d-1}} f \dz,\\
      \int_{\CP^{d-1}} \Hus_m[T] \dz &= \frac{1}{d_m} \Tr (T).
    \end{align}
    \end{subequations}
\end{enumerate}
\end{lemma}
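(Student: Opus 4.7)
The plan is to establish all three items directly from the defining formulas together with one auxiliary identity, namely the resolution of the identity $d_m \int_{\CP^{d-1}} \ket{z}\bra{z} \dz = 1$, which is the single non-trivial point and should be proved first.

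For the resolution of the identity, I would observe that the operator $P := d_m \int_{\CP^{d-1}} \ket{z}\bra{z} \dz \in \End(\Sym^m(\C^d))$ is well-defined (the integrand is continuous and the projective space has finite volume). Since the set of coherent states $\{ \ket{z}\bra{z} \}_{z \in \CP^{d-1}}$ is the $\SU(d)$-orbit of the highest-weight projection and the Riemannian measure on $\CP^{d-1}$ is invariant, $P$ commutes with the representation of $\SU(d)$ on $\Sym^m(\C^d)$. By irreducibility of $\Sym^m(\C^d)$ and Schur's lemma, $P$ is a scalar multiple of the identity. The scalar is determined by taking the trace: $\Tr(P) = d_m \int_{\CP^{d-1}} \Tr(\ket{z}\bra{z}) \dz = d_m$, where we used $\Tr(\ket{z}\bra{z})=1$ and the normalization of the measure to total mass $1$. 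Hence $P = 1$.

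Granting this, the three items unfold routinely. For item 1), the integrand $f(z) \ket{z}\bra{z}$ is a positive operator whenever $f(z) \geq 0$, so $\Op_m[f] \geq 0$; dually, $\Hus_m[T](z) = \bra{z} T \ket{z} \geq 0$ when $T \geq 0$ since it is a diagonal matrix element of a positive operator. For item 2), self-adjointness of $\ket{z}\bra{z}$ together with conjugate-linearity of the adjoint gives $\Op_m[\overline f] = d_m \int \overline{f(z)} \ket{z}\bra{z} \dz = \Op_m[f]^*$, and $\Hus_m[T^*](z) = \bra{z} T^* \ket{z} = \overline{\bra{z} T \ket{z}} = \overline{\Hus_m[T](z)}$. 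For item 3), the identities $\Op_m[1] = 1$ and $\Hus_m[1](z) = \bra{z} 1 \ket{z} = 1$ follow respectively from the resolution of the identity and from $\ket{z}$ being a unit vector. Taking the trace of $\Op_m[f]$ and using $\Tr(\ket{z}\bra{z}) = 1$ gives $\Tr(\Op_m[f]) = d_m \int_{\CP^{d-1}} f \dz$. Taking the integral of $\Hus_m[T]$ and exchanging trace with integral (justified because $\End(\Sym^m(\C^d))$ is finite-dimensional and the integrand is continuous) yields $\int_{\CP^{d-1}} \Hus_m[T] \dz = \Tr\bigl( T \, d_m^{-1} \cdot d_m \int \ket{z}\bra{z} \dz \bigr)/d_m = d_m^{-1} \Tr(T)$.

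I do not expect any serious obstacle: everything reduces to the Schur-lemma argument for $P = 1$, which is standard for irreducible group orbits of rank-one projections, and the rest is a matter of carefully unwinding the definitions of $\Op_m$ and $\Hus_m$.
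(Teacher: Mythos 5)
Your proof is correct and follows exactly the route the paper hints at — the paper states only that the properties are "easy to verify, mostly using Schur's lemma," and your Schur-lemma derivation of the resolution of the identity $d_m \int_{\CP^{d-1}} \ket{z}\bra{z}\dz = 1$, followed by routine unwinding of the definitions, is precisely that argument. No gaps.
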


Let us discuss some spectral properties of $\Op_m, \Hus_m$. Firstly, we have the celebrated Berezin-Lieb inequalities \cite{lieb_classical_1973, berezin_general_1975, simon_classical_1980}. For any convex function $\varphi$, real-valued~$f$, and self-adjoint $T$, one has the bounds:
\begin{subequations}
\begin{align}\label{eq:berezin_lieb_inequalities}
     \frac{1}{d_m} & \Tr[\varphi(\Op_m[f])] \leq \int_{\CP^{d-1}} \varphi(f)\dz, \\
     \int_{\CP^{d-1}} \varphi(\Hus_m[T]) \dz \leq  \frac{1}{d_m} & \Tr[\varphi(T)].
\end{align}
\end{subequations}

It is useful to reformulate these inequalities in the language of majorization. For a~measurable, real-valued function $f$ on a finite measure space $(X,\mu)$ its (signed) decreasing rearrangement $f^*$ is the unique decreasing, right-continuous function on $[0,\mu(X))$ that is equimeasurable with $f$, that is:
\begin{align}
    \mu(\iset{f(x) \in I}) = \abs{\iset{f^*(t)\in I}} \quad \textup{for any closed, bounded interval }I.
\end{align}
If $\mu(X)=\infty$, one can define $f^*$ analogously for non-negative, integrable functions \cite{chongExtensionsTheoremHardy1974}.

One can check that $f^*$ is given by the formula:
\begin{align}
        f^*(t) = \inf\iset{s \in \R \colon \mu(\iset{f(x)>s})\leq t}.
\end{align}
We remark that this definition of $f^*$ is slightly non-standard. It is more common (\cite{alvinoOptimizationProblemsPrescribed1989, grafakosClassicalFourierAnalysis2014,bennettInterpolationOperators1988}) to define $f^*$ as the unique decreasing, right-continuous function on $[0,\mu(X))$ that is equimeasurable with $\abs{f}$. In other words, $f^*$ is what we denote by $\abs{f}^*$. 

Let $f$ and $g$ be real-valued measurable functions on $(X,\mu)$ and $(X',\mu')$ with integrable positive parts. Assume that $\mu(X)=\mu'(X') <\infty$. One says that $g$ weakly majorizes $f$, written $f\prec_w g$, if for all $s\in [0,\mu(X))$:
\begin{align}
\int_{0}^s{} f^*(t)\dt &\leq \int_0^s g^*(t)\dt,\label{eq:def_majorization_inequality}
\shortintertext{where finiteness of the integrals is ensured by the assumptions on $f,g$. If $f\prec_w g$ and:}
\quad\int_{0}^{\mu(X)}f^*(t)\dt &= \int_{0}^{\mu(X)}g^*(t)\dt,\label{eq:def_majorization_equality}
\end{align}
then one says that $g$ majorizes $f$, written $f\prec_w g$.

We remark that this notion of majorization is the generalization of a corresponding notion for vectors in $\mathbb R^n$ \cite[Section 2.18]{hardyInequalities1923}, for which the integrals are replaced by finite sums.

The link between majorization and the Berezin-Lieb inequalities is provided by the following version of Karamata's theorem, see \cite[Theorem 2.5]{chongExtensionsTheoremHardy1974}.
\begin{proposition}\label{prop:karamata_for_functions}
Let $f\in L^1(X, \mu)$, and $g\in L^1(X',\mu')$ be real-valued functions, and assume that $\mu(X)=\mu(X')<\infty$. Then $f\prec g$ if and only if for any convex function $\varphi$:
\begin{align}
    \int_X \varphi(f)\dx \leq \int_X \varphi(g)\dx.
    \label{eq:majorization_Jensen}
\end{align}
Under the same assumptions, $f\prec_w g$ if and only if 
\eqref{eq:majorization_Jensen} holds for every convex function $\varphi\colon \R\to \R_{\geq 0}$ satisfying $\lim_{x \to - \infty} \varphi(x)=0$.
\end{proposition}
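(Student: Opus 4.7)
The plan is to reduce all statements to the special family of convex functions $\varphi_s(x) = (x-s)_+$ indexed by $s \in \R$, and to combine this with the integral representation of a general convex function in terms of such elementary building blocks. The key one-dimensional observation is that for a decreasing function $h$ on $[0, L]$ with $L = \mu(X)$, the integrand $h(t) - s$ is itself decreasing in $t$, so
\begin{equation*}
    \int_0^L (h(t) - s)_+ \, dt = \sup_{T \in [0,L]} \left[ \int_0^T h(t) \, dt - sT \right].
\end{equation*}
Applied to $h = f^*$ and $h = g^*$, combined with equimeasurability $\int_X \varphi(f) \, d\mu = \int_0^L \varphi(f^*) \, dt$, weak majorization immediately yields the family of bounds
\begin{equation*}
    \int_X (f - s)_+ \, d\mu \leq \int_{X'} (g - s)_+ \, d\mu', \qquad s \in \R.
\end{equation*}
In the full majorization case, the identity $(s - x)_+ = (x - s)_+ - (x - s)$ combined with $\int f \, d\mu = \int g \, d\mu'$ also gives the symmetric bound for $(s-f)_+$ and $(s-g)_+$.

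For the forward direction of part 2, I would observe that any convex $\varphi \geq 0$ with $\lim_{x \to -\infty}\varphi(x) = 0$ is automatically non-decreasing (a strictly negative left derivative at any point would propagate leftward and force $\varphi(-\infty) = +\infty$), and hence admits the representation $\varphi(x) = \int_\R (x - s)_+ \, d\nu(s)$, where $\nu$ is the distributional derivative of $\varphi'_-$. Fubini then gives
\begin{equation*}
    \int_X \varphi(f) \, d\mu = \int_\R \int_X (f-s)_+ \, d\mu \, d\nu(s) \leq \int_\R \int_{X'} (g-s)_+ \, d\mu' \, d\nu(s) = \int_{X'} \varphi(g) \, d\mu'.
\end{equation*}
For part 1 I would split a general convex $\varphi$ at some base point $a$ as $\varphi = (\textup{affine}) + \varphi^R + \varphi^L$, where $\varphi^R, \varphi^L \geq 0$ are convex, supported on $[a, \infty)$ and $(-\infty, a]$ respectively, and match at $a$ with vanishing value and one-sided derivative. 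Each half then has an integral representation using $(x-s)_+$ or $(s-x)_+$, and the affine contribution cancels thanks to $\int f = \int g$ and $\mu(X) = \mu'(X')$. The main obstacle is precisely this decomposition: one must match boundary conditions at $a$ so that both halves are genuinely convex and have the tail behaviour required for the elementary integral representation to hold.

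For the converse implications, I would choose specific test functions. The pair $\varphi(x) = \pm x$ yields $\int f \, d\mu = \int g \, d\mu'$ in the majorization case, and $\varphi(x) = (x-s)_+$ yields the family $\int (f-s)_+ \, d\mu \leq \int (g-s)_+ \, d\mu'$ for every $s \in \R$. To pass from this to $\int_0^T f^* \leq \int_0^T g^*$ I would invoke Legendre--Fenchel duality between the concave function $F(T) := \int_0^T f^*(t) \, dt$ (concave because $f^*$ is decreasing) and its conjugate $F^\vee(s) := \sup_T [F(T) - sT]$, which by the one-dimensional observation of the first paragraph equals $\int_X (f-s)_+ \, d\mu$. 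The hypothesis $F^\vee \leq G^\vee$, together with the biconjugacy formula $F(T) = \inf_s [F^\vee(s) + sT]$, then directly gives $F \leq G$, which is exactly weak majorization, and combining with $F(L) = G(L)$ upgrades this to majorization in part 1.
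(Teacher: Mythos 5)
The paper itself offers no proof of this proposition: it is stated and attributed to \cite[Theorem 2.5]{chongExtensionsTheoremHardy1974}, so there is no in-house argument to compare against. Your proposal is therefore evaluated on its own terms, and it is essentially correct and self-contained.

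Your overall strategy — reduce everything to the angle functions $\varphi_s(x)=(x-s)_+$, use the layer-cake representation of a convex $\varphi$ as a superposition of $\varphi_s$'s, and recover the partial-sum inequalities from the family $\int(f-s)_+\,d\mu\leq\int(g-s)_+\,d\mu'$ via Legendre--Fenchel duality — is sound. The identity $\int_0^L(h(t)-s)_+\,dt=\sup_{T\in[0,L]}\bigl[\int_0^T h-sT\bigr]$ for decreasing $h$, which you use in both directions, is correct, and the observation that the hypotheses in part 2 force $\varphi$ to be non-decreasing (a negative one-sided slope anywhere would drive $\varphi(-\infty)=+\infty$) is exactly right. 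Tonelli applies in the Fubini step since all integrands are non-negative, and the split $\varphi=\text{affine}+\varphi^R+\varphi^L$ with matched value and slope at the anchor point $a$ does produce two convex halves with the required one-sided monotonicity, as you flag.

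Two small points you elide but should verify if you write this up in full. First, the representation $\varphi(x)=\int_\R(x-s)_+\,d\nu(s)$ with $\nu=d\varphi'$ requires showing that the affine remainder $\varphi(a)+\varphi'_+(a)(x-a)$ vanishes as $a\to-\infty$; the term $\varphi(a)\to0$ by hypothesis, and $(x-a)\varphi'_+(a)\to0$ follows because a persistent lower bound $\varphi'_+(a)\gtrsim c/|a|$ would make $\int_a^{-M}\varphi'$ diverge logarithmically, contradicting $\varphi(-\infty)=0$. Second, the biconjugacy $F(T)=\inf_s[F^\vee(s)+sT]$ needs $F$ upper semicontinuous on $[0,L]$ (extended by $-\infty$ outside); this is guaranteed here because $f^*\in L^1[0,L)$ makes $F$ absolutely continuous. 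With those checks, the argument is complete. Your use of Legendre duality to pass from the family of single-$s$ inequalities back to the partial-integral inequalities is a clean packaging of the more pedestrian route (fix $T$, test with $s=g^*(T)$, and compare) and gives the same result.
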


Let us make the connection with Berezin-Lieb inequalities explicit. If $T$ is a self-adjoint operator on a finite-dimensional Hilbert space $\irrep$, the sequence of its eigenvalues $\iset{\lambda_i(T)}$ can be regarded as the function $\Lambda(T)$ on the set $\iset{1,\ldots, \dim\irrep}$, given by 
\begin{equation}
 \Lambda(T)(i) = \lambda_i(T), \qquad \lambda_1(T) \geq \lambda_2(T) \geq \dots   .
\end{equation}
Equipping $\iset{1,\ldots, \dim\irrep}$ with the counting measure normalized to have total mass $1$, one finds that:
\begin{align}
    \Lambda(T)^* = \sum_{i=1}^{\dim\irrep} \lambda_{i}(T)\mathbbm{1}_{[(i-1)/\dim\irrep,i/\dim\irrep)}.
\end{align}
For later use, we mention that for a pair of self-adjoint operators $T,S$ acting on vector spaces of the same dimension, majorization $\Lambda(T) \prec \Lambda(S)$, often denoted by $T \prec S$, is~equivalent to the eigenvalue inequalities
\begin{equation}
    \sum_{i=1}^k \lambda_i(T) \leq \sum_{i=1}^k \lambda_i(S),
\end{equation}
with equality for $k = \dim \irrep$.

In light of the discussion above, the Berezin-Lieb inequalities are equivalent to the majorization statements:
\begin{align}
    \Hus_m[T] \prec \Lambda(T),\qquad \Lambda(\Op_m[f])\prec f, \label{eq:HusOp_majorization}
\end{align}
for any real-valued integrable $f$ and self-adjoint $T$. More explicitly, denoting by $\floor{ x}  $ and $\iset{x}=x-\lfloor x \rfloor$ the integer and the fractional part of $x \in \mathbb R$, for every $s \in [0,1)$ we have:
\begin{subequations}
\begin{align}
    \int_{0}^s\Hus_m[T]^*\dt \leq \frac{1}{d_m}\sum_{i=1}^{\lfloor sd_m\rfloor}\lambda_i(T) + \frac{\iset{sd_m}}{d_m}\lambda_{\lfloor sd_m\rfloor+1}(T) ,\\
    \frac{1}{d_m}\sum_{i=1}^{\lfloor sd_m\rfloor}\lambda_i(\Op_m[T]) + \frac{\iset{sd_m}}{d_m}\lambda_{\lfloor sd_m\rfloor+1}(\Op_m[T]) \leq \int_0^s f^*\dt,
    \end{align}
\end{subequations}
with equalities for $s=1$.

Let us present a direct proof of the majorization statements \eqref{eq:HusOp_majorization}.

\begin{proof}[Proof of \eqref{eq:HusOp_majorization}]
Equalities for $s=1$ are included in the elementary \Cref{lem:oph_elementary}. We verify weak majorization.

Let $f\in L^1(\CP^{d-1})$ be real valued. By the variational principle:
\begin{align}\label{eq:first_bl_inequality_proof}
    \int_{0}^s \Lambda(\Op_m[f])^* dt = \frac{1}{d_m}\sup_{\substack{0\leq S \leq 1\\ \Tr S \leq s d_m}} \Tr(\Op_m[f]S) = \sup_{\substack{0\leq S\leq 1\\ \Tr S\leq s d_m}} \int_{\CP^{d-1}} f \Hus_m[S]\dz.
\end{align}    
The Hardy-Littlewood inequality (see \cite[Theorem 4.2]{chongSpectralInequalitiesInvolving1982}) gives:
\begin{align}
    \sup_{\substack{0\leq S\leq 1\\ \Tr S\leq s d_m}} \int_{\CP^{d-1}} f \Hus_m[S]\dz \leq \sup_{\substack{0\leq S\leq 1\\ \Tr S\leq s d_m}} \int_{0}^1 f^* \Hus_m[S]^*\dt.
\end{align}
One can easily check that $\int \Hus_m[S] = \frac{1}{d_m}\Tr[S]\leq s$, and $0\leq \Hus_m[S]\leq 1$, so that the integral is upper bounded by the integral in which $\Hus_m[S]^*$ is replaced by $\mathbbm{1}_{[0,s]}$. This gives:
\begin{align}
\int_{0}^s \Lambda(\Op_m[f])^* dt\leq \int_0^s f^*\dt,    \label{eq:last_bl_inequality_proof}
\end{align}
which is the desired weak majorization result.

For the other direction, let $T\in \End(\Sym^m(\C^d))$. We see that, for $0\leq s < 1$:
\begin{align}
    \int_0^s \Hus_m[T]^* \dt = \sup_{\substack{\abs{A}=s}} \int_{A} \Hus_m[T]\dz = \frac{1}{d_m}\sup_{{\abs{A}=s}} \Tr(\Op_m[\mathbbm{1}_A] T).
\end{align}
We get a larger quantity if we maximize over a larger set of positive operators:
\begin{align}
    \frac{1}{d_m}\sup_{{\abs{A}=s}} \Tr(\Op_m[\mathbbm{1}_A] T)\leq \frac{1}{d_m}\sup_{\substack{0\leq S \leq 1\\\Tr S\leq sd_m} } \Tr(S T) = \int_{0}^s T^* \dt,
\end{align}
where the equality follows from the variational principle.
\end{proof}

Majorization results for eigenvalues lead to the following bounds on singular values, valid also for non-hermitian operators. 

\begin{lemma}\label{lemma:weak_majorization_bl_inequalities}
    Let $f\in L^1(\CP^{d-1})$ and $T\in \End(\Sym^m(\C^d))$. Then:
    \begin{align}\label{eq:weak_majorization_bl_inequalities}
        \abs{\Hus_m[T]}\prec_w \Lambda(\abs{T}),\qquad \Lambda(\abs{\Op_m[f]})\prec_w \abs{f},
    \end{align}
    where $\abs{T}=\sqrt{T^*T}$ is the absolute value of $T$.
\end{lemma}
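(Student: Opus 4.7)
The plan is to reduce each weak majorization to the self-adjoint version \eqref{eq:HusOp_majorization} via a variational characterization of $\int_0^s(\cdot)^*\dt$ together with a Cauchy-Schwarz-type estimate on coherent-state matrix elements.

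For the first claim $\abs{\Hus_m[T]}\prec_w\Lambda(\abs T)$, I would start with the polar decomposition $T=U\abs T$. Splitting $\abs T=\abs T^{1/2}\abs T^{1/2}$ and applying Cauchy-Schwarz in the coherent-state matrix element yields the pointwise bound
\begin{equation*}
    \abs{\bra{z}T\ket{z}}^2\leq \bra{z}U\abs T U^*\ket{z}\,\bra{z}\abs T\ket{z}=\Hus_m[\abs{T^*}](z)\,\Hus_m[\abs T](z),
\end{equation*}
using $U\abs T U^*=\abs{T^*}$. For any measurable $A\subseteq\CP^{d-1}$ with $\abs A\leq s$, Cauchy-Schwarz for integrals then gives $\int_A\abs{\Hus_m[T]}\dz\leq\bigl(\int_A\Hus_m[\abs T]\dz\bigr)^{1/2}\bigl(\int_A\Hus_m[\abs{T^*}]\dz\bigr)^{1/2}$. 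Each factor is at most $\int_0^s\Lambda(\abs T)^*(t)\dt$ by the elementary bound $\int_A h\dz\leq\int_0^{\abs A}h^*\dt$ followed by the self-adjoint majorization \eqref{eq:HusOp_majorization} (using that $\Lambda(\abs T)=\Lambda(\abs{T^*})$). Taking the supremum over $A$ of measure $s$ closes this case.

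For the second claim $\Lambda(\abs{\Op_m[f]})\prec_w\abs f$, I would use the Ky Fan variational identity
\begin{equation*}
    \int_0^s\Lambda(\abs{\Op_m[f]})^*\dt=\sup_B\abs{\int_{\CP^{d-1}}f(z)\,\Hus_m[B](z)\dz},
\end{equation*}
where $B\in\End(\Sym^m(\C^d))$ ranges over operators with $\norm B\leq 1$ and $\Tr\abs B\leq sd_m$; this follows by combining $\int_0^s\Lambda(\abs{\Op_m[f]})^*\dt=\tfrac{1}{d_m}\sup_B\abs{\Tr(B\,\Op_m[f])}$ (the Ky Fan norm representation of $\abs{\Op_m[f]}$) with the definition of $\Op_m$. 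I would then verify two properties of $h(z):=\abs{\Hus_m[B](z)}$: pointwise $h\leq 1$, since Cauchy-Schwarz gives $h(z)^2\leq\Hus_m[B^*B](z)\leq\norm B^2\leq 1$; and $\int_{\CP^{d-1}}h\dz\leq s$, obtained from the singular value decomposition $B=\sum_i\sigma_i\ket{u_i}\bra{v_i}$ together with the resolution of identity $d_m\int\ket{z}\bra{z}\dz=1$, which yields $\int\abs{\braket{z}{u_i}\braket{v_i}{z}}\dz\leq d_m^{-1}$ for each $i$ via Cauchy-Schwarz, summing to $d_m^{-1}\sum_i\sigma_i\leq s$.

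Combining these bounds on $h$ with Hardy-Littlewood reduces the task to the elementary rearrangement inequality: whenever $0\leq h^*\leq 1$ and $\int_0^1 h^*\dt\leq s$, one has $\int_0^1\abs f^*h^*\dt\leq\int_0^s\abs f^*\dt$, since shifting the mass of $h^*$ onto $[0,s]$, where the decreasing function $\abs f^*$ is largest, only increases the integral. Passing to the supremum over $B$ completes the argument. The main obstacle I anticipate is precisely the $L^1$ bound $\norm{\Hus_m[B]}_{L^1}\leq s$: it must exploit both $\norm B\leq 1$ and $\Tr\abs B\leq sd_m$ simultaneously, and the SVD combined with Cauchy-Schwarz on pairs of coherent-state matrix elements is what makes it accessible.
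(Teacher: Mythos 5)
Your proof is correct and takes essentially the same route as the paper: both deduce the first weak majorization from the self-adjoint case via polar decomposition and Cauchy--Schwarz (the paper splits the pointwise product bound with AM--GM and majorizes each resulting term, while you keep the geometric mean and apply Cauchy--Schwarz at the integral level over a set $A$), and both obtain the second via the Ky Fan variational characterization of $\int_0^s\Lambda(\abs{\Op_m[f]})^*\dt$ together with Hardy--Littlewood and the rearrangement step that replaces $\abs{\Hus_m[B]}^*$ by $\mathbbm{1}_{[0,s)}$. The one small divergence is that you prove $\int_{\CP^{d-1}}\abs{\Hus_m[B]}\dz\leq s$ directly from the SVD and the resolution of the identity, whereas the paper gets it for free by invoking the already-established first half of the lemma; both are fine, the paper's being marginally shorter.
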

\begin{proof}

Let $T\in \End(\Sym^m(\C^d))$. By the polar decomposition, we write $T=U\abs{T}$, where $U$ is a unitary operator. It follows, by the Cauchy-Schwarz inequality, that:
\begin{align}
\abs{\Hus_m[T](z)} = \abs{\bra{z}T\ket{z}} &\leq \inorm{\sqrt{\abs{T}}U^*\ket{z}}^{1/2}\inorm{\sqrt{\abs{T}}\ket{z}}^{1/2}\\
&\leq \frac{1}{2}\bra{z}U\abs{T}U^*\ket{z}+ \frac{1}{2} \bra{z}\abs{T}\ket{z}.\nonumber   
\end{align}
By \eqref{eq:HusOp_majorization}, both terms in the last inequality are majorized by $\Lambda(U\abs{T}U^*) = \Lambda(\abs{T})$. Thus:
\begin{align}
    \abs{\Hus_m[T]}\prec_w \Lambda[\abs{T}].
\end{align}

The proof for $f\in L^1(\CP^{d-1})$ is essentially identical to the one given in \eqref{eq:first_bl_inequality_proof}-\eqref{eq:last_bl_inequality_proof}, with the following modifications. The supremum is replaced by:
\begin{align}
    \int_{0}^s \Lambda(\abs{\Op_m[f]})^* \dt = \frac{1}{d_m}\sup_{\substack{\abs{S}\leq 1\\ \Tr\abs{S}=sd_m}}\abs{\Tr[\Op_m[f]S]}.
\end{align}
One needs to apply the Hardy-Littlewood inequality to $\int \abs{f\Hus_m[S]}$, and then use that  $\int \abs{\Hus_m[S]} \leq \tfrac{1}{d_m}\Tr\abs{S}\leq s$, which holds because $\abs{\Hus_m[S]}\prec_w \Lambda(\abs{S})$.
\end{proof}

Let us now state bounds on $\Op_m, \Hus_m$ in Schatten and $L^p$ spaces. The argument we present avoids the use of interpolation methods employed e.g.\ in \cite{simon_classical_1980} to prove the same inequalities.
\begin{lemma}
Let $m\in\N_0$. We have the bounds
    \begin{subequations}
    \begin{align}
        \| \Op_m[f]  \|_{\mathcal S^p}   &\leq d_m^{  \frac1p} \| f \|_{L^p(\CP^{d-1})}, \label{eq:norm_bound_op}\\ 
        \| \Hus_m[T] \|_{L^p(\CP^{d-1})} &\leq d_m^{- \frac1p} \| T \|_{\mathcal S^p},   \label{eq:norm_bound_hus}
    \end{align}
    \label{eq:norm_bound}
    \end{subequations}
    where $\| \cdot \|_{\mathcal S^p}$ is the $p$th Schatten norm on operators on $\Sym^m(\C^d)$.
\end{lemma}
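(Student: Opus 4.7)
The plan is to reduce these inequalities directly to the weak majorization results of \Cref{lemma:weak_majorization_bl_inequalities}, thereby avoiding interpolation altogether.

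First, I would observe that since both the normalized Riemannian measure on $\CP^{d-1}$ and the normalized counting measure on $\{1,\ldots,d_m\}$ have total mass $1$, the hypotheses of \Cref{prop:karamata_for_functions} are met for the pairs of functions appearing in \eqref{eq:weak_majorization_bl_inequalities}. The key idea is to feed into that proposition the convex, non-negative function $\varphi(x) = x_+^p = (\max\{x,0\})^p$ for $1 \leq p < \infty$, which satisfies $\lim_{x \to -\infty} \varphi(x) = 0$ and is therefore admissible as a test function for \emph{weak} majorization. Since the relevant functions are already non-negative, $\varphi$ simply computes their $p$-th power.

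The second step is then essentially a bookkeeping exercise. Applied to $\Lambda(|\Op_m[f]|) \prec_w |f|$, the function $\varphi$ produces $\frac{1}{d_m} \|\Op_m[f]\|_{\mathcal S^p}^p$ on the operator side and $\|f\|_{L^p}^p$ on the function side; taking $p$-th roots yields \eqref{eq:norm_bound_op}. An entirely parallel computation applied to $|\Hus_m[T]| \prec_w \Lambda(|T|)$ produces $\|\Hus_m[T]\|_{L^p}^p \leq \frac{1}{d_m} \|T\|_{\mathcal S^p}^p$ and hence \eqref{eq:norm_bound_hus}.

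The endpoint $p = \infty$ falls outside the scope of this method and must be handled separately, but here the bounds are immediate rather than obstacles: using the Toeplitz form $\Op_m[f] = V_m^* f V_m^{\vphantom{*}}$ from \eqref{eq:Op_Toeplitz} together with the fact that $V_m$ is an isometry gives $\|\Op_m[f]\| \leq \|f\|_{L^\infty}$, while Cauchy--Schwarz yields $|\Hus_m[T](z)| = |\bra{z} T \ket{z}| \leq \|T\|$. I do not anticipate any genuine obstacle; the only mild subtlety is verifying that $x_+^p$ meets the hypothesis $\lim_{x \to -\infty} \varphi(x) = 0$ required by the weak (rather than ordinary) majorization version of \Cref{prop:karamata_for_functions}, which is precisely why one cannot just use the simpler $\varphi(x) = |x|^p$.
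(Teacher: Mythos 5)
Your proof is correct and matches the paper's approach almost exactly: for $p<\infty$ the paper also derives the bounds by combining \Cref{lemma:weak_majorization_bl_inequalities} with \Cref{prop:karamata_for_functions}, and your choice of test function $x_+^p$ is in fact slightly more careful than the paper's terse "$x\mapsto x^p$ is convex" (which needs the $x_+$ to be admissible for weak majorization). The only divergence is the endpoint $p=\infty$, where the paper passes to the limit $p\to\infty$ while you give a direct argument via $\Op_m[f]=V_m^* f V_m$ and Cauchy--Schwarz; both are valid and about equally short.
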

\begin{proof}
  The norm bounds for $p<\infty$ follow from \Cref{prop:karamata_for_functions} and \Cref{lemma:weak_majorization_bl_inequalities} because the function $x \mapsto x^p$ is convex. The case $p=\infty$ can be obtained by taking a limit.
\end{proof}

We remark that the norms of spaces $L^p(\CP^{d-1})$ and $S^p$ appearing in the norm bounds \eqref{eq:norm_bound} can be represented by the same function norm $\rho_p(\cdot) = \| \cdot \|_{L^p([0,1))}$ on $[0,1)$:
\begin{subequations}
\begin{align}
    \norm{f}_{L^p(\CP^{d-1})}&=\rho_p(\abs{f}^*) = \left(\int_{0}^1 (\abs{f}^*)^p\dx\right)^{1/p},\\
    d_m^{-\frac1p}\norm{T}_{\mathcal{S}^p} &= \rho_p(\Lambda(\abs{T})^*) = \left(\int_0^1 (\Lambda(\abs{T})^*)^p\dx\right)^{1/p}.
\end{align} 
\end{subequations}
Similar bounds can be obtained for more general rearrangement invariant norms $\rho$ on functions \cite{luxemburgRearrangementInvariantBanach1967} on $\CP^{d-1}$, and $\rho'$ on operators on $\End(\Sym^m(\C^d))$, if they both can be represented by the same function norm $\rho''$ on $[0,1)$:
\begin{align}
\rho(f) = \rho''(|f|^*), \quad \rho'(T) = \rho''(\Lambda(|T|)^*).
\end{align}
See for example \cite[Theorem 4.6]{bennettInterpolationOperators1988} for a reference. Examples of such norm functionals include the Lorentz maximal norms $\norm{}^*_{p,q}$ for $1<p<\infty$.
\section{The Berezin transformation} \label{sec:berezin}

For every $m \in \N_0$ we consider the \textit{Berezin transformation}
\begin{equation}
    \Ber_m = \Hus_m \Op_m,
\end{equation}
which is a linear map from distributions to smooth functions on $\CP^{d-1}$. It satisfies 
\begin{equation}
    \Ber_m[1] =1, \qquad \int_{\CP^{d-1}} \Ber_m[f]\dz = \int_{\CP^{d-1}} f\dz,
\end{equation}
preserves positivity, and is a positive operator on $L^2(\CP^{d-1})$. It is given by an explicit integral kernel:
\begin{equation}
    \Ber_m[f](z) = d_m \int_{\CP^{d-1}} |\braket{z}{w}|^{2m} f(w) \dw .
    \label{eq:Ber_kernel}
\end{equation}

\begin{proposition}
For every $m\in \N_0$:
\begin{align}
    \Ber_m = \sum_{n=0}^m \frac{m!(m+d-1)!}{(m-n)!(m+d-1+n)!}\Pi_{n,n},
    \label{eq:Berq_spectral_decomp}
 \end{align}
where $\Pi_{n,n}$ denotes the orthogonal projection onto $\irrep_{n,n}$ in $L^2(\CP^{d-1})$.
\end{proposition}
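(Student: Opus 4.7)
The starting point is Schur's lemma. Since $\Op_m$ and $\Hus_m$ are $\SU(d)$-equivariant by construction, so is $\Ber_m$. Combined with the decomposition of $L^2(\CP^{d-1})$ into pairwise non-isomorphic irreducibles from \Cref{prop:L2_proj_space_decomposition_and_casimir}, this forces $\Ber_m$ to act on each summand $\irrep_{n,n}$ as a scalar $c_n$, yielding an expansion $\Ber_m = \sum_{n=0}^\infty c_n\, \Pi_{n,n}$. The task is to compute each $c_n$.

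Vanishing for $n>m$ is automatic: $\Op_m$ takes values in $\End(\Sym^m(\C^d))\cong\bigoplus_{j=0}^m \irrep_{j,j}$, which has no copy of $\irrep_{n,n}$ for $n>m$, so equivariance gives $\Op_m|_{\irrep_{n,n}}=0$ and hence $c_n=0$. So the nontrivial task is to handle $n\leq m$.

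For $n\leq m$, I would test $\Ber_m$ on a convenient highest weight vector inside $\irrep_{n,n}$. By the embedding \eqref{eq:Hpq_to_sphere} applied to the highest-weight tensor $\ket{e^d}^{\otimes n}\otimes \ket{e_1}^{\otimes n}$, such a vector is represented by
\begin{equation}
f_n(u) = \bar u_1^n\, u_d^n,
\end{equation}
viewed as a function on $\S^{2d-1}$ which descends to $\CP^{d-1}$ (the total holomorphic and antiholomorphic degrees both equal $n$). Using the integral kernel representation \eqref{eq:Ber_kernel}, I choose unit vector representatives $y$ for $z$ and $u$ for $w$ on $\S^{2d-1}$, then expand
\begin{equation}
|\langle z|w\rangle|^{2m} = \Bigl(\sum_i \bar y_i u_i\Bigr)^m \Bigl(\sum_j y_j \bar u_j\Bigr)^m
\end{equation}
by the multinomial theorem, reducing the computation of $\Ber_m[f_n](y)$ to a double sum of sphere integrals of the form $\int u^K \bar u^{K'} f_n(u)\,d\sigma(u)$. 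Only pairs with matching multi-indices survive, and their values follow from the standard formula $\int_{\S^{2d-1}} \prod_i |u_i|^{2a_i}\, d\sigma = \frac{(d-1)!\prod_i a_i!}{(d-1+\sum_i a_i)!}$.

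After shifting indices by $\alpha = K - n\delta_1$, the product of multinomial coefficients and factorials collapses to $(m!)^2/\alpha!$, while the powers of $y_i$ factor as $\bar y_1^n y_d^n \prod_i |y_i|^{2\alpha_i} = f_n(y) \prod_i |y_i|^{2\alpha_i}$. The remaining sum over $|\alpha|=m-n$ is then evaluated using
\begin{equation}
\sum_{|\alpha|=m-n}\binom{m-n}{\alpha}\prod_i |y_i|^{2\alpha_i} = \Bigl(\sum_i |y_i|^2\Bigr)^{m-n} = 1,
\end{equation}
which is the key collapse. Bookkeeping the prefactors $d_m \cdot (d-1)!/(d-1+m+n)! \cdot (m!)^2/(m-n)!$ and using $d_m = (d+m-1)!/[m!(d-1)!]$ then yields $\Ber_m[f_n] = c_n f_n$ with $c_n = \frac{m!(m+d-1)!}{(m-n)!(m+n+d-1)!}$, as claimed. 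The main obstacle is the bookkeeping of the multi-index sums, though no obstacle is deep: the combinatorics collapses by a single application of the multinomial theorem exactly because $f_n$ is a highest weight vector.
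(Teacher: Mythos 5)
Your proposal is correct and follows essentially the same route as the paper: use equivariance and Schur's lemma to reduce to computing the scalar $c_n$ on each $\irrep_{n,n}$, then evaluate the integral kernel \eqref{eq:Ber_kernel} on a convenient (highest weight) function from the embedding \eqref{eq:Hpq_to_sphere}. The paper states this in one line; your version fills in the multinomial bookkeeping and the observation that $c_n=0$ for $n>m$ because $\Op_m$ lands in $\bigoplus_{j=0}^m\irrep_{j,j}$, both of which check out.
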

\begin{proof}
$B_m$ acts as a scalar in each $\irrep_{n,n}$. To compute the corresponding eigenvalue, we~pick a function $f$ in $\irrep_{n,n}$ using \Cref{prop:sphere_decomp} and compute the integral \eqref{eq:Ber_kernel}.
\end{proof}

We remark that $\Ber_m$ and $\Pi_{n,n}$ are operators with smooth integral kernels, and \eqref{eq:Berq_spectral_decomp} can be interpreted as equality of operators from the space of distributions to smooth functions.

\begin{corollary} \label{cor:Berezin_convergence}
For every $p \in [1,\infty)$ we have that $\Ber_m \to 1$ on $L^p(\CP^{d-1})$ strongly as $m \to \infty$, and $\Ber_m \to 1$ strongly on $C(\CP^{d-1})$. The convergence is monotone (in the sense of operator inequalities) on $L^2(\CP^{d-1})$. 
\end{corollary}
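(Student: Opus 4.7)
The plan is to extract everything from the spectral decomposition \eqref{eq:Berq_spectral_decomp} combined with the uniform norm bounds \eqref{eq:norm_bound}.

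\textbf{Step 1: analyze the eigenvalues.} Writing
\begin{equation*}
c_{m,n} := \frac{m!(m+d-1)!}{(m-n)!(m+d-1+n)!} = \prod_{j=0}^{n-1} \frac{m-j}{m+d+j}
\end{equation*}
for $0 \le n \le m$ and $c_{m,n}=0$ otherwise, I observe that each factor lies in $(0,1]$, is strictly increasing in $m$ (the derivative of $\tfrac{m-j}{m+d+j}$ with respect to $m$ equals $\tfrac{d+2j}{(m+d+j)^2}>0$), and tends to $1$ as $m \to \infty$. Hence, for every fixed $n$, the sequence $m \mapsto c_{m,n}$ is nondecreasing, bounded by $1$, and converges to $1$.

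\textbf{Step 2: monotonicity and strong convergence on $L^2$.} From \eqref{eq:Berq_spectral_decomp} we read off $\Ber_{m+1}-\Ber_m = \sum_{n=0}^\infty (c_{m+1,n}-c_{m,n}) \Pi_{n,n} \ge 0$, which gives the monotonicity in the operator order on $L^2(\CP^{d-1})$. Furthermore, for $f = \sum_n f_n$ with $f_n \in \irrep_{n,n}$ (\Cref{prop:L2_proj_space_decomposition_and_casimir}), Parseval gives
\begin{equation*}
\norm{\Ber_m[f]-f}_{L^2}^2 = \sum_{n=0}^\infty (1-c_{m,n}\mathbbm{1}_{n\leq m})^2 \norm{f_n}_{L^2}^2,
\end{equation*}
and the right-hand side tends to $0$ by dominated convergence (each summand is bounded by $\norm{f_n}_{L^2}^2$ and tends to $0$ pointwise in $n$).

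\textbf{Step 3: strong convergence on $C$ and on $L^p$ for $p<\infty$.} I note the uniform bound $\norm{\Ber_m}_{L^p\to L^p}\le 1$ for every $p\in [1,\infty]$, obtained by composing \eqref{eq:norm_bound_op} and \eqref{eq:norm_bound_hus}: $\norm{\Ber_m[f]}_{L^p} \leq d_m^{-1/p}\norm{\Op_m[f]}_{\mathcal S^p} \leq \norm{f}_{L^p}$. On the algebraic direct sum $\mathcal P := \bigoplus_n \irrep_{n,n}$ of eigenspaces, $\Ber_m$ acts diagonally by the scalars $c_{m,n}$, so for any finite sum $f = \sum_{n=0}^{N} f_n \in \mathcal P$ one has $\norm{\Ber_m[f]-f}_{L^\infty} \leq \max_{n \leq N}|1 - c_{m,n}|\sum_{n=0}^N \norm{f_n}_{L^\infty}\to 0$. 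By Stone–Weierstrass (or directly Peter–Weyl applied to the identification of $L^2(\CP^{d-1})$ with functions on $\SU(d)$), $\mathcal P$ is dense in $C(\CP^{d-1})$, and therefore dense in each $L^p(\CP^{d-1})$ with $p<\infty$. A standard three-$\varepsilon$ argument, combining this density with the uniform contraction bound $\norm{\Ber_m}_{X\to X}\leq 1$ (for $X=C(\CP^{d-1})$ or $X=L^p(\CP^{d-1})$), yields strong convergence $\Ber_m\to 1$ on both.

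\textbf{Expected obstacle.} There is no genuine obstacle; the result is a soft consequence of \eqref{eq:Berq_spectral_decomp} together with already-established uniform bounds. The only piece needing an actual (short) computation is the monotonicity of $c_{m,n}$ in $m$ for fixed $n$, which the product formula makes transparent.
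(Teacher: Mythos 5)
Your proposal is correct and takes essentially the same route as the paper: convergence of the eigenvalue coefficients $c_{m,n}$ to $1$, convergence on the dense subspace $\bigcup_N\bigoplus_{n\leq N}\irrep_{n,n}$, and monotonicity of the coefficients in $m$. You merely fill in details the paper leaves implicit — the product formula making monotonicity transparent, and the uniform contraction bound $\norm{\Ber_m}_{L^p\to L^p}\leq 1$ needed to pass from the dense subspace to all of $L^p$ or $C(\CP^{d-1})$.
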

\begin{proof}
Every coefficient in \eqref{eq:Berq_spectral_decomp} converges to $1$ as $m \to \infty$, so we have convergence on the dense subspace $\bigcup_{m=0}^\infty \bigoplus_{n=0}^m \mathcal H_{n,n}$. The convergence is monotone because the coefficients in \eqref{eq:Berq_spectral_decomp} are increasing with $m$.
\end{proof}

Let $m \in \N_0$. We consider the entire function 
\begin{subequations}
\begin{align}
    \Upsilon_m(z) &= \prod_{n=1}^\infty \left( 1 - \frac{z}{(m+n)(m+n+d-1)}  \right) \label{eq:Berezin_product} \\
    &= \frac{\Gamma(m+d) \Gamma(m+1)}{\Gamma \left( m + \frac{d+1}{2} + \sqrt{\left( \frac{d-1}{2} \right)^2 + z } \right) \Gamma \left( m + \frac{d+1}{2} - \sqrt{\left( \frac{d-1}{2} \right)^2 + z } \right)} \label{eq:lambdaq_Gammas} \\
    &=\frac{m!(m+d-1)!}{\pi}\frac{\sin\left(\pi\Big(m+\tfrac{d+1}{2}+\sqrt{\left( \frac{d-1}{2} \right)^2 + z }\Big)\right)}{{\displaystyle\prod_{j=1}^{2m+d}}\left(m+\tfrac{d+1}{2}-j-\sqrt{\left( \frac{d-1}{2} \right)^2 + z }\right)}. 
\end{align}
\label{eq:Upsilon_three_forms}
\end{subequations}
The equality of the first two lines can be verified using Weierstrass' product formula for the Gamma function (see also \cite[(4), p.5]{batemanHigherTranscendentalFunctions1953}). The equality of the last two lines follows from Euler's reflection formula and the functional equation satisfied by the Gamma function. We highlight that \eqref{eq:Berezin_product} is the product of affine linear functions of $z$ which vanish at the eigenvalues of $-\tfrac14 \Delta$, except the first $m+1$ eigenvalues (multiplicities ignored).

\begin{corollary}\label{corollary:ber_holomorphic_function_lp}
The Berezin transformation $\Ber_m$ acting on $L^2(\CP^{d-1})$ is expressed using the functional calculus of the Laplacian as:
\begin{equation}
    \Ber_m = \Upsilon_m( -\tfrac14 \Delta ).
\end{equation}
\end{corollary}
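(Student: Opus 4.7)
The plan is to prove the identity by comparing the spectral data of $\Ber_m$ obtained in \eqref{eq:Berq_spectral_decomp} with the spectrum of $-\tfrac14\Delta$ from \Cref{prop:L2_proj_space_decomposition_and_casimir}, and then recognizing the resulting sequence as the values of $\Upsilon_m$ at the eigenvalues.

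Concretely, by \Cref{prop:L2_proj_space_decomposition_and_casimir}, the operator $-\tfrac14\Delta$ acts as the scalar $n(n+d-1)$ on the subspace $\irrep_{n,n}\subset L^2(\CP^{d-1})$. Since both $\Ber_m$ and $\Upsilon_m(-\tfrac14\Delta)$ are defined via the functional calculus on the spectral decomposition $L^2(\CP^{d-1})\cong \bigoplus_{n=0}^\infty \irrep_{n,n}$, it suffices to verify, for every $n\in\N_0$, the scalar identity
\begin{equation}
    \Upsilon_m\!\bigl(n(n+d-1)\bigr) \;=\;
    \begin{cases}
        \dfrac{m!\,(m+d-1)!}{(m-n)!\,(m+d-1+n)!} & \text{if } 0\leq n\leq m,\\[4pt]
        0 & \text{if } n>m.
    \end{cases}
\end{equation}
The case $n>m$ is immediate: setting $k=n-m\geq 1$ makes the $k$th factor in the Weierstrass product \eqref{eq:Berezin_product} vanish.

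For $0\leq n\leq m$, the key algebraic step is the factorization
\begin{equation}
    1 - \frac{n(n+d-1)}{(m+k)(m+k+d-1)} \;=\; \frac{(m+k-n)(m+k+n+d-1)}{(m+k)(m+k+d-1)},
\end{equation}
which I would verify by expanding the numerator. Plugging this into the truncated product $\prod_{k=1}^{K}$ yields two telescoping ratios of Pochhammer symbols:
\begin{equation}
    \prod_{k=1}^{K} \frac{(m+k-n)(m+k+n+d-1)}{(m+k)(m+k+d-1)} = \frac{m!\,(m+d-1)!}{(m-n)!\,(m+d-1+n)!}\cdot \frac{(m+K-n)!\,(m+K+n+d-1)!}{(m+K)!\,(m+K+d-1)!}.
\end{equation}
The second factor tends to $1$ as $K\to\infty$ by the standard asymptotic $\Gamma(x+a)/\Gamma(x+b)\sim x^{a-b}$, since the combined exponent vanishes ($-n+n=0$). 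This gives the desired value, matching the eigenvalue in \eqref{eq:Berq_spectral_decomp}.

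There is no serious obstacle here; the only mild point to justify is applying $\Upsilon_m(-\tfrac14\Delta)$ via the spectral theorem on $L^2(\CP^{d-1})$, which is legitimate because $-\tfrac14\Delta$ is self-adjoint with discrete spectrum and $\Upsilon_m$ is entire, so $\Upsilon_m(-\tfrac14\Delta)$ acts coordinate-wise in the decomposition into the $\irrep_{n,n}$. Combined with the scalar identity above and the expansion \eqref{eq:Berq_spectral_decomp}, this yields $\Ber_m=\Upsilon_m(-\tfrac14\Delta)$ as operators on $L^2(\CP^{d-1})$.
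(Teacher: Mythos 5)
Your proposal is correct and takes essentially the same approach as the paper: both reduce the claim to verifying that $\Upsilon_m$ evaluated at the eigenvalues $n(n+d-1)$ of $-\tfrac14\Delta$ on $\irrep_{n,n}$ reproduces the coefficients in \eqref{eq:Berq_spectral_decomp}. The only minor difference is computational: you verify the scalar identity by telescoping the Weierstrass product \eqref{eq:Berezin_product}, whereas the Gamma-function form \eqref{eq:lambdaq_Gammas} makes the same evaluation immediate once one notes $\sqrt{(\tfrac{d-1}{2})^2 + n(n+d-1)} = n + \tfrac{d-1}{2}$.
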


We remark that since $\Upsilon_m$ is a holomorphic function, \Cref{corollary:ber_holomorphic_function_lp} allows to obtain $\Ber_m$ using complex integration along suitably chosen paths.


We denote the coefficients of the Taylor expansion of $\Upsilon_m(-z)$ at zero by $\upsilon_{m,n}$:
\begin{equation}    
\Upsilon_m(z) = \sum_{n=0}^\infty \upsilon_{m,n} (-z)^n.
\end{equation}
The coefficients $\upsilon_{m,n}$ are all positive and given by
\begin{equation}
    \upsilon_{m,n} = \sum_{\substack{I \subset \N \\ |I|=n}} \prod_{k \in I}  \frac{1}{(m+k)(m+d-1+k)}.
\end{equation}
One can obtain closed form expressions for these sums by differentiating \eqref{eq:lambdaq_Gammas} at zero and using the well-known properties of the Gamma function and the polygamma functions. The first few are:
\begin{subequations}
\begin{align}
    \upsilon_{m,0} =& 1, \\
    \upsilon_{m,1} =&  \frac{1}{d-1} \sum_{j=1}^{d-1} \frac{1}{m+j}, \\
    \upsilon_{m,2} =& \frac{1}{2} \upsilon_{m,1}^2 + \frac{1}{(d-1)^2} \upsilon_{m,1} + \frac{1}{(d-1)^2} \sum_{j=1}^m \frac{1}{j^2}  \\
    & + \frac{1}{(d-1)^2} \sum_{j=1}^{d-1} \frac{1}{(m+j)^2} - \frac{\pi^2}{6(d-1)^2}. \nonumber
\end{align}
\end{subequations}
The coefficient $\upsilon_{m,1}$ is rational, but $\upsilon_{m,2}$ is transcendental. One can verify that for every $n \geq 2$ the coefficient $\upsilon_{m,n}$ belongs to the subring of $\R$ generated over $\mathbb Q$ by $\zeta(2), \dots, \zeta(n)$, where $\zeta$ is the Riemann zeta function. 

We note the useful bound:
\begin{equation}
    \upsilon_{m,n} \leq \prod_{j=0}^{n-1} \upsilon_{m+j,1} \leq \frac{m!}{(m+n)!}.
\end{equation}
In particular, in the limit $m \to \infty$ with fixed $n$ we have $\upsilon_{m,n} = O(m^{-n})$.

\begin{theorem} \label{thm:berezin_expansion}
Let $N\in \N $. Suppose that $f \in L^1(\CP^{d-1})$ is such that $\Delta^j f \in L^1(\CP^{d-1})$ for $1 \leq j \leq N$. If $f$ is real-valued, then we have majorization:
\begin{equation}
    \Ber_m [f]-\sum_{n=0}^{N-1} \upsilon_{m,n} \left(\tfrac{1}{4}\Delta\right)^n f \prec \upsilon_{m,N} (-\tfrac14\Delta)^N f.
\end{equation}
For complex-valued $f$, we have the weak majorization:
\begin{equation}
     \left| \Ber_m [f]-\sum_{n=0}^{N-1} \upsilon_{m,n} \left(\tfrac{1}{4}\Delta\right)^n f \right| \prec_w \upsilon_{m,N} |(\tfrac14 \Delta)^N f|. 
\end{equation}
In particular, for every $p \in[1,\infty]$ we have the bound:
\begin{align}\label{eq:Ber_expansion_bound_functions}
    \norm{\Ber_m [f]-\sum_{n=0}^{N-1} \upsilon_{m,n} \left(\tfrac{1}{4}\Delta\right)^n f}_{L^p(\CP^{d-1})}\leq \upsilon_{m,N}\norm{\left(\tfrac{1}{4}\Delta\right)^{N} f}_{L^p(\CP^{d-1})}.
\end{align}    
\end{theorem}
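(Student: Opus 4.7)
The plan is to express the remainder $R_N[f] := \Ber_m[f] - \sum_{n=0}^{N-1}\upsilon_{m,n}(\tfrac14\Delta)^n f$ as a non-negatively weighted combination of higher-level Berezin transforms applied to $(\tfrac14\Delta)^N f$, and then invoke the Berezin--Lieb majorization inherited from \eqref{eq:HusOp_majorization} for each summand. Writing $\Ber_m = \sum_{n\geq 0}\upsilon_{m,n}(\tfrac14\Delta)^n$ (the Taylor expansion of $\Upsilon_m(-z)$ evaluated at $z = -\tfrac14\Delta$), the remainder equals $\sum_{n\geq N}\upsilon_{m,n}(\tfrac14\Delta)^n f$. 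The key combinatorial identity is
\[
\upsilon_{m,n+N} = \sum_{\substack{I\subset\N\\ |I|=N}}\Bigl(\prod_{k\in I}\frac{1}{(m+k)(m+k+d-1)}\Bigr)\,\upsilon_{m+\max I,\,n},
\]
which I would prove by partitioning each $(n+N)$-subset $S\subset\N$ into its first $N$ elements $I$ and the tail $J\subset\N_{>\max I}$, and noting that reindexing $J\mapsto J-\max I$ identifies the tail contribution with $\upsilon_{m+\max I,n}$. Substituting this into $R_N[f]$ and recognizing the inner series as a Berezin transform yields
\[
R_N[f] = \sum_{\substack{I\subset\N\\ |I|=N}} p_I\,\Ber_{m+\max I}\bigl[(\tfrac14\Delta)^N f\bigr],\qquad p_I := \prod_{k\in I}\frac{1}{(m+k)(m+k+d-1)},
\]
with $\sum_I p_I = \upsilon_{m,N}$ by definition.

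Next I would invoke Berezin--Lieb: the factorization $\Ber_{m'} = \Hus_{m'}\Op_{m'}$ combined with the two estimates in \eqref{eq:HusOp_majorization} gives $\Ber_{m'}[h]\prec h$ for every real $h$, for all $m'$. Since the set of functions majorized by a fixed $h$ is convex, the normalized sum $\sum_I(p_I/\upsilon_{m,N})\Ber_{m+\max I}[h] \prec h$, and rescaling by $\upsilon_{m,N}$ yields the full majorization statement with $h = (\tfrac14\Delta)^N f$. For complex-valued $f$ I would use the pointwise bound $|\Ber_{m'}[h]| \leq \Ber_{m'}[|h|]$, which follows from positivity of the integral kernel $d_{m'}|\braket{z}{w}|^{2m'}$, to reduce the argument to the real case applied to $|h|$; combining this with the previous step gives the weak majorization $|R_N[f]| \prec_w \upsilon_{m,N}|(\tfrac14\Delta)^N f|$. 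The $L^p$ estimate for $p\in[1,\infty]$ is then immediate from either majorization statement, because $\|\cdot\|_{L^p(\CP^{d-1})}$ is rearrangement-invariant.

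The main obstacle I expect is the combinatorial identity together with the Fubini-type interchange of summations used to pass from the Taylor tail to the weighted sum over Berezin transforms. The bounds $\upsilon_{m,n} = O(m^{-n})$ and $L^p$-contractivity of each $\Ber_{m'}$ make the relevant series absolutely convergent under the hypothesis $\Delta^j f\in L^1$ for $1\leq j\leq N$, so convergence is not the issue; rather, it is a question of carefully bookkeeping the reindexing at the stated minimal regularity. Once the identity is in place, Berezin--Lieb plus convexity of the majorization cone close the argument cleanly.
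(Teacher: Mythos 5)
Your proof is correct in substance and arrives at exactly the same key identity as the paper, namely
\begin{equation*}
\Ber_m[f]-\sum_{n=0}^{N-1}\upsilon_{m,n}\bigl(\tfrac14\Delta\bigr)^n f
= \sum_{\substack{I\subset\N\\|I|=N}}
\Bigl(\prod_{k\in I}\tfrac{1}{(m+k)(m+d-1+k)}\Bigr)\,
\Ber_{m+\max I}\bigl[(\tfrac14\Delta)^N f\bigr],
\end{equation*}
followed by the observation that the normalized weighted sum over $I$ is a majorization-reducing map applied to $(\tfrac14\Delta)^N f$, scaled by $\upsilon_{m,N}$. Where you differ from the paper is in how you reach this identity: you manipulate the full Taylor expansion $\Ber_m=\sum_{n\geq 0}\upsilon_{m,n}(\tfrac14\Delta)^n$ and use the coefficient identity $\upsilon_{m,n+N}=\sum_{|I|=N}p_I\,\upsilon_{m+\max I,n}$, whereas the paper derives the recursion $\Ber_m=(1+\tfrac{\Delta/4}{(m+1)(m+d)})\Ber_{m+1}$ on the space of distributions, integrates it once to get $\Ber_m[f]-f=\sum_{k\geq 1}\tfrac{1}{(m+k)(m+d-1+k)}\Ber_{m+k}[\tfrac14\Delta f]$, and then iterates $N$ times. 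The upshot is the same, and your combinatorial identity is correct (partition $S$ into its first $N$ elements $I$ and shift the tail by $\max I$). The one place where your argument, as stated, is weaker than the paper's is the first step: writing the remainder as $\sum_{n\geq N}\upsilon_{m,n}(\tfrac14\Delta)^n f$ requires $\Delta^n f$ to make sense as a function for all $n$, which is not guaranteed by the stated hypothesis $\Delta^j f\in L^1$ for $j\leq N$ only. You would need to first prove the identity for smooth $f$ and then extend by density in the relevant $L^1$-Sobolev norm, since both sides are continuous in $f,\Delta f,\dots,\Delta^N f$ in $L^1$. The paper's recursion sidesteps this entirely: each inductive step applies only $\tfrac14\Delta$ once, so only $N$ derivatives in $L^1$ are ever used. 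Your replacement of the paper's ``doubly stochastic kernel $\Rightarrow$ majorization'' lemma by ``Berezin--Lieb on each $\Ber_{m'}$ plus convexity of the majorization cone'' is a fine and essentially equivalent argument.
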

\begin{proof}
The following identities hold on the space of distributions:
\begin{equation}
    \Ber_m = \left( 1 + \frac{\tfrac14 \Delta}{(m+1)(m+d)} \right) \Ber_{m+1}, \qquad \Ber_m \Delta = \Delta \Ber_m.
\end{equation}
Hence by induction, for every $n \in \mathbb N$:
\begin{equation}
    \Ber_m - \Ber_{m+n} = \sum_{k=1}^n \frac{\Ber_{m+k}}{(m+k)(m+d-1+k)} \cdot \tfrac14\Delta .
\end{equation}
Let us take $f \in L^1(\CP^{d-1})$ such that $\Delta f\in L^1(\CP^{d-1})$. Corollary \ref{cor:Berezin_convergence} then gives, in the limit $n\to\infty$:
\begin{equation}
\label{eq:Berf_minus_f}
    \Ber_m[f] - f = \sum_{k=1}^\infty \frac{1}{(m+k)(m+d-1+k)}  \Ber_{m+k}[\tfrac14\Delta f].
\end{equation}
We highlight that the series converges absolutely in $L^1(\CP^{d-1})$. Each $\Ber_{m+k}$ is a doubly stochastic operator (see \cite{dayDecreasingRearrangementsDoubly1973}) and $\sum_{k=1}^\infty \frac{1}{(m+k)(m+d-1+k)} = \upsilon_{m,1}$, so 
\begin{equation}
    \Ber_m[f]-f = \upsilon_{m,1} T_1 [\tfrac14\Delta f ]
    \label{eq:ds_majorizer}
\end{equation}
for some doubly stochastic operator $T_1$. More precisely, $T_1$ has integral kernel $K_1(z,w)$ which is a non-negative measurable function satisfying
\begin{equation}
    \int K_1(z,w) \dz = \int K_1(z,w) \dw = 1.
\end{equation}
Hence, $\Ber_m[f] -f \prec \upsilon_{m,1} \tfrac14\Delta f$ if $f$ is real-valued, see e.g.\ \cite[Theorems~4.1~and~4.9]{dayDecreasingRearrangementsDoubly1973}. If~$f$ is complex-valued, \eqref{eq:ds_majorizer} implies
\begin{equation}
   | \Ber_m[f] - f | \leq \upsilon_{m,1} T_1[|\tfrac14\Delta f| ] \prec \upsilon_{m,1} |\tfrac14\Delta f|. 
\end{equation}

Now let us suppose that $f \in L^1(\CP^{d-1})$ is such that $\Delta^j f \in L^1(\CP^{d-1})$ for $j \leq N$. An inductive argument using \eqref{eq:Berf_minus_f} repeatedly gives:
\begin{align}\label{eq:induction_step_berezin_inversion}
        \Ber_m[f] - \sum_{n=0}^{N-1}\upsilon_{m,n} (\tfrac14\Delta)^n f&= \smashoperator{\sum_{\substack{I \subset \N \\ |I|  ={N}}}}\phantom{1,}\prod_{k\in I}\frac{1}{(m+k)(m+d-1+k)} \Ber_{m+\max I}[ (\tfrac14\Delta)^{N} f], \nonumber \\
        & = \upsilon_{m,N} T_N [(\tfrac14 \Delta)^N f], 
\end{align}
where $T_N$ is a doubly stochastic operator. We proceed as earlier for $N=1$.

The norm bounds follow from weak majorization, since $x\mapsto x^p$ is convex.
\end{proof}

Since remainder terms in \Cref{thm:berezin_expansion} are bounded by the subsequent expansion terms, the constants in our estimates are asymptotically optimal. We spell out a precise argument in \Cref{lemma:constants_asymptotically_optimal} below. We remark that the constants are not strictly optimal, at least for $p=2$, for which the operator norm is expressed in terms of the eigenvalues. The suboptimal step is to use that the doubly stochastic operator $T_N$ in \eqref{eq:induction_step_berezin_inversion} is a~contraction: while it is true that $\| T_N \|=1$, the norm of $T_N$ is slightly smaller when restricted to the orthogonal complement of constant functions (the image of the Laplace operator).

\begin{lemma}\label{lemma:constants_asymptotically_optimal}
    Let $N \in \N$, $m \in \N_0$ and $p \in [1,\infty]$. Suppose that $c >0$ is the optimal constant such that for every smooth function $f $ we have
    \begin{equation}
    \Big \| \Ber_m [f]-\sum_{n=0}^{N-1} \upsilon_{m,n} \left(\tfrac{1}{4}\Delta\right)^n f  \Big \|_{L^p(\CP^{d-1})} \leq c \| (\tfrac14 \Delta)^N f \|_{L^p(\CP^{d-1})}.
\label{eq:optimal_Ber_inequality}
    \end{equation}
    Then $c$ satisfies the bounds:
    \begin{equation}
\upsilon_{m,N}\left(1-\frac{d}{N+m+1}\right) \leq c \leq \upsilon_{m,N}.
    \end{equation}
\end{lemma}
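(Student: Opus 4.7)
The upper bound $c \leq \upsilon_{m,N}$ is immediate from the $L^p$ estimate \eqref{eq:Ber_expansion_bound_functions} in \Cref{thm:berezin_expansion}, so the entire task is to exhibit a nonzero test function $f$ realizing a ratio of $L^p$-norms that is at least $\upsilon_{m,N}\bigl(1-\tfrac{d}{N+m+1}\bigr)$. The plan is to take $f$ to be any nonzero element of the irreducible summand $\irrep_{1,1}$ in the decomposition of \Cref{prop:L2_proj_space_decomposition_and_casimir}; such $f$ is smooth, and both $\tfrac14 \Delta$ and $\Ber_m$ act on it as scalars. This spectral simplification will make the $L^p$-norm on both sides cancel, reducing the problem to an inequality between two explicit numerical quantities.

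Concretely, on $\irrep_{1,1}$ one has $\tfrac14 \Delta f = -\lambda_1 f$ with $\lambda_1 = d$, and $\Ber_m f = \Upsilon_m(d)f = \tfrac{m}{m+d} f$ by \Cref{corollary:ber_holomorphic_function_lp}. Substituting into \eqref{eq:optimal_Ber_inequality} yields
\[
c \;\geq\; \frac{|a_{1,N}|}{d^N}, \qquad a_{1,N} := \Upsilon_m(d) - \sum_{j=0}^{N-1}\upsilon_{m,j}(-d)^j .
\]
To evaluate $a_{1,N}$, I would recycle the exact identity derived in the proof of \Cref{thm:berezin_expansion} (before it is weakened to a doubly stochastic majorizer): on $\irrep_{1,1}$,
\[
a_{1,N} \;=\; (-d)^N \sum_{\substack{I \subset \N \\ |I| = N}} \frac{\Upsilon_{m+\max I}(d)}{\prod_{k \in I}(m+k)(m+d-1+k)} .
\]
Using the formula $\Upsilon_{m'}(d) = \frac{m'}{m'+d}$ (coming from \Cref{prop:L2_proj_space_decomposition_and_casimir} and the identification in \eqref{eq:Berq_spectral_decomp}), each term is positive, so the absolute value just removes the sign.

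Finally, since $m' \mapsto \tfrac{m'}{m'+d}$ is increasing and $\max I \geq N$ for every $I$ with $|I|=N$, pulling the minimum $\tfrac{m+N}{m+N+d}$ outside the sum gives
\[
\frac{|a_{1,N}|}{d^N} \;\geq\; \frac{m+N}{m+N+d}\sum_{|I|=N}\prod_{k\in I}\frac{1}{(m+k)(m+d-1+k)} \;=\; \frac{m+N}{m+N+d}\,\upsilon_{m,N}.
\]
Since $d \geq 1$ implies $m+N+d \geq m+N+1$, one has $\tfrac{m+N}{m+N+d} = 1-\tfrac{d}{m+N+d} \geq 1-\tfrac{d}{N+m+1}$, completing the lower bound. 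There is essentially no obstacle here: the only conceptual step is noticing that the exact identity already present in the proof of \Cref{thm:berezin_expansion} is, on a judiciously chosen eigenspace, numerically very close to $\upsilon_{m,N}$, which shows that the bound established there loses only a factor $\tfrac{d}{N+m+1}$ in the constant.
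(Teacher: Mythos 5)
Your proof is correct and rests on the same key idea as the paper: test on an eigenfunction of $\irrep_{1,1}$, the lowest nonzero eigenspace of $-\tfrac14\Delta$, which turns the $L^p$ inequality into a numerical one. The paper's finish is slightly different---rather than evaluating the remainder coefficient $a_{1,N}$ exactly through the telescoping identity \eqref{eq:induction_step_berezin_inversion}, it adds and subtracts the $N$th term, invokes \Cref{thm:berezin_expansion} at order $N+1$, and uses $\upsilon_{m,N+1}\leq\tfrac{1}{m+N+1}\upsilon_{m,N}$---but both routes use the same test function and reach the same bound, so the two proofs are essentially equivalent.
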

\begin{proof}
The upper bound on $c$ follows from \Cref{thm:berezin_expansion}. To prove the lower bound, recall that the lowest nonzero eigenvalue of $-\frac14 \Delta$ is $d$. Choose a corresponding eigenvector $f \neq 0$. The left hand side of inequality \eqref{eq:optimal_Ber_inequality} is lower bounded by:
\begin{align}
\upsilon_{m,N}\norm{(\tfrac{1}{4}\Delta)^{N+1} f}-\norm{\Ber_m[f]-\sum_{n=0}^{N}\upsilon_{m,n}(\tfrac{1}{4}\Delta)^n f} \geq \upsilon_{m,N}d^N - \upsilon_{m,N+1}d^{N+1}.
\end{align}
The result follows by comparing with the right hand side of \eqref{eq:optimal_Ber_inequality} and noting that:
\begin{align}
\upsilon_{m,N+1}\leq \frac{1}{m+N+1}\upsilon_{m,N}.
\end{align}
\end{proof}

We have finished the analysis of the Berezin transformations $\Ber_m = \Hus_m \Op_m$, which act on functions. As we saw, it was useful to compare $\Ber_m$ with $\Ber_{m'}$, $m \neq m'$. Now we want to analyze maps $\Op_m \Hus_m$, which act on $m$-dependent vector spaces of operators. Instead of comparing $\Op_m \Hus_m$ to $\Op_{m'} \Hus_{m'}$, it will be useful to consider the maps $\Upsilon_{m'}(\tfrac12 \Cas)$ with varying $m'$ but acting on the same vector space $\End(\Sym^m(\C^d))$. We~will also need a~majorization statement for $2$-positive maps.

\begin{lemma} \label{lem:2pos_majorization}
Let $\Phi\colon \End(\irrep)\to \End(\irrep)$ be a positive map which is trace non-increasing and satisfies $\Phi(\mathbbm{1})\leq \mathbbm{1}$. Then, if $T$ is self adjoint:
\begin{align}
    \Phi(T)\prec_w T.
    \label{eq:weak_majorization_channel_without_abs}
\end{align}
If $\Phi$ is trace preserving, the majorization is strict. Furthermore, if $\Phi$ is $2$-positive:
\begin{align}\label{eq:weak_majorization_channel}
    \abs{\Phi(T)}\prec_w \abs{T},
\end{align}
for any $T\in \End(\irrep)$.
\end{lemma}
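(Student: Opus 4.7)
The plan is to combine Ky Fan's variational characterization of partial eigenvalue sums with a duality argument for the first two claims, and then to reduce the $2$-positive case to the self-adjoint case via a standard self-adjoint dilation. Concretely, I would invoke the identity $\sum_{i=1}^{k}\lambda_i(S) = \max\{\Tr(PS) : 0 \leq P \leq \mathbbm{1},\ \Tr P = k\}$ applied to $S = \Phi(T)$. Pairing with the Hilbert--Schmidt adjoint $\Phi^*$ rewrites this as $\Tr(P\Phi(T)) = \Tr(\Phi^*(P)\,T)$. The hypotheses transfer under adjunction: $\Phi^*$ is positive, the condition $\Phi(\mathbbm{1}) \leq \mathbbm{1}$ makes $\Phi^*$ trace non-increasing, and trace non-increase of $\Phi$ gives $\Phi^*(\mathbbm{1}) \leq \mathbbm{1}$. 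Hence $Q := \Phi^*(P)$ satisfies $0 \leq Q \leq \mathbbm{1}$ with $\Tr Q \leq k$, and the variational principle bounds $\Tr(QT)$ by $\sum_{i=1}^{k}\lambda_i(T)$. Taking the supremum over $P$ proves \eqref{eq:weak_majorization_channel_without_abs}. When $\Phi$ is additionally trace preserving, $\Phi^*$ is unital, so $\Tr Q = k$; combined with $\Tr(\Phi(T)) = \Tr(T)$ at $k = \dim\irrep$ this upgrades weak majorization to strict majorization.

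For the third claim, I would reduce to the first using the self-adjoint dilation $\tilde T = \begin{pmatrix} 0 & T \\ T^* & 0\end{pmatrix} \in \End(\irrep \oplus \irrep)$, whose nonzero eigenvalues are the signed singular values $\pm s_i(T)$ of $T$, and the amplification $\tilde\Phi = \Phi \otimes \id_{M_2(\C)}$ acting on $\End(\irrep) \otimes M_2(\C) \cong \End(\irrep \oplus \irrep)$. The amplified map is positive precisely because $\Phi$ is $2$-positive, while sub-unitality and trace non-increase pass to $\tilde\Phi$ immediately. Applying the first claim to $\tilde\Phi$ and $\tilde T$ gives $\tilde\Phi(\tilde T) \prec_w \tilde T$; since for $k \leq \dim\irrep$ the top $k$ eigenvalues of $\tilde T$ are $s_1(T),\ldots,s_k(T)$ and similarly for $\tilde\Phi(\tilde T)$, this specializes to $\sum_{i=1}^k s_i(\Phi(T)) \leq \sum_{i=1}^k s_i(T)$, i.e.\ $|\Phi(T)| \prec_w |T|$.

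The main obstacle I anticipate is in the first claim: the variational argument handles integer-$k$ partial eigenvalue sums, which must then be converted to the continuum formulation of weak majorization used in the paper, with the integrated decreasing rearrangement of $\Lambda(\Phi(T))$ and $\Lambda(T)$. Moreover, the bound on $\Tr(QT)$ when $\Tr Q$ is strictly less than $k$ needs care, since the variational identity is most naturally stated with $\Tr P$ fixed at $k$; one expects to handle the slack by an approximation that extends $Q$ by a suitable $0 \leq R \leq \mathbbm{1} - Q$ of trace $k - \Tr Q$, chosen so that $\Tr(RT)$ does not exceed the tail contribution of $\sum_{i=1}^k \lambda_i(T)$.
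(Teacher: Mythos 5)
Your approach to the first two claims matches the paper's: Ky Fan's variational principle for partial eigenvalue sums paired with Hilbert--Schmidt duality, and the transfer of sub-unitality and trace non-increase under adjunction. For the third claim you take a genuinely different route. The paper also introduces the dilation $A = \begin{pmatrix}0&T\\T^*&0\end{pmatrix}$, but instead of applying \eqref{eq:weak_majorization_channel_without_abs} directly to $A$, it writes $A=A_+-A_-$, pushes both parts through $\Phi\otimes\mathbbm{1}_2$, and invokes the Audenaert--Kittaneh inequality $\|X-Y\|_*\leq\|X+Y\|_*$ for positive $X,Y$ in Fan norms; this reduces \eqref{eq:weak_majorization_channel} to \eqref{eq:weak_majorization_channel_without_abs} applied to the \emph{positive} operators $\sqrt{TT^*}$, $\sqrt{T^*T}$. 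Your route skips the positive/negative splitting and the norm inequality, applying \eqref{eq:weak_majorization_channel_without_abs} directly to the balanced self-adjoint $\tilde T$. This is shorter, and it is essentially sound, but it presses on the exact point you worry about in your last paragraph.

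That worry is well-placed, and the fix you sketch does not quite close it. When $T$ has negative eigenvalues among its top $k$, there may be no $R$ with $0\leq R\leq\mathbbm{1}-Q$, $\Tr R = k-\Tr Q$, and $\Tr(RT)\geq 0$: the complement of $Q$ may be entirely supported on negative eigenspaces of $T$. Indeed, as stated \eqref{eq:weak_majorization_channel_without_abs} is false for general self-adjoint $T$ (take $\irrep=\C^2$, $\Phi(A)=\tfrac12\langle e_1,Ae_1\rangle\,|e_1\rangle\langle e_1|$, $T=\mathrm{diag}(1,-3)$: then $\Lambda(\Phi(T))=(1/2,0)$, $\Lambda(T)=(1,-3)$, and $1/2>-2$ at $k=2$). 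The paper's own argument has the same latent issue, but it never applies \eqref{eq:weak_majorization_channel_without_abs} outside the positive cone (for the third claim) or without trace preservation (for the second), so nothing downstream breaks. Your dilation route applies it to $\tilde T$, which is not positive. What rescues your argument is the balance of $\tilde T$: for $k\leq\dim\irrep$ one has $\sum_{i\leq k}\lambda_i(\tilde T)=\sum_{i\leq k}s_i(T)\geq 0$, so the variational bound $\Tr(Q\tilde T)\leq\sup\{\Tr(S\tilde T):0\leq S\leq\mathbbm{1},\ \Tr S\leq k\}=\sum_{i\leq k}s_i(T)$ holds with equality at $\Tr S=k$ because there is no incentive to put weight on the negative half of the spectrum; for $k>\dim\irrep$ the desired inequality reduces to the $k'=2\dim\irrep-k$ case by the spectral symmetry of $\tilde T$ and $\tilde\Phi(\tilde T)$. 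Making this explicit is what completes your proof. Finally, a small point in your second claim: from $\Phi$ trace preserving alone you get $\Phi^*$ unital, which gives $Q\leq\mathbbm{1}$ but not $\Tr Q=k$; combining $\Phi^*$ unital with $\Phi^*$ trace non-increasing (from $\Phi(\mathbbm{1})\leq\mathbbm{1}$) forces $\Tr\Phi^*(P)=\Tr P$ for projections, which is what you actually need.
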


\begin{proof}
That $\Phi(T)\prec T$ is a simple consequence of the variational principle. Note that:
\begin{align}
    \int_{0}^{n/\dim\irrep} \Lambda(\Phi(T))^*\dt = \sup_{\dim W \leq n} \Tr[\Phi(T)P_W] = \sup_{\dim W \leq n} \Tr[T\Phi^*(P_W)], 
\end{align}
where $P_W$ is the orthogonal projection onto subspace $W$. Since $0\leq \Phi^*(P_W)\leq 1$ and $\Tr[\Phi^*(P_W)]\leq n$, it is possible to upper bound this quantity by:
\begin{align}
    \sup_{\substack{0\leq S \leq 1\\ \Tr[S]\leq n}} \Tr[TS] = \int_{0}^{n/\dim\irrep} \Lambda(T)^*\dt.
\end{align}
Equality for $n=\dim\irrep$ is true exactly when $\Tr[\Phi(T)]=\Tr[T]$.

Let now $T$ be any operator, and assume that $\Phi$ is $2$-positive. It suffices to prove that
\begin{align}\label{eq:weaker_majorization_for_2_positive_maps}
    \Lambda(\abs {\Phi(T)})\prec_w \frac{1}{2}\Lambda(\Phi(\sqrt{TT^*}))+\frac{1}{2}\Lambda(\Phi(\sqrt{T^*T})),
\end{align}
and note that by \eqref{eq:weak_majorization_channel_without_abs} both terms are weakly majorized by $\Lambda(\sqrt{TT^*})= \Lambda(\sqrt{T^*T})$. The proof of \eqref{eq:weaker_majorization_for_2_positive_maps} combines the ideas of \cite{audenaertArakiLiebThirringInequality2007, audenaertNoteP-qNorms2009}. We consider the self-adjoint operator
\begin{align}
    A = \begin{bmatrix}0&T\\T^*&0\end{bmatrix},
\end{align}
and write $A=A_+-A_-$, where $A_+, A_-$ are the positive and negative parts of $A$, so that 
\begin{equation}
 |A| = A_+ + A_- =   \begin{bmatrix}
     \sqrt{TT^* } & 0 \\ 0 & \sqrt{T^*T}
 \end{bmatrix}. 
\end{equation}
We see that:
\begin{subequations}
\begin{align}
    (\Phi\tensor \mathbbm{1}_2)(A) &= (\Phi\tensor \mathbbm{1}_2)(A_+) - (\Phi\tensor \mathbbm{1}_2)(A_-)\label{eq:2positive_dilation_1}\\
    (\Phi\tensor \mathbbm{1}_2)(\abs{A}) &=(\Phi\tensor \mathbbm{1}_2)(A_+) + (\Phi\tensor \mathbbm{1}_2)(A_-),\label{eq:2positive_dilation_2}
\end{align}    
\end{subequations}
where the individual terms are again positive, since $\Phi$ is $2$-positive. We now use the general fact that, for positive operators $X,Y$ and any rearrangement-invariant norm $\norm{}_*$ \cite{audenaertArakiLiebThirringInequality2007}:
\begin{align}
    \norm{X-Y}_*\leq \norm{X+Y}_*.
\end{align}
We apply this to \eqref{eq:2positive_dilation_1} and \eqref{eq:2positive_dilation_2} and consider the Fan norms $\norm{}_n$, given by:
\begin{align}
    \norm{S}_n = \sum_{i=1}^n \lambda_i(\abs{S}).
\end{align}
This immediately gives the claimed \eqref{eq:weaker_majorization_for_2_positive_maps}.
\end{proof}

\begin{theorem} \label{thm:OpH}
    \begin{enumerate}[label=\arabic*)]
        \item For every $m \in \N_0$ we have:
        \begin{equation}\label{eq:Op_Hus_diagonalization}
            \Op_m \Hus_m = \sum_{n=0}^m \frac{m!(m+d-1)!}{(m-n)!(m+d-1+n)!}\Pi_{n,n},
        \end{equation}
        where $\Pi_{n,n}$ is the orthogonal projection onto $\irrep_{n,n}$ in $\End(\Sym^m(\C^d))$.
        \item We have the identity:
        \begin{equation}\label{eq:Op_Hus_holomorphic_function_casimir}
            \Op_m \Hus_m = \Upsilon_m( \tfrac12 \Cas).
        \end{equation}
        \item Let $N \in \N$ and let $T \in \End(\Sym^m(\C^d))$. If $T$ is self-adjoint, we have the majorization:
        \begin{equation}
         \Op_m \Hus_m [T]-\sum_{n=0}^{N-1} \upsilon_{m,n} \left(-\tfrac{1}{2}\Cas \right)^n [T]  \prec  \upsilon_{m,N}\left(-\tfrac{1}{2}\Cas \right)^{N} [T], 
        \end{equation}
        and more generally, for arbitrary $T$:
        \begin{equation}
            |\Op_m \Hus_m [T]-\sum_{n=0}^{N-1} \upsilon_{m,n} \left(-\tfrac{1}{2}\Cas \right)^n [T]|  \prec_w  \upsilon_{m,N} |\left(-\tfrac{1}{2}\Cas \right)^{N} [T]|, 
        \end{equation}
        In particular, we have the Schatten norm bounds:
        \begin{align}\label{eq:Op_Hus_expansion_bound}
    \norm{\Op_m \Hus_m [T]-\sum_{n=0}^{N-1} \upsilon_{m,n} \left(\tfrac{1}{2}\Cas \right)^n [T]}_{\mathcal S^p} \leq \upsilon_{m,N}\norm{\left(\tfrac{1}{2}\Cas \right)^{N} [T]}_{\mathcal S^p}.
\end{align}   
    \end{enumerate}
\end{theorem}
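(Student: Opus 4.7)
The plan is to address the three parts in sequence, reducing each to the spectral picture already developed for $\Ber_m$.

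For Part 1, the map $\Op_m \Hus_m$ is $\SU(d)$-equivariant and positive, so by Schur's lemma it acts as a non-negative scalar on each irreducible summand of $\End(\Sym^m(\C^d)) = \bigoplus_{j=0}^m \irrep_{j,j}$. A short calculation with the Hilbert--Schmidt and $L^2$ pairings shows that $\Op_m = d_m \Hus_m^*$, and hence $\Op_m \Hus_m = d_m \Hus_m^* \Hus_m$ shares its non-zero spectrum with $\Hus_m \Op_m = \Ber_m$. Equivariance then forces the scalar on $\irrep_{j,j} \subset \End(\Sym^m(\C^d))$ to coincide with the one on $\irrep_{j,j} \subset L^2(\CP^{d-1})$ recorded in \eqref{eq:Berq_spectral_decomp}, yielding \eqref{eq:Op_Hus_diagonalization}.

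For Part 2, \Cref{prop:L2_proj_space_decomposition_and_casimir} combined with the decomposition $\End(\Sym^m(\C^d)) \cong \bigoplus_{j=0}^m \irrep_{j,j}$ shows that $\tfrac12 \Cas$ acts on the summand $\irrep_{j,j}$ as multiplication by $j(j+d-1)$, matching the eigenvalue of $-\tfrac14 \Delta$ on $\irrep_{j,j} \subset L^2(\CP^{d-1})$. Evaluating $\Upsilon_m$ at $z = j(j+d-1)$ via \eqref{eq:lambdaq_Gammas} returns exactly $\frac{m!(m+d-1)!}{(m-j)!(m+d-1+j)!}$ for $j \leq m$ (and zero for $j > m$), which combined with Part 1 gives \eqref{eq:Op_Hus_holomorphic_function_casimir}.

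For Part 3, I would mimic the argument of \Cref{thm:berezin_expansion}. The identity $\Upsilon_m(z) - \Upsilon_{m+1}(z) = -\frac{z}{(m+1)(m+d)} \Upsilon_{m+1}(z)$ is immediate from \eqref{eq:Berezin_product}. Iterating it, and using that $\Upsilon_{m'}(\tfrac12 \Cas) \to 1$ on the finite-dimensional space $\End(\Sym^m(\C^d))$ as $m' \to \infty$, one proves by induction on $N$ that
\begin{equation*}
\Op_m \Hus_m - \sum_{n=0}^{N-1} \upsilon_{m,n} \bigl(-\tfrac12 \Cas\bigr)^n = \upsilon_{m,N} \bigl(-\tfrac12 \Cas\bigr)^N \Psi_{m,N},
\end{equation*}
where $\Psi_{m,N}$ is a convex combination of the operators $\Upsilon_{m'}(\tfrac12\Cas)$ with $m' \geq m+N$ (the positive weights summing to $1$ by the definition of $\upsilon_{m,N}$).

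The main obstacle is to verify that $\Psi_{m,N}$ is completely positive, unital, and trace-preserving on $\End(\Sym^m(\C^d))$. It suffices to establish this for each $\Upsilon_{m'}(\tfrac12\Cas)$ with $m' \geq m+1$, and this follows by writing $\Upsilon_{m'}(\tfrac12\Cas)$ as the product of the commuting factors $1 - \tfrac12\Cas/((m'+n)(m'+n+d-1))$, $n \geq 1$, and checking the hypothesis of \Cref{lemma:one_minus_casimir_QC} for each factor. The required inequality $(m'+n)(m'+n+d-1) \geq \tfrac{m(m+d)(d-1)}{d}$ reduces by elementary manipulation to $d+m \geq 0$ and hence holds. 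With $\Psi_{m,N}$ known to be completely positive, unital, and trace-preserving, the majorization statements follow directly from \Cref{lem:2pos_majorization} (applied to the self-adjoint operator $(-\tfrac12\Cas)^N T$ when $T = T^*$, using that $\Cas$ preserves hermiticity), and the Schatten bound \eqref{eq:Op_Hus_expansion_bound} is then a consequence of \Cref{prop:karamata_for_functions} applied to the convex function $x \mapsto x^p$.
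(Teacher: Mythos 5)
Your proof is correct and follows essentially the same route as the paper: Part 1 via equivariance, Schur's lemma, and the spectral identity between $\Op_m\Hus_m = d_m\Hus_m^*\Hus_m$ and $\Ber_m = d_m\Hus_m\Hus_m^*$; Part 2 by evaluating $\Upsilon_m$ at the $\tfrac12\Cas$-eigenvalues; and Part 3 by iterating the recursion for $\Upsilon_m$ and appealing to \Cref{lemma:one_minus_casimir_QC} and \Cref{lem:2pos_majorization}. The paper phrases the remainder directly as a single unital quantum channel $\tilde T_N$ rather than spelling out the convex-combination structure, but that is a cosmetic difference, and your reduction of the positivity threshold to a non-negativity check for $1-\Cas/\alpha$ with $\alpha = 2(m'+n)(m'+n+d-1)$ on $\End(\Sym^m(\C^d))$ is precisely the paper's argument.
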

\begin{proof}
    Since $\Hus_m$ is, up to a scalar factor, the adjoint of $\Op_m$, it follows that $\Op_m\Hus_m$ has the same eigenvalues as $\Ber_m=\Hus_m\Op_m$. Since both maps are equivariant, \eqref{eq:Op_Hus_diagonalization} follows. That \eqref{eq:Op_Hus_holomorphic_function_casimir} holds is a consequence of this and \eqref{eq:lambdaq_Gammas}, since $\tfrac12\Cas$ acts as the scalar $n(n+d-1)$ on $\irrep_{n,n}$.

    Finally we prove \eqref{eq:Op_Hus_expansion_bound}. Let us introduce the linear maps on $\End(\Sym^m(\C^d))$:
    \begin{align}\label{eq:generalized_ber_definition}
        \tilde\Ber_k = \Upsilon_{k}(\tfrac12\Cas) = \prod_{n=1}^\infty \left( 1 - \frac{\tfrac12\Cas}{(k+n)(k+n+d-1)}  \right), \quad k\in\N_0,\ k\geq m.
    \end{align}
    In particular, $\tilde\Ber_m = \Op_m\Hus_m$. Each factor in the product \eqref{eq:generalized_ber_definition} is of the form $1-\Cas/\alpha$, with $\alpha \geq 2m(m+d)(d-1)/d$. By \Cref{lemma:one_minus_casimir_QC}, every such factor is a unital quantum channel. It follows that $\tilde\Ber_k$ is a unital quantum channel for every $k\geq m$. 
    
    Finally, it is clear that $\tilde\Ber_k\to1$ as $k\to\infty$, and the proof of \eqref{eq:Op_Hus_expansion_bound} now follows using the same strategy as that of \eqref{eq:Ber_expansion_bound_functions}, simply replacing $-\tfrac14\Delta$ by $\tfrac12\Cas$. One gets the identity:
\begin{align}\label{eq:induction_step_berezin_inversion_on_operators}
        \tilde\Ber_m[S] - \sum_{n=0}^{N-1}\upsilon_{m,n} (-\tfrac12\Cas)^nS &= \smashoperator{\sum_{\substack{I \subset \N \\ |I|  ={N}}}}\phantom{1,}\prod_{k\in I}\frac{1}{(m+k)(m+d-1+k)} \tilde\Ber_{m+\max I}[ (-\tfrac12\Cas)^{N}S], \nonumber \\ 
        & = \upsilon_{m,N}\tilde{T}_N [(-\tfrac12 \Cas)^N S]. 
    \end{align}
    By the above argument, $\tilde{T}_N$ is a unital quantum channel. We apply Lemma \ref{lem:2pos_majorization} and conclude the desired majorizations.
\end{proof}
\section{Products of Toeplitz operators} \label{sec:star}

In this section we discuss how to approximate operator products $\Op_m[f] \Op_m[g]$ for $m \to \infty$. As a preparation to state our main theorem, we introduce the terms of the expansion. Let $f,g$ be functions on $\CP^{d-1}$ with square-integrable derivatives up to $n$th order, and let:
\begin{align}
    f\star_n g := (-1)^n \sum_{i_1,\ldots, i_n=2}^d  (\xi_{1,i_1}\cdots \xi_{1,i_n} f) (\xi_{i_n,1}\cdots \xi_{i_1,1}g),
    \label{eq:starn_def}
\end{align}
where $\xi_{i,j}$ are the left-invariant vector fields on $\SU(d)$, defined in \eqref{eq:xi_defs}. This formula should be interpreted as follows: we regard $f,g$ as $\PStab(e_1)$-invariant functions on $\SU(d)$, and we compute derivatives on the group manifold. The right hand side of \eqref{eq:starn_def} (but not the individual terms of the summation) is also $\PStab(e_1)$-invariant, and hence can be regarded as a function on $\CP^{d-1}$. The map $(f,g)\mapsto f\star_n g$ is an invariant bi-differential operator of degree $n$ in both $f$ and $g$. As we will see, the definition \eqref{eq:starn_def} is equivalent to \eqref{eq:starn_def_in_intro}.

Let us state a few properties of the $\star_n$ operators.

\begin{proposition} \label{prop:starn_properties}
The operators $\star_n$ have the following properties.
    \begin{enumerate}[label=\arabic*)]
        \item Positivity: $\overline f \star_n f \geq 0$.
        \item The Hermitian form defined by $\star_n$ is given by a function of the Laplacian:
        \begin{equation}
            \int_{\CP^{d-1}} (\overline f \star_n g)(z) \dz = \braket{f }{p_n(- \tfrac14 \Delta) g}, 
            \label{eq:star_n_integral}
        \end{equation}
        where $p_n(z) = \prod_{j=0}^{n-1} (z-j(j+d-1))$ is the monic polynomial of degree $n$ vanishing on the first $n$ eigenvalues of $- \frac14 \Delta$. 
        \item If $f$ or $g$ is in the subspace $\bigoplus_{j=0}^{n-1} \irrep_{j,j} \subset L^2(\CP^{d-1})$, then $f \star_n g=0$.
        \item Description of $f \star_n g$ without lifting to $\SU(d)$: let $\nabla^{(0,n)}$, $\nabla^{(n,0)}$ be the $(0,n)$ and $(n,0)$ (i.e.\ anti-holomorphic and holomorphic) components of the $n$th (Levi-Civita) covariant derivative on $\CP^{d-1}$. Then $\nabla^{(0,n)}f$, $\nabla^{(n,0)}g$ are symmetric tensors, and 
        \begin{equation}
            f \star_n g = 2^{-n} \, \nabla^{(0,n)} f \cdot \nabla^{(n,0)} g,
        \end{equation}
        where $\cdot$ denotes $n$-fold contraction using the Riemannian metric tensor.
        \item The first two $\star_n$ products are
        \begin{equation}
            f \star_0 g = fg, \qquad f \star_1 g = \frac{1}{4} \left( \nabla f \cdot \nabla g - i \{ f ,g \} \right),
        \end{equation}
        where $\nabla f \cdot \nabla g$ is the Riemannian contraction of gradients of $f,g$ and $\{ f,g \}$ is the Poisson bracket of $f,g$.
    \end{enumerate}
\end{proposition}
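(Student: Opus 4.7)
The strategy is to first rewrite $\overline{f}\star_n g$ in a manifestly positive form that makes properties 1) and 2) transparent. The essential inputs are the adjointness relation $\xi_{i,j}^{*}=\xi_{j,i}$ on $L^{2}(\SU(d))$ (which follows from $d\pi_R(X)^{*}=d\pi_R(X^{\dagger})$ under the $\mathbb{C}$-linear extension of $d\pi_R$ to $\sl(d)$), the equivalent complex-conjugation rule $\overline{\xi_{i,j}h}=-\xi_{j,i}\overline{h}$, and the commutativity $[\xi_{i,1},\xi_{j,1}]=0$ for $i,j\geq 2$ read off from \eqref{eq:xi_commutation}. Iterating the conjugation rule cancels the prefactor $(-1)^n$ in \eqref{eq:starn_def}, giving
\begin{equation*}
\overline{f}\star_n g\;=\;\sum_{i_1,\dots,i_n=2}^{d}\overline{\xi_{i_1,1}\cdots\xi_{i_n,1}f}\;\xi_{i_1,1}\cdots\xi_{i_n,1}g.
\end{equation*}
Specializing to $g=f$ exhibits $\overline{f}\star_n f$ pointwise as a sum of squared moduli, so 1) is immediate. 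Integrating over $\CP^{d-1}$ and applying the adjointness relation converts this into $\int\overline{f}\star_n g\,dz=\langle f,D_n g\rangle$ with the $\SU(d)$-equivariant operator $D_n=\sum_{i_1,\dots,i_n=2}^{d}\xi_{1,i_n}\cdots\xi_{1,i_1}\xi_{i_1,1}\cdots\xi_{i_n,1}$ on $L^{2}(\CP^{d-1})$.

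For 2), the identification reduces to proving $D_n=p_n(-\tfrac{1}{4}\Delta)$. Since both sides are $\SU(d)$-equivariant, Schur's lemma tells us it suffices to match their eigenvalues on each irreducible summand $\irrep_{j,j}\subset L^{2}(\CP^{d-1})$. I would represent a generic element of $\irrep_{j,j}$ as a matrix coefficient $x\mapsto\bra{\psi}x^{-1}\ket{\varphi_j}$ with $\varphi_j$ the unique (up to scalar) $\PStab(e_1)$-fixed vector in $\irrep_{j,j}$, so that \eqref{eq:xi_action_on_functions} rewrites the derivatives as $\xi_{i_1,1}\cdots\xi_{i_n,1}f=\bra{\psi}E_{i_1,1}\cdots E_{i_n,1}x^{-1}\ket{\varphi_j}$. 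Induction on $n$ is then natural: the base case $D_1=-\tfrac{1}{4}\Delta$ follows from the Casimir identity $Q=-\tfrac{1}{2}\Delta$ of \Cref{prop:L2_proj_space_decomposition_and_casimir} combined with the splitting of $Q$ into $\sum_{i\geq 2}(E_{1,i}E_{i,1}+E_{i,1}E_{1,i})$ modulo terms in $U(\mathrm{Lie}(\PStab)_{\mathbb{C}})$ (the Cartan part of $Q$ and the $\sl(d-1)$-block), which annihilate functions in $L^{2}(\CP^{d-1})$. The inductive step is designed to extract the factor $\bigl(-\tfrac14\Delta-(n-1)(n+d-2)\bigr)$ matching $p_n(\lambda)=p_{n-1}(\lambda)\bigl(\lambda-(n-1)(n+d-2)\bigr)$ by commuting a single pair $\xi_{1,i}\xi_{i,1}$ past the remaining $(n-1)$-fold product using \eqref{eq:xi_commutation}, with the residual scalar produced by the Cartan terms evaluated on $\varphi_j$.

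Property 3) then follows from a weight-counting argument. The Cartan element $X_0=E_{1,1}-\tfrac{1}{d-1}\sum_{i\geq 2}E_{i,i}\in\mathrm{Lie}(\PStab)_{\mathbb{C}}$ annihilates $\varphi_j$; each $E_{i,1}$ with $i\geq 2$ decreases the $X_0$-eigenvalue by $d/(d-1)$; and the $X_0$-spectrum on $\irrep_{j,j}$ is bounded below by $-jd/(d-1)$. Hence $E_{i_1,1}\cdots E_{i_n,1}\varphi_j=0$ whenever $n>j$, so $\xi_{i_1,1}\cdots\xi_{i_n,1}g\equiv 0$ and $f\star_n g\equiv 0$; the symmetric case $f\in\irrep_{j,j}$ with $j<n$ is handled by the raising operators $E_{1,i}$. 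For 4), the plan is to invoke \Cref{app:xi_appendix}, which identifies $\xi_{i,1}$ (resp.\ $\xi_{1,i}$) for $i\geq 2$ at the base point with a unitary frame of the anti-holomorphic (resp.\ holomorphic) tangent space of $\CP^{d-1}$; the factor $2^{-n}$ tracks the K\"ahler normalization between the hermitian and Riemannian metrics, while commutativity of the $\xi_{i,1}$'s matches the symmetry of the $(0,n)$-part of the iterated Levi-Civita derivative on a K\"ahler manifold. Finally, 5) is immediate from the definition for $n=0$, and for $n=1$ it follows by direct evaluation of $-\sum_{i\geq 2}(\xi_{1,i}f)(\xi_{i,1}g)$ in the Fubini-Study frame using the K\"ahler relations expressing $\nabla f\cdot\nabla g$ and $\{f,g\}$ in terms of $\partial f\cdot\overline{\partial}g$.

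The main obstacle I foresee is the inductive step in 2): commuting $\xi_{1,i}\xi_{i,1}$ past the nested product and correctly extracting the scalar $(n-1)(n+d-2)$ requires careful bookkeeping via \eqref{eq:xi_commutation} together with the annihilation relations $H\varphi_j=0$ for $H\in\mathrm{Lie}(\PStab)_{\mathbb{C}}$. A cleaner alternative may be a direct combinatorial computation of $\sum_{|I|=n}\|E^{I}\varphi_j\|^{2}$ inside $\irrep_{j,j}$, exploiting the $\sl(2)$-triples $\{E_{1,i},E_{i,1},E_{1,1}-E_{i,i}\}$ for $i\geq 2$ that structure the action of $\PStab$ on this representation.
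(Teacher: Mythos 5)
Your treatment of 1), 4), 5) matches the paper's: 1) is the same conjugation computation (reducing to a sum of squared moduli via $\xi_{1,i}=-\overline{\xi_{i,1}}$), and for 4), 5) you defer to \Cref{app:xi_appendix}, which is exactly what the paper does. Your argument for 3) is genuinely different and valid: the paper derives $f\star_n g=0$ by applying 2) and using that a sum of non-negative terms with vanishing integral must vanish termwise, whereas you use a direct weight-counting argument with the Cartan element $X_0\in\mathrm{Lie}(\PStab)_{\mathbb C}$. Your bound on the $X_0$-spectrum of $\irrep_{j,j}$ and the weight shift under $E_{i,1}$ are both correct, and this gives a self-contained proof of 3) that does not rely on 2); the only thing left implicit is the observation that, by left-invariance of $\xi$, it suffices to verify $E_{i_n,1}\cdots E_{i_1,1}\ket{\varphi_j}=0$ (i.e.\ vanishing of the derivative at the identity, for all $g\in\irrep_{j,j}$) to conclude vanishing of $\xi_{i_1,1}\cdots\xi_{i_n,1}g$ everywhere.

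The genuine gap is in 2). You correctly identify that, after integrating by parts, the claim reduces to the operator identity
$\sum_{i_1,\ldots,i_n\geq 2}\xi_{1,i_n}\cdots\xi_{1,i_1}\xi_{i_1,1}\cdots\xi_{i_n,1}=p_n(-\tfrac14\Delta)$
on $L^2(\CP^{d-1})$, and you propose a sensible route (Schur's lemma plus induction on $n$, matching eigenvalues on each $\irrep_{j,j}$). But the inductive step — commuting the outer pair $\xi_{1,i}\xi_{i,1}$ past the $(n-1)$-fold product, tracking all the cross terms produced by \eqref{eq:xi_commutation}, and showing that the net scalar is exactly $(n-1)(n+d-2)$ — is precisely the substantive content of the result and is not carried out. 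You flag this yourself as ``the main obstacle.'' For comparison, the paper proves a more general statement (\Cref{lem:D_products}, the operator identity $\prod_{i=0}^{n-1}(D_m-\mu_{m,i})=\sum_I\xi_{1,I}\xi_{I,1}$ on $L^2(\CP^{d-1},m)$ for all $m$, used with $m=0$ here), and the proof is a page of careful multi-index bookkeeping: one expands the commutator $[\xi_{j,1},\xi_{1,I}]$ via Leibniz, handles the $\xi_{1,1}$ terms, the $\xi_{j,i_k}$ terms, and a re-indexing identity involving multiplicities $\mult(i_k,I)$, eventually isolating the scalar $N(N+m+d-1)=\mu_{m,N}$. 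Your Schur-based reduction would shorten the bookkeeping somewhat (one can discard any summand lying in $U(\mathrm{Lie}(\PStab)_{\mathbb C})$ immediately, since it annihilates $\varphi_j$), but the core combinatorial computation would still need to be done; and the alternative you float (computing $\sum_{|I|=n}\|E^I\varphi_j\|^2$ directly) is also not executed. As it stands, 2) is a plausible plan, not a proof.

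A small point worth noting in favour of the paper's organization: by proving 2) via \Cref{lem:D_products} rather than by a weight argument, the paper gets 3) essentially for free from positivity, whereas your route proves 3) independently. Both are fine, but if you pursue the Schur-based proof of 2), you will likely be re-deriving most of the weight/commutator facts that your argument for 3) already uses, so you may want to unify the two.
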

\begin{proof}
1) We have $\xi_{1,i} = - \overline{\xi_{i,1}}$, so
\begin{equation}
    \overline f \star_n f = \sum_{i_1,\dots,i_n =2}^d | \xi_{i_n,1} \cdots \xi_{i_1,1} f|^2 \geq 0. 
\end{equation}

2) This is \Cref{lem:D_products} below in the special case $m=0$.

3) We assume $g$ is in $\bigoplus_{j=0}^{n-1} \irrep_{j,j}$; the argument for $f$ is analogous. We have 
\begin{equation}
\int_{\mathbb{CP}^{d-1}} \sum_{i_1,\ldots,i_n=2}^d |\xi_{i_n,1} \cdots \xi_{i_1,1}g|^2 = \braket{g}{p_n(-\tfrac14 \Delta) g} =0.
\end{equation}
Since all terms $|\xi_{i_n,1} \cdots \xi_{i_1,1}g|^2$ are non-negative, we have $\xi_{i_n,1} \cdots \xi_{i_1,1} g=0$ for every choice of indices $i_1,\dots,i_n$. Equality $f \star_n g=0$ follows from the definition. 

4), 5) See \Cref{app:xi_appendix}.
\end{proof}

The following expression, meant as a formal (not necessarily convergent) sum,
\begin{equation}
    f \star g = \sum_{n=0}^\infty \frac{(-1)^n (m+d-1)!}{n! (n+m+d-1)!} f \star_n g,
    \label{eq:star_formal}
\end{equation}
provides an asymptotic expansion of an upper symbol of the product $\Op_m[f] \Op_m[g]$, at~least for sufficiently regular $f$ and $g$. We highlight that the only dependence on $m$ is in the coefficient $\frac{(m+d-1)!}{(n+m+d-1)!} \leq (m+d)^{-n}$. Slightly informally,
\begin{equation}
    \Op_m[f] \Op_m[g] = \Op_m[f \star g ] .
\label{eq:Op_product_formal}
\end{equation}
This was formulated precisely in \Cref{thm:2} in the Introduction, where the formal sum in \eqref{eq:star_formal} is treated by truncating to a finite order and estimating the remainder. We prove \Cref{thm:2} below. The following corollary is an immediate consequence of \Cref{thm:2} using H\"older's inequality for Schatten norms:

\begin{corollary}\label{cor:full_expansion_opf_opg}
Fix $m, N \in\N_0$ and let $p,q,r \in [1,\infty]$ satisfy $\frac{1}{r} = \frac{1}{p} + \frac{1}{q}$. Suppose that $f,g$ are functions with derivatives up to $N$th order in $L^{2p}$ and $L^{2q}$, respectively. The remainder $\mathcal E_{m,N}$ of \eqref{eq:emn_error_term_def} satisfies the bound:
    \begin{align}
    \norm{\mathcal{E}_{m,N}[f,g]}_{r} \leq
        d_m^{\frac{1}{r}} \frac{(m+d-1)!}{N!(N+m+d-1)!} \big\|f\star_N \bar f\big\|^{\tfrac12}_p\big\|\bar g \star_N g\big\|_q^{\tfrac12}.
    \end{align}
\end{corollary}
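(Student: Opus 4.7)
The plan is to derive the corollary directly from the $A^*B$ factorization in part~2) of Theorem~\ref{thm:2}, via a couple of standard Hölder-type moves. Qualitatively, all of the analytic content is already contained in Theorem~\ref{thm:2}, and the corollary merely translates the operator-level information into Schatten-norm bounds.

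First I would invoke Theorem~\ref{thm:2}, part~2) to write
\begin{equation*}
\mathcal E_{m,N}[f,g] \;=\; \frac{(m+d-1)!}{N!(N+m+d-1)!}\, A^* B,
\end{equation*}
with operators $A, B$ satisfying the operator inequalities $A^* A \leq \Op_m[f \star_N \overline f]$ and $B^* B \leq \Op_m[\overline g \star_N g]$. Then I would estimate each factor separately in Schatten norm: Schatten norms are monotone on the cone of positive operators, so the bound for $A$ becomes
\begin{equation*}
\|A\|_{\mathcal S^{2p}}^{2} \;=\; \|A^*A\|_{\mathcal S^{p}} \;\leq\; \|\Op_m[f \star_N \overline f]\|_{\mathcal S^{p}} \;\leq\; d_m^{1/p}\, \|f \star_N \overline f\|_{L^{p}(\CP^{d-1})},
\end{equation*}
where the last inequality is the norm bound~\eqref{eq:norm_bound_op}; the analogous estimate holds for $\|B\|_{\mathcal S^{2q}}$ with $\overline g \star_N g$ in place of $f \star_N \overline f$.

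Finally I would apply Hölder's inequality for Schatten norms with exponents $2p$ and $2q$ to obtain $\|A^* B\|_{\mathcal S^{r}} \leq \|A\|_{\mathcal S^{2p}} \|B\|_{\mathcal S^{2q}}$ and collect the resulting $d_m$-factors into the single prefactor $d_m^{1/r}$ demanded by the statement. Substituting the two estimates of the previous step into the Hölder inequality and multiplying by the explicit constant $\tfrac{(m+d-1)!}{N!(N+m+d-1)!}$ immediately yields the claimed bound.

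There is no real obstacle: once Theorem~\ref{thm:2} is granted, the corollary is essentially two lines of Hölder bookkeeping, as the text preceding the statement itself remarks. The only point that requires attention is the matching of indices — each of $A, B$ carries morally a \emph{square root} of one of the $\Op_m[\cdot \star_N \cdot]$ operators, which is why the right-hand side features $\|f \star_N \overline f\|_{L^p}^{1/2}$ and $\|\overline g \star_N g\|_{L^q}^{1/2}$ paired with the doubled Schatten indices $2p$ and $2q$ on the operator side.
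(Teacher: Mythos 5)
Your approach --- the $A^*B$ factorization from Theorem~\ref{thm:2}, monotonicity of Schatten norms on positives, the identity $\|A\|_{\mathcal S^{2p}}^2=\|A^*A\|_{\mathcal S^p}$, the bound~\eqref{eq:norm_bound_op}, and operator H\"older --- is exactly the one the paper intends and remarks on in the text. However, your exponent bookkeeping is off by a factor of two and is internally inconsistent with the stated hypothesis. Under the relation $\tfrac1r=\tfrac1p+\tfrac1q$ assumed in the corollary, H\"older's inequality with $A\in\mathcal S^{2p}$, $B\in\mathcal S^{2q}$ gives $\|A^*B\|_{\mathcal S^s}\le\|A\|_{\mathcal S^{2p}}\|B\|_{\mathcal S^{2q}}$ only for $\tfrac1s=\tfrac1{2p}+\tfrac1{2q}=\tfrac1{2r}$, i.e.\ $s=2r$, not $s=r$ as you wrote. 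Likewise the dimension factors collect to $d_m^{1/(2p)}\,d_m^{1/(2q)}=d_m^{1/(2r)}$, not $d_m^{1/r}$. So the chain you set up actually produces the (strictly stronger) estimate
\begin{equation}
\|\mathcal E_{m,N}[f,g]\|_{\mathcal S^{2r}} \le d_m^{\frac1{2r}}\,\frac{(m+d-1)!}{N!\,(N+m+d-1)!}\,\|f\star_N\overline f\|_{L^p}^{1/2}\,\|\overline g\star_N g\|_{L^q}^{1/2},
\end{equation}
and does not, as written, deliver the corollary's $\|\cdot\|_{\mathcal S^r}$ with $d_m^{1/r}$.

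To close the gap, you need one further step: on the $d_m$-dimensional space $\Sym^m(\C^d)$ one has $\|T\|_{\mathcal S^r}\le d_m^{\frac1r-\frac1{2r}}\|T\|_{\mathcal S^{2r}}=d_m^{\frac1{2r}}\|T\|_{\mathcal S^{2r}}$ by H\"older on the singular values, and inserting this supplies precisely the missing $d_m^{1/(2r)}$ so that the total prefactor becomes $d_m^{1/r}$. Alternatively you could note explicitly that your argument proves the sharper $\mathcal S^{2r}$ bound above and that the stated version follows from it. Either way, the step is elementary, but as drafted the H\"older application does not match the hypothesis $\tfrac1r=\tfrac1p+\tfrac1q$, and the claim that the $d_m$-factors ``collect into $d_m^{1/r}$'' is incorrect without the extra dimension-dependent inequality.
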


Let us note the following about \Cref{thm:2} and \Cref{cor:full_expansion_opf_opg}.
\begin{itemize}
    \item One can estimate other spectral properties of the remainder term $\mathcal E_{m,N}[f,g]$ using~2), majorization for $\Op_m$ discussed in \Cref{sec:Op_defs}, and well-known inequalities on eigenvalues and singular values of products of operators; see e.g.\ the review \cite{Bhatia}.
    \item The trace of the positive operator $\Op_m[\overline g \star_N g]$ featuring in our estimates can be computed using \eqref{eq:star_n_integral} in \Cref{prop:starn_properties}:
    \begin{equation}
        \Tr (\Op_m[\overline g \star_N g]) = d_m \int_{\CP^{d-1}} \overline g \star_N g \, \dz = d_m \braket{g}{p_n(-\tfrac14 \Delta) g}.
    \end{equation}
    \item The expression $\Op_m[f \star_N g]$ makes sense because under the running assumptions $f \star_N g $ is an integrable function.
\end{itemize}

We now turn to the proof of \Cref{thm:2}, but it will require us to first prove a few lemmas. After the proof of \Cref{thm:2}, we~will briefly describe the algebra $\mathcal{A}$, discussed in the introduction, for which \eqref{eq:star_formal} has only finitely many nonzero terms and \eqref{eq:Op_product_formal} holds as an identity.

It will be useful to consider the following operator on $L^2(\CP^{d-1},m)$:
\begin{align}
    D_m = \tfrac{1}{2}(Q-(1-\tfrac{1}{d})m^2-m(d-1)).
\end{align}
In accordance with \eqref{eq:Casimir_values}, $D_m$ is equivariant and has eigenvalues
\begin{equation}
    \mu_{m,n} = n(n+m+d-1).
\end{equation}
Its spectral decomposition is:
\begin{align}
    D_m =  \sum_{n=0}^\infty \mu_{m,n}\Pi_{n,n+m},
\end{align}
where $\Pi_{n,n+m}$ denotes the projection onto $\irrep_{n,n+m}$ in $L^2(\CP^{d-1},m)$. In particular:
\begin{align}\label{eq:projection_as_indicator_on_dm}
    V_m^{\vphantom{*}} V_m^* = \Pi_{0,m} = \mathbbm{1}_{\iset{0}}(D_m).
\end{align}
Recall that $V_m$ is the embedding introduced in \eqref{eq:Vm_def}, and that it allows to express $\Op_m[f]$ as in \eqref{eq:Op_Toeplitz}.

We will use interpolating polynomials to approximate the indicator $\mathbbm{1}_{\iset{0}}$ in \eqref{eq:projection_as_indicator_on_dm}.

\begin{definition}
    For $n\in \N_0$ we define:
    \begin{align}
    q_{m,n}(x) = \prod_{i=1}^n \frac{\mu_{m,i}-x}{\mu_{m,i}}.
    \label{eq:qmn_defined}
    \end{align}
    It is the unique polynomial such that $q_{m,n}(0)=1$ and $q_{m,n}(\mu_{m,i})=0$ for $i\in\iset{1,\ldots, n}$.
\end{definition}

\begin{lemma} \label{lem:interpolating}
    For each $n\in\N_0$ we have the formula:
    \begin{align}\label{eq:formula_q_mn_as_sum_of_prods}
        q_{m,n}(x) = \sum_{j=0}^{n}(-1)^j \prod_{i=1}^j\frac{x-\mu_{m,i-1}}{\mu_{m,i}}.
    \end{align}
    Furthermore, we have the inequalities
    \begin{subequations}
    \begin{align}
     0 \leq &(-1)^n q_{m,n}(x), \quad &&\text{for all } x \geq \mu_{m,n}, \label{eq:qmn_ineq1} \\  
    & (-1)^n q_{m,n}(x) \leq \prod_{i=1}^{n+1}\frac{x-\mu_{m,i-1}}{\mu_{m,i}}, \quad &&\text{for all } x \geq \mu_{m,n+1}. \label{eq:qmn_ineq_2}
    \end{align}
    \end{subequations}
\end{lemma}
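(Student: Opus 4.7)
I would handle the three assertions in sequence, all by direct manipulation of the defining product \eqref{eq:qmn_defined}.

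For the summation formula \eqref{eq:formula_q_mn_as_sum_of_prods}, I proceed by induction on $n$. The case $n=0$ is trivial, both sides equalling $1$. Writing $S_n(x)$ for the right-hand side of \eqref{eq:formula_q_mn_as_sum_of_prods}, the telescoping structure gives
\begin{equation*}
    S_{n+1}(x) - S_n(x) = (-1)^{n+1}\prod_{i=1}^{n+1}\frac{x-\mu_{m,i-1}}{\mu_{m,i}}.
\end{equation*}
Using $\mu_{m,0}=0$ and factoring out $x/\mu_{m,n+1}$, a direct comparison of numerators and denominators shows that this product equals $\frac{x}{\mu_{m,n+1}}\prod_{i=1}^n\frac{x-\mu_{m,i}}{\mu_{m,i}}=(-1)^n\frac{x}{\mu_{m,n+1}}\,q_{m,n}(x)$. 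Therefore
\begin{equation*}
    S_{n+1}(x)-S_n(x)=-\frac{x}{\mu_{m,n+1}}q_{m,n}(x).
\end{equation*}
On the other hand, the defining product \eqref{eq:qmn_defined} immediately yields the recursion $q_{m,n+1}(x)=q_{m,n}(x)\cdot\frac{\mu_{m,n+1}-x}{\mu_{m,n+1}}$, which rearranges to $q_{m,n+1}(x)-q_{m,n}(x)=-\frac{x}{\mu_{m,n+1}}q_{m,n}(x)$. By the inductive hypothesis $S_n=q_{m,n}$, so the two recursions match and $S_{n+1}=q_{m,n+1}$.

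For the inequalities, I first rewrite
\begin{equation*}
    (-1)^n q_{m,n}(x)=\prod_{i=1}^n\frac{x-\mu_{m,i}}{\mu_{m,i}}.
\end{equation*}
Since $i\mapsto\mu_{m,i}$ is strictly increasing, the assumption $x\geq\mu_{m,n}$ makes every factor non-negative, proving \eqref{eq:qmn_ineq1}. For \eqref{eq:qmn_ineq_2}, I reuse the computation from the induction step, which records the exact identity
\begin{equation*}
    \prod_{i=1}^{n+1}\frac{x-\mu_{m,i-1}}{\mu_{m,i}}=\frac{x}{\mu_{m,n+1}}\,(-1)^n q_{m,n}(x).
\end{equation*}
For $x\geq\mu_{m,n+1}$ we have $x/\mu_{m,n+1}\geq 1$ and, by the previous step (since $\mu_{m,n+1}\geq\mu_{m,n}$), $(-1)^n q_{m,n}(x)\geq 0$; multiplying yields \eqref{eq:qmn_ineq_2}.

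There is no genuine obstacle: the argument is essentially index bookkeeping. The only point requiring a little care is the consistent use of $\mu_{m,0}=0$ to absorb the leading factor of the product into the prefactor $x/\mu_{m,n+1}$, which is what links the two otherwise differently-indexed products.
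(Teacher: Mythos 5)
Your proof is correct and takes essentially the same route as the paper: induction to establish the summation formula, then sign analysis of the factored form for the inequalities. The only noticeable variation is in \eqref{eq:qmn_ineq_2}: the paper reduces it, via \eqref{eq:formula_q_mn_as_sum_of_prods}, to the statement $0\leq(-1)^{n+1}q_{m,n+1}(x)$, i.e.\ to \eqref{eq:qmn_ineq1} at level $n+1$, whereas you exploit the exact identity $\prod_{i=1}^{n+1}\frac{x-\mu_{m,i-1}}{\mu_{m,i}}=\tfrac{x}{\mu_{m,n+1}}(-1)^nq_{m,n}(x)$ (which you already derived in the induction step) together with $x/\mu_{m,n+1}\geq 1$ and \eqref{eq:qmn_ineq1} at level $n$; since $(-1)^{n+1}q_{m,n+1}(x)=(x/\mu_{m,n+1}-1)(-1)^nq_{m,n}(x)$, the two derivations are algebraically equivalent. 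Your version has the small advantage of reusing work already done in the inductive proof of the formula.
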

\begin{proof}
    The formula \eqref{eq:formula_q_mn_as_sum_of_prods} follows by a simple inductive argument, noting that:
    \begin{align}
        q_{m,n+1}(x) = q_{m,n}(x)-q_{m,n}(\mu_{m,n+1})\prod_{i=1}^{n+1}\frac{x-\mu_{m,i-1}}{\mu_{m,n+1}-\mu_{m,i-1}},
    \end{align} 
    where we evaluate $q_{m,n}(\mu_{m,n+1})$ using \eqref{eq:qmn_defined}.
    
    The inequality \eqref{eq:qmn_ineq1} follows from \eqref{eq:qmn_defined}, as $(-1)^n q_{m,n}$ has a positive leading term and all zeros in $(-\infty,\mu_{m,n}]$. Using \eqref{eq:formula_q_mn_as_sum_of_prods} we can rewrite \eqref{eq:qmn_ineq_2} in the form
    \begin{equation}
        (-1)^n q_{m,n}(x) \leq (-1)^{n+1} (q_{m,n+1}(x)-q_{m,n}(x)),
    \end{equation}
    which follows from \eqref{eq:qmn_ineq1} for $q_{m,n+1}$.
\end{proof}

\begin{lemma} \label{lem:D_products}
    For each $n\in\N_0$ we have the identity:
    \begin{align}\label{eq:product_dm_minus_mui_is_star}
        \prod_{i=0}^{n-1} (D_m-\mu_{m,i}) = \sum_{i_1,\ldots, i_n = 2}^d \xi_{1,i_1}\cdots \xi_{1,i_n}\xi_{i_n,1}\cdots \xi_{i_1,1}
    \end{align}
    on $L^2(\CP^{d-1},m)$.
\end{lemma}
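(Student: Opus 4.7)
The proof proceeds by induction on $n$. The base case $n=1$ reduces to $D_m = \sum_{i=2}^d \xi_{1,i}\xi_{i,1}$ on $L^2(\CP^{d-1}, m)$. To establish this, I would expand the Casimir as $Q = \sum_{i \neq j}\xi_{i,j}\xi_{j,i} + \sum_i d\pi_R(H_i)^2$ with $H_i = E_{i,i} - \tfrac{1}{d}\sum_k E_{k,k}$, and use the $\PStab(e_1)$-equivariance identities that hold on $L^2(\CP^{d-1},m)$: namely $\xi_{i,j} = 0$ for $2 \leq i \neq j \leq d$ and $\xi_{1,1} - \xi_{k,k} = m$ for $k \geq 2$ (both consequences of the $\mu^{-m}$-character property in the definition of $L^2(\CP^{d-1},m)$). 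A direct calculation gives $\sum_i d\pi_R(H_i)^2 = \tfrac{(d-1)m^2}{d}$, and combining with the commutator $[\xi_{k,1},\xi_{1,k}] = \xi_{1,1} - \xi_{k,k} = m$ yields $Q = 2\sum_{k=2}^d \xi_{1,k}\xi_{k,1} + (d-1)m + \tfrac{(d-1)m^2}{d}$, which rearranges to the desired formula for $D_m$.

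Denote the right-hand side by $R_n$. For the inductive step, rewrite $R_{n+1} = \sum_{i_1,\ldots,i_n} \xi_{1,i_1}\cdots\xi_{1,i_n}\,\tilde D\,\xi_{i_n,1}\cdots\xi_{i_1,1}$, where $\tilde D := \sum_{j=2}^d \xi_{1,j}\xi_{j,1}$ is viewed as a differential operator on $L^2(\SU(d))$ and coincides with $D_m$ on $L^2(\CP^{d-1},m)$. The key auxiliary claim is the intertwining relation
\begin{equation*}
\tilde D\,\xi_{i_n,1}\cdots\xi_{i_1,1}\,f = \xi_{i_n,1}\cdots\xi_{i_1,1}\,(D_m - \mu_{m,n})\,f,\qquad f \in L^2(\CP^{d-1},m).
\end{equation*}
Granting this, $R_{n+1}f = R_n(D_m - \mu_{m,n})f$, which by the induction hypothesis equals $\prod_{i=0}^{n-1}(D_m - \mu_{m,i})(D_m - \mu_{m,n})f = \prod_{i=0}^{n}(D_m - \mu_{m,i})f$.

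The intertwining relation would be proved by a further induction on $n$, using the commutator identity $[\tilde D, \xi_{j,1}] = \sum_{i=2}^d \xi_{j,i}\xi_{i,1} - \xi_{1,1}\xi_{j,1}$ derived from \eqref{eq:xi_commutation}. In the base case $n=0 \to 1$, apply $[\tilde D, \xi_{j,1}]$ to $f \in L^2(\CP^{d-1},m)$: the terms $\xi_{j,i}\xi_{i,1}f$ for $i \neq j$ simplify via $\xi_{j,i}f = 0$ and $[\xi_{j,i},\xi_{i,1}] = -\xi_{j,1}$ to yield $-(d-2)\xi_{j,1}f$; the $i=j$ piece combines with $-\xi_{1,1}\xi_{j,1}f$ into $(\xi_{j,j}-\xi_{1,1})\xi_{j,1}f$, which via $[\xi_{j,j}-\xi_{1,1},\xi_{j,1}] = -2\xi_{j,1}$ and $\xi_{j,j}-\xi_{1,1} = -m$ gives $-(m+2)\xi_{j,1}f$. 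The total correction is $-(m+d)\xi_{j,1}f = -\mu_{m,1}\xi_{j,1}f$, establishing $\tilde D\xi_{j,1}f = \xi_{j,1}(D_m-\mu_{m,1})f$.

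The main obstacle lies in the general inductive step $n \mapsto n+1$ of the intertwining relation: nested commutators must combine so as to produce the shift $\mu_{m,n+1} - \mu_{m,n} = 2n+m+d$. A conceptually cleaner approach is to invoke Schur's lemma: since both sides of the lemma are $\SU(d)$-equivariant, it suffices to compare scalars on each $\irrep_{k,k+m}$. Via Peter-Weyl, $R_n$ acts on $\irrep_{k,k+m}\cap L^2(\CP^{d-1},m)$ through the $\PStab$-invariant element $U_n' := \sum E_{i_1,1}\cdots E_{i_n,1}E_{1,i_n}\cdots E_{1,i_1} \in U(\sl(d))$ applied to the unique $\mu^{-m}$-weight vector $v_0 \in V_{k,k+m}$, and the recursion $U_{n+1}' = \sum_{j=2}^d E_{j,1}\,U_n'\,E_{1,j}$ combined with a representation-theoretic scalar computation gives the desired eigenvalue $\prod_{i=0}^{n-1}(\mu_{m,k} - \mu_{m,i})$.
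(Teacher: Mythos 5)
Your overall strategy is sound and introduces a genuine structural improvement: by writing $R_{n+1}=\sum_{I}\xi_{1,I}\,\tilde D\,\xi_{I,1}$ with the one-step operator $\tilde D=\sum_j\xi_{1,j}\xi_{j,1}$ sitting in the middle, you reduce the lemma to a single intertwining identity $\tilde D\,\xi_{I,1}f=\xi_{I,1}(D_m-\mu_{m,|I|})f$ for $f\in L^2(\CP^{d-1},m)$. This is cleaner than the paper's route, which multiplies the inductive hypothesis by $D_m$ on the left and then pushes $\xi_{j,1}$ across $\xi_{1,I}$. Your computation of the base case of the lemma ($D_m=\sum_j\xi_{1,j}\xi_{j,1}$, using $\xi_{i,j}=0$ for $2\le i\neq j\le d$, $\xi_{1,1}-\xi_{k,k}=m$, and $\sum_i d\pi_R(H_i)^2=(d-1)m^2/d$) checks out, as does the $n=0\to 1$ case of the intertwining relation.

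However, there is a genuine gap: the inductive step of the intertwining relation is not carried out. You explicitly flag it as ``the main obstacle,'' but the content that would close it is exactly the bookkeeping the paper performs. Concretely, one must commute $[\tilde D,\xi_{j,1}]=\sum_{i\neq j}\xi_{j,i}\xi_{i,1}+(\xi_{j,j}-\xi_{1,1})\xi_{j,1}$ past $\xi_{I,1}$ before landing on $f$. The terms $\xi_{j,i}\xi_{i,1}\xi_{I,1}f$ produce, after commuting $\xi_{j,i}$ to the right, a factor $-(d-2)\xi_{j,1}\xi_{I,1}f-\sum_{i\neq j}\mult(i,I)\,\xi_{j,1}\xi_{I,1}f$ (using $\xi_{i,1}\xi_{I_i^j,1}=\xi_{j,1}\xi_{I,1}$ by commutativity of the $\xi_{a,1}$), while the diagonal piece gives $-(m+n+2+\mult(j,I))\xi_{j,1}\xi_{I,1}f$; these combine to $-(2n+m+d)\xi_{j,1}\xi_{I,1}f$ with the $\mult(j,I)$ contributions cancelling, yielding the shift $\mu_{m,n+1}-\mu_{m,n}$. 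This is entirely doable and of the same order of difficulty as the paper's argument, but in its current form your proposal merely asserts that ``nested commutators must combine'' without showing it. Likewise, the alternative Schur-type route is correctly set up (both sides are $\pi_L$-equivariant and hence scalar on each $\irrep_{k,k+m}$), but the ``representation-theoretic scalar computation'' that would produce the eigenvalue $\prod_{i=0}^{n-1}(\mu_{m,k}-\mu_{m,i})$ is again deferred; that computation would reduce to the same kind of commutator analysis in $U(\sl(d))$. So: the architecture is correct and a little more elegant than the paper's, but the core computational step is missing in both variants you sketch.
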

\begin{proof}
    Let us introduce the shorthand notation:
    \begin{align}
        \xi_{1,I} = \xi_{1,i_1}\dots \xi_{1,i_n},\quad  \xi_{I,1} = \xi_{i_1,1}\dots \xi_{i_n,1}, \quad \textup{for }I = (i_1,\ldots, i_n).
    \end{align}
    We will also use $I_{>k}$ (or $I_{<k}$) to denote the elements of $I$ preceding (or following) the $k$-th entry. We will also use $I-\iset{j}$ and $I+ \iset{j}$, where $I-\iset{j}$ is the multi-index $I$ where the first occurrence of $j$ has been removed, and $I+\iset{j}$ is the multi-index $I$ with an extra index $j$ appended. 

    We will proceed by induction. The base case is $n=1$ and is clear since $\mu_{m,0}=0$. For the inductive step, we assume the expansion holds for all $n\leq N$ and we consider:
    \begin{align}\label{eq:Dm_interpolation_inductive_step}
        D_m\prod_{i=0}^{N-1} (D_m-\mu_i) &=\sum_{j=2}^d \xi_{1,j}^L\xi_{j,1}^L \sum_{I\in \iset{2,\ldots, d}^N} \xi_{1,I}\ \xi_{I,1}\\
        &=\sum_{I'\in \iset{2,\ldots, d}^{N+1}} \xi_{1,I'}\xi_{I',1} + \sum_{I\in\iset{2,\ldots, d}^N}\sum_{j=2}^d \xi_{1,j}[\xi_{j,1},\xi_{1,I}]\xi_{I,1}.\nonumber
    \end{align}
    We note, by the Leibniz property of commutators:
    \begin{align}\label{eq:Dm_interpolation_leibniz_first_term}
        [\xi_{j,1},\xi_{1,I}] = \sum_{k=1}^N \xi_{1,I_{<k}}[\xi_{j,1}, \xi_{1,i_k}]\xi_{1,I_{>k}} &= \sum_{k=1}^N \xi_{1,I_{<k}} (\delta_{j,i_k}\xi_{1,1} -\xi_{j,i_k})   \xi_{1,I_{>k}},
    \end{align}
    where we remind that the vector fields $\xi_{i,j}$ satisfy the commutation relations \eqref{eq:xi_commutation}.
    Inserting \eqref{eq:Dm_interpolation_leibniz_first_term} into \eqref{eq:Dm_interpolation_inductive_step} we get:
    \begin{align}\label{eq:Dm_interpolation_first_manipulation}
        \sum_{I\in \iset{2,\ldots, d}^{N+1}} \xi_{1,I}\ \xi_{I,1} &+ \sum_{I\in \iset{2,\ldots, d}^{N}}\sum_{j=2}^d\sum_{k=1}^N \xi_{1,j}\xi_{1,I_{<k}} (\delta_{j,i_k}\xi_{1,1} -\xi_{j,i_k})   \xi_{1,I_{>k}}\xi_{I,1}.
    \end{align}
    The first sum is of the correct form. The second sum can be expanded as:
    \begin{align}\label{eq:Dm_interpolation_many_commutators}
        \sum_{I,j,k}\delta_{j,i_k}&\xi_{1,j}\xi_{1,I_{<k}}\xi_{1,I_{>k}}\xi_{I,1}\xi_{1,1} + \sum_{I,j,k}\delta_{j,i_k}\xi_{1,j}\xi_{1,I_{<k}}[\xi_{1,1},\xi_{1,I_{>k}}\xi_{I,1}]\\
        &-\sum_{I,j,k}\xi_{1,j} \xi_{1,I_{<k}}\xi_{I_{>k}}\xi_{I,1}\xi_{j,i_k}-\sum_{I,j,k}\xi_{1,j}\xi_{1,I_{<k}}[\xi_{j,i_k}, \xi_{1,I_{>k}}\xi_{I,1}]\nonumber\\
        =&Nm\sum_{I}\xi_{1,I}\xi_{I,1}+\sum_{I,k}\xi_{1,I_{\leq k}}[\xi_{1,1},\xi_{1,I_{>k}}\xi_{I,1}]-\sum_{I,j,k}\xi_{1,I_{<k}+\iset{j}}[\xi_{j,i_k}, \xi_{1,I_{>k}}\xi_{I,1}].\nonumber
    \end{align}
    Next, we use the identities:
    \begin{subequations}        
    \begin{align}
        [\xi_{1,1}, \xi_{1,I_{>k}}\xi_{I,1}] &= k\xi_{1,I_{>k}}\xi_{I,1},\\
        [\xi_{j,i_k}, \xi_{1,I_{>k}}\xi_{I,1}] &= \mult(j,I_{>k}) \xi_{1,I_{>k}-\iset{j}+\iset{i_k}}\xi_{I,1} \\
        &\qquad- \mult(i_k,I)\xi_{1,I_{>k}}\xi_{I-\iset{i_k}+\iset{j},1},\nonumber
    \end{align}
    \end{subequations}
    which again follow from the Leibniz property of the commutators. Here $\mult(a,A)$ denotes the number of occurrences of $a$ within $A$. Substituting into \eqref{eq:Dm_interpolation_many_commutators} we obtain:
    \begin{align}\label{eq:Dm_interpolation_many_more_commutators}
        Nm\sum_{I}\xi_{1,I}\xi_{I,1}&+\sum_{I,k}\xi_{1,I_{\leq k}}k\xi_{1,I_{>k}}\xi_{I,1}-\sum_{I,j,k}\xi_{1,I_{<k}+\iset{j}}\mult(j,I_{>k}) \xi_{1,I_{>k}-\iset{j}+\iset{i_k}}\xi_{I,1} \\
        &+ \sum_{I,j,k}\xi_{1,I_{<k}+\iset{j}} \mult(i_k,I)\xi_{1,I_{>k}}\xi_{I-\iset{i_k}+\iset{j},1}\nonumber\\
        =&Nm\sum_{I}\xi_{1,I}\xi_{I,1}+\tfrac{1}{2}N(N+1)\sum_{I}\xi_{1,I}\xi_{I,1} - \sum_{I,k}(N-k)\sum_{I}\xi_{1,I}\xi_{I,1}\nonumber\\
        &+\sum_{I,j,k}\mult(i_k,I)\xi_{1,I_+\iset{j}-\iset{i_k}} \xi_{I-\iset{i_k}+\iset{j},1}\nonumber\\
        =&N(m+1)\sum_{I}\xi_{1,I}\xi_{I,1} + \sum_{I,j,k}\mult(i_k,I)\xi_{1,I+\iset{j}-\iset{i_k}} \xi_{I-\iset{i_k}+\iset{j},1}.\nonumber
    \end{align}
    Finally we note the identity:
    \begin{align}
        \sum_{I,j,k}\mult(i_k,I)\xi_{1,I+\iset{j}-\iset{i_k}} &\xi_{I-\iset{i_k}+\iset{j},1} = \sum_{j,k}\sum_{J\in\iset{2,\ldots, d}^{N-1}}(N+d-2)\xi_{1,J+\iset{j}} \xi_{J+\iset{j},1} \nonumber \\
        &=N(N+d-2) \sum_{I\in \iset{2,\ldots, d}^N}\xi_{1,I}\xi_{I,1}.
    \end{align}
    Combining this with \eqref{eq:Dm_interpolation_first_manipulation} and \eqref{eq:Dm_interpolation_many_more_commutators} gives:
    \begin{align}
        D_m\prod_{i=0}^{N-1} (D_m-\mu_{m,i}) = \sum_{I\in \iset{2,\ldots, d}^{N+1}}\xi_{1,I}\xi_{I,1} + N(N+m+d-1)\sum_{I\in\iset{2,\ldots, d}^N}\xi_{1,I}\xi_{I,1}.
    \end{align}
    The result follows since $\mu_{m,N}=N(N+m+d-1)$.
\end{proof}

\begin{proof}[Proof of \Cref{thm:2}]
Instead of operators $\Op_m[f] = V_m^* f V_m^{\vphantom{*}}$, it is convenient to work with $\Pi_{0,m} f \Pi_{0,m} = V_m^{\vphantom{*}} \Op_m[f] V_m^*$. Under the identification of $\Sym^m(\C^d)$ with the image of $V_m$ and decomposition $L^2(\CP^{d-1},m) = \Sym^m(\C^d) \oplus \Sym^m(\C^d)^\perp$, the block form  of $\Pi_{0,m} f \Pi_{0,m}$ is:
\begin{equation}
    \Pi_{0,m} f \Pi_{0,m} = \begin{bmatrix}
        \Op_m[f] & 0 \\ 0 & 0
    \end{bmatrix}.
\end{equation}
By standard density arguments we can also assume without loss of generality that $f, g$ are smooth functions.

Consider the product 
\begin{align}
     \Pi_{0,m}\, f\,&\Pi_{0,m}\,g\, \Pi_{0,m} = \Pi_{0,m}\, f\, \mathbbm 1_{\{ 0 \}} (D_m) \,g\, \Pi_{0,m} \label{eq:Toeplitz_product_dec} \\
    = &  \Pi_{0,m}\, f \,q_{m,N-1}(D_m) \,g\, \Pi_{0,m} + \Pi_{0,m}\, f \,[\mathbbm 1_{\{ 0 \}} (D_m) - q_{m,N-1}(D_m) ]\, g\, \Pi_{0,m}, \nonumber
\end{align}
where we split $\mathbbm 1_{\{ 0 \}} (D_m)$ into the main term, given by an interpolating polynomial, and a~remainder. For the main term we write, using \Cref{lem:interpolating}:
\begin{equation}
      \Pi_{0,m}\, f\, q_{m,N-1}(D_m) \,g\, \Pi_{0,m} = \sum_{n=0}^{N-1} (-1)^n  \Pi_{0,m} \, f  \left[ \prod_{i=1}^n \frac{D_m - \mu_{m,i-1}}{\mu_{m,i}} \right] g \, \Pi_{0,m}.
\end{equation}
The product $\prod_{i=1}^n (D_m - \mu_{m,i-1})$ can be re-expressed using \eqref{eq:product_dm_minus_mui_is_star}, since multiplication by smooth functions on $\CP^{d-1}$ does not map out of $L^2(\CP^{d-1},m)$. We have $\xi_{i,1} \Pi_{0,m} =0$ and $\Pi_{0,m} \xi_{1,i} =0$ (see \Cref{prop:sections_decomp}), so we can commute the $\xi$ operators with $f,g$ and in effect get them to act on $f$ and $g$ only. This leads to the identity:
\begin{equation}
    \Pi_{0,m}\,f \left[ \prod_{i=1}^n (D_m - \mu_{m,i-1}) \right] g \,\Pi_{0,m} = \Pi_{0,m}\, (f \star_n g) \,\Pi_{0,m}.
\end{equation}
Let us illustrate this simple derivation in the case $n=1$:
\begin{equation}
\Pi_{0,m}\,f \xi_{1,i} \xi_{i,1} g \,\Pi_{0,m} = \Pi_{0,m} \, [f,\xi_{1,i}] [\xi_{i,1},g] \, \Pi_{0,m},
\end{equation}
where $[\cdot , \cdot ]$ is the commutator. Now note that $[\xi_{i,1},g]$ and $[f,\xi_{1,i}]$ are operators of multiplication by $\xi_{i,1}(g)$ and $- \xi_{1,i}(f)$, respectively. The case of general $n$ differs only in the amount of required bookkeeping. 

We have shown that
\begin{equation}
    \Pi_{0,m}\, f \,q_{m,N-1}(D_m) \,g\, \Pi_{0,m} = \sum_{n=0}^{N-1} (-1)^n \left[ \prod_{i=1}^n \frac{1}{\mu_{m,i}} \right] \Pi_{0,m} \, (f \star_n g) \, \Pi_{0,m}.
\end{equation}

We have to bound the remainder term in \eqref{eq:Toeplitz_product_dec}. The inequalities in Lemma \ref{lem:interpolating}, together with $\mathbbm 1_{\{ 0 \}}(\mu_{m,i}) - q_{m,N-1}(\mu_{m,i}) =0$ for every $i \leq N-1$, imply that
\begin{equation}
 0 \leq  (-1)^N [\mathbbm 1_{\{ 0 \}} (D_m) - q_{m,N-1}(D_m)]  \leq \prod_{i=1}^{N} \frac{D_m - \mu_{m,i-1}}{\mu_{m,i}},
\end{equation}
and hence for $g= \overline f$:
\begin{align}
    0 &\leq (-1)^N \Pi_{0,m} \, f [  \mathbbm 1_{\{ 0 \}} (D_m) - q_{m,N-1}(D_m) ] \overline f \, \Pi_{0,m} \\
    &\leq \Pi_{0,m} \, f \left[ \prod_{i=1}^{N} \frac{D_m - \mu_{m,i-1}}{\mu_{m,i}} \right] \overline f \, \Pi_{0,m} . \nonumber
\end{align}
The right hand side can now be manipulated using the same steps as for the main terms. This establishes 1). 

To verify 2), we set
\begin{subequations}
\begin{align}
    A &= \sqrt{\frac{N! (N+m+d-1)!}{(m+d-1)!}} \sqrt{(-1)^N [ \mathbbm 1_{\{ 0 \}}(D_m) -q_{m,N-1}(D_m)]} \, \overline f \, \Pi_{0,m}, \\
    B &= \sqrt{\frac{N! (N+m+d-1)!}{(m+d-1)!}} \sqrt{(-1)^N [ \mathbbm 1_{\{ 0 \}}(D_m) -q_{m,N-1}(D_m)]} \, g \, \Pi_{0,m}.
\end{align}
\end{subequations}
Then we can bound $A^*A$ and $B^* B$ as in the proof of 1). Point 3) follows from H\"older's inequality for operators. 
\end{proof}

We will call $f \in L^2(\CP^{d-1})$ regular if the linear span of the $\SU(d)$-orbit of $f$ is finite-dimensional. We~denote the algebra of regular functions on $\CP^{d-1}$ by $\mathcal A$. It is easy to see that:
\begin{equation}
    \mathcal A = \bigcup_{n=0}^\infty \bigoplus_{j=0}^n \irrep_{j,j}. 
\end{equation}
For every two $f,g \in \mathcal A$, we define their star product $f \star g$ as in \eqref{eq:star_formal}. The sum has only finitely many nonzero terms by 3. in \Cref{prop:starn_properties}, and every $f \star_n g$ is a regular function. Therefore, $f \star g$ defines a regular function for every complex number $m$ for which the coefficients $\frac{(m+d-1)!}{(n+m+d-1)!}$ are finite, i.e.\ for $m \not \in \{ -d ,-d-1,\dots \}$. The identity \eqref{eq:Op_product_formal} is satisfied by \Cref{thm:2} with sufficiently large $N$ (dependent on $f$ and $g$).

\begin{proposition}
    $\star$ is associative on $\mathcal A$ for every complex $m \not \in \{ -d , -d+1,\dots \}$. 
\end{proposition}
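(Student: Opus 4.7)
The plan is to first establish associativity for sufficiently large non-negative integer values of $m$, using associativity of operator composition and injectivity of $\Op_m$ on appropriate finite-dimensional subspaces, and then extend to the general case by a rational continuation argument.

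Fix $f,g,h \in \mathcal A$. Since $\mathcal A = \bigcup_{M=0}^\infty \bigoplus_{j=0}^M \irrep_{j,j}$ is filtered by finite-dimensional $\SU(d)$-invariant subspaces, and since each $\star_n$ is an $\SU(d)$-invariant bi-differential operator, there exists a finite $M$ such that the full expressions $(f \star g) \star h$ and $f \star (g \star h)$, for any admissible $m$, take values in $V := \bigoplus_{j=0}^M \irrep_{j,j} \subset \mathcal A$. Moreover, by the termination property from 3) of \Cref{prop:starn_properties}, each of these two expressions is a finite sum, and its coefficients are rational functions of $m$ whose poles are contained in $\{-d,-d-1,\dots\}$, because each factor $\tfrac{(m+d-1)!}{(n+m+d-1)!} = \prod_{i=0}^{n-1} (m+d+i)^{-1}$ has only such poles.

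Next I would fix an integer $m \geq M$. By the invariance properties and Schur's lemma, $\Op_m$ acts as a scalar multiple of an equivariant isomorphism on each $\irrep_{j,j}$ with $j \leq m$. Concretely, by \Cref{prop:L2_proj_space_decomposition_and_casimir} and the formula \eqref{eq:Op_Hus_diagonalization} for $\Op_m \Hus_m$, the scalar is nonzero for every $j \leq m$. Hence $\Op_m$ restricted to $V$ is injective whenever $m \geq M$. Applying \Cref{thm:2} with $N$ large enough that all the higher $\star_n$ products involving $f$, $g$, $h$, $f \star g$, $g \star h$ vanish on $V$, we obtain $\Op_m[f \star g] = \Op_m[f] \Op_m[g]$ and similarly for $g \star h$, so
\begin{equation}
\Op_m[(f \star g) \star h] = \Op_m[f]\Op_m[g]\Op_m[h] = \Op_m[f \star (g \star h)].
\end{equation}
Injectivity of $\Op_m$ on $V$ then yields $(f \star g) \star h = f \star (g \star h)$ for every integer $m \geq M$.

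Finally, both sides of this equation, when viewed as functions of $m$ with values in the finite-dimensional space $V$, are rational in $m$ with poles only in $\{-d,-d-1,\dots\}$. Having established their equality at infinitely many points (every integer $m \geq M$), they must coincide as rational functions, which proves associativity for every complex $m \not\in \{-d,-d-1,\dots\}$. I expect the only delicate point to be verifying the injectivity claim for $\Op_m$ on $V$ for large integer $m$, but this follows cleanly from equivariance together with the explicit nonvanishing scalars appearing in \eqref{eq:Op_Hus_diagonalization}; the rest of the argument is essentially formal.
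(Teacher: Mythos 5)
Your proof is correct and takes essentially the same route as the paper: establish $\Op_m[(f\star g)\star h]=\Op_m[f\star(g\star h)]$ for large integer $m$ via associativity of operator composition and vanishing of high-order $\star_n$ terms, then use injectivity of $\Op_m$ on the relevant finite-dimensional invariant subspace to deduce equality of functions, and finally extend by rationality in $m$. The only difference is that you spell out the injectivity step explicitly via Schur's lemma and the nonvanishing scalars in \eqref{eq:Op_Hus_diagonalization}, whereas the paper asserts linear independence of the $\Op_m[\varphi_i]$ for large $m$ without elaboration; your added detail is a welcome clarification rather than a genuinely different approach.
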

\begin{proof}
Let $f,g,h \in \mathcal A$. Evaluating $(f \star g) \star h$ and $f \star (g \star h)$ using the definition \eqref{eq:star_formal} we find linearly independent regular functions $\varphi_1,\dots,\varphi_k $ and rational coefficient functions $c_1(m),\dots,c_k(m)$ such that
\begin{equation}
    (f \star g ) \star h - f \star (g \star h) = \sum_{i=1}^k c_i (m) \varphi_i.
    \label{eq:star_associator}
\end{equation}
Since $c_i(m)$ are rational, we will be able to conclude that they all vanish by verifying that $(f \star g ) \star h - f \star (g \star h)$ vanishes for infinitely many values of $m$. 

By~construction, for~every $m \in \mathbb N_0$ we have
\begin{equation}
    \Op_m[(f \star g) \star h] = \Op_m[f \star g ] \Op_m[h]= (\Op_m[f]\Op_m[g]) \Op_m[h].
\end{equation}
Repeating similar steps for $\Op_m[f \star (g \star h)]$ and using associativity of composition of linear operators we find
\begin{equation}
    \Op_m[(f \star g) \star h] =\Op_m[f \star (g \star h)].
    \label{eq:opm_associative}
\end{equation}
For all sufficiently large integers $m$ the operators $\Op_m[\varphi_i]$ are linearly independent, and~for such $m$ we have $c_i(m)=0$ by comparing \eqref{eq:star_associator} with \eqref{eq:opm_associative}.
\end{proof}

\appendix
\renewcommand{\thesection}{\Alph{section}}
\setcounter{definition}{0}
\renewcommand{\thedefinition}{\Alph{section}\arabic{definition}}
\setcounter{equation}{0}
\renewcommand{\theequation}{\Alph{section}\arabic{equation}}

\section{Geometric structures on complex projective space} \label{app:xi_appendix}

In this Appendix we describe geometric structures on $\CP^{d-1}$ in terms of left-invariant vector fields on $\SU(d)$. The presentation builds on Subsection~\ref{sec:homog}, and we assume the reader is familiar with the definitions and notations introduced there. In the discussion of affine connections we follow the ideas of Nomizu \cite{Nomizu-connections}.

For $g \in \SU(d)$, we decompose the tangent space of $\SU(d)$ at $g$ as
\begin{equation}
    \mathrm{T}_g \SU(d) = V_g \oplus H_g,
\end{equation}
where the \emph{vertical subspace} $V_g$ is the tangent space to the coset $g \PStab(e_1)$, and the \emph{horizontal subspace} $H_g$ is the orthogonal complement of $V_g$ with respect to the bi-invariant Riemannian metric on $\SU(d)$. The complexification of $H_g$ has a basis consisting of the values at $g$ of the vector fields $\{ \xi_{1,i}, \xi_{i,1} \}_{i=2}^d$.

If $X$ is a vector field on $\SU(d)$ invariant under the right action of $\PStab(e_1)$, the pushforward of $X$ by $\pi : \SU(d) \to \CP^{d-1}$ is well-defined. The pushforward map $d \pi$ annihilates vertical vector fields, and establishes a $C^\infty(\CP^{d-1})$-linear bijection between horizontal, $\PStab(e_1)$-invariant vector fields on $\SU(d)$ and vector fields on $\CP^{d-1}$.

The K\"ahler structure of $\CP^{d-1}$ can be described neatly using the above identification: for every $g \in G$, the vectors $\left. d \pi \right|_g \xi_{i,1}$ form an orthonormal basis of the holomorphic tangent space of $\CP^{d-1}$ at $\pi(g)$. In particular, the Riemannian scalar product of gradients of two functions $f,g$ on $\CP^{d-1}$ is given by:
\begin{equation}
\nabla f \cdot \nabla g =     -2 \sum_{i=2}^d \left( \xi_{i,1}(f) \xi_{1,i}(g) + \xi_{1,i}(f) \xi_{i,1}(g)  \right),
\end{equation}
and their Poisson bracket is given by:
\begin{equation}
\{ f, g \} = 2i \sum_{i=2}^d \left( \xi_{i,1}(f) \xi_{1,i}(g) - \xi_{1,i}(f) \xi_{i,1}(g)  \right).
\end{equation}

If $X,Y$ are horizontal, $\PStab(e_1)$-invariant vector fields, then
\begin{equation}
    [d \pi (X), d \pi (Y)] = d \pi [X,Y]. 
    \label{eq:dpi_commutator}
\end{equation}
In general $[X,Y]$ may have a vertical component, which is annihilated by $d \pi$ in \eqref{eq:dpi_commutator}. Therefore, the Lie bracket of vector fields on $\CP^{d-1}$ is obtained from the Lie bracket of horizontal, $\PStab(e_1)$-invariant vector fields by projecting onto the horizontal component.

Let $\nabla^{(-)}$ be the Cartan's (-)-connection on $\SU(d)$; that is, the unique affine connection for which all left-invariant vector fields are parallel. It satisfies two key properties:
\begin{itemize}
    \item If $X,Y$ are vector fields with $Y$ horizontal, then $\nabla^{(-)}_X Y$ is also horizontal,
    \item If $X,Y$ are $\PStab(e_1)$-invariant, then so is $\nabla_X^{(-)} Y$.
\end{itemize}
Consequently, if $ X', Y'$ are vector fields on $\CP^{d-1}$, identified with their horizontal, $\PStab(e_1)$-invariant lifts $X,Y$ on $\SU(d)$, then $\nabla_X^{(-)} Y$ is again horizontal and $\PStab(e_1)$-invariant. We denote the corresponding vector field on $\CP^{d-1}$ by $\nabla_{X'}Y'$. This construction yields a metric-compatible affine connection $\nabla$ on $\CP^{d-1}$.

We will now show that $\nabla$ is torsion-free and therefore coincides with the Levi-Civita connection. For vector fields $X,Y$ and $X',Y'$ as above, we have
\begin{equation}
    \nabla_{X'} Y' - \nabla_{Y'} X' - [X',Y'] = d \pi \left( \nabla^{(-)}_X Y - \nabla_Y^{(-)} X - [X,Y] \right).  
\end{equation}
To evaluate this, we expand $X,Y$ in the basis of left-invariant horizontal vector fields $\xi_{i,1}$ and $\xi_{1,i}$. The terms involving derivatives of the coefficient functions cancel, leaving only expressions proportional to $d \pi [\xi_{i,1},\xi_{1,j}]$. These vanish because the commutators $[\xi_{i,1},\xi_{1,j}]$ are vertical.

Next, we calculate the curvature of $\nabla$. We consider vector fields $X',Y',Z'$ on $\CP^{d-1}$ with lifts $X,Y,Z$. Denoting the horizontal part of $[X,Y]$ by $[X,Y]_H$,
\begin{align}
\mathrm{curv}(X',Y')Z' &=  (  \nabla_{X'} \nabla_{Y'} - \nabla_{Y'} \nabla_{X'} - \nabla_{[X',Y']}) Z'  \\
&= d \pi \left( \nabla_X^{(-)} \nabla_Y^{(-)} - \nabla_Y^{(-)} \nabla_X^{(-)} - \nabla_{[X,Y]_H}^{(-)} Z \right). \nonumber
\end{align}
The whole parenthesis would vanish if $[X,Y]_H$ was replaced by $[X,Y]$ because $\nabla^{(-)}$ is a flat connection. Hence:
\begin{equation}
      \mathrm{curv}(X',Y')Z' = d \pi \left( \nabla_{[X,Y]_V}^{(-)} Z \right),
\end{equation}
where $[X,Y]_V$ is the vertical part of $[X,Y]$. Let us represent this more explicitly. We~choose a basis $e_i$ of left-invariant horizontal fields\footnote{One can take $\xi_{i,1}$ and $\xi_{1,i}$; here it is convenient to use one symbol for all basis elements.} and decompose $X = \sum_i X^i e_i$, and analogously for $Y$ and $Z$. Then
\begin{equation}
    [X,Y]_V = \sum_{i,j} X^i Y^j [e_i,e_j]_V.
    \label{eq:curvature_intermediate}
\end{equation}
Terms with derivatives of the coefficients functions $X^i,Y^j$ are absent because they give rise to horizontal contributions. Next,
\begin{equation}
    \nabla_{[X,Y]_V}^{(-)} Z = \sum_{i,j,k} X^i Y^j [e_i,e_j]_V(Z^k) e_k.
\end{equation}
To get rid of derivatives acting on $Z^k$, we use the invariance of $Z$ under right translations by $\PStab(e_1)$. Since $[e_i,e_j]_V$ is left-invariant and vertical, it is among the generators of this group action, so:
\begin{equation}
0 = [[e_i,e_j]_V ,Z] = \sum_k \left( [e_i,e_j]_V (Z^k) e_k + Z^k [[e_i,e_j]_V,e_k] \right). 
\end{equation}
Hence, combining this with \eqref{eq:curvature_intermediate},
\begin{equation}
 \mathrm{curv}(X',Y')Z' = d \pi \left( - \sum_{i,j,k} X^i Y^j Z^k  [[e_i,e_j]_V,e_k] \right).
\end{equation}
We remark that the subscript $V$ is actually redundant because $[e_i,e_j]$ is vertical. The vector field in the parenthesis is horizontal and $\PStab(e_1)$-invariant. 

The curvature formulas above show directly that the curvature is a $2$-form of type $(1,1)$. Indeed, in the computation of the $(0,2)$ component we encounter commutators $[\xi_{1,i},\xi_{1,j}]_V$, which are all zero \eqref{eq:xi_commutation}.

The Levi-Civita connection is extended to tensors in the standard way, and this allows to define higher covariant derivatives by iteration. In particular, if $f$ is a function on $\CP^{d-1}$, its $k$th covariant derivative is a section $\nabla^k f$ of the $k$th power of the cotangent bundle. We denote by $\nabla^{(k,0)} f$ and $\nabla^{(0,k)} f$ its $(k,0)$ and $(0,k)$ type components. They are symmetric tensors, while $\nabla^k$ generally speaking is not if $k \geq 3$. Given two functions $f,g$, the contraction of $\nabla^{(0,k)} f$ and $\nabla^{(k,0)} g$ (defined using the $k$-fold tensor product of the Riemannian metric tensor) is
\begin{equation}
    \nabla^{(0,k)} f \cdot \nabla^{(k,0)}g = (-2)^k \sum_{i_1,\dots,i_k =2}^d \xi_{1,i_1} \cdots \xi_{1,i_k}(f) \xi_{i_1,1} \cdots \xi_{i_k,1}(g) =2^k f \star_k g,
\end{equation}
where in the last equality we used notation introduced in Section \ref{sec:star}.

\printbibliography
\end{document}